\definecolor{blueviolet}{rgb}{0.2, 0.2, 0.6}
\definecolor{webgreen}{rgb}{0,.5,0}
\definecolor{webbrown}{rgb}{.6,0,0}
\newcommand\RedeclareMathOperator{%
  \@ifstar{\def\rmo@s{m}\rmo@redeclare}{\def\rmo@s{o}\rmo@redeclare}%
}
\newcommand\rmo@redeclare[2]{%
  \begingroup \escapechar\m@ne\xdef\@gtempa{{\string#1}}\endgroup
  \expandafter\@ifundefined\@gtempa
     {\@latex@error{\noexpand#1undefined}\@ehc}%
     \relax
  \expandafter\rmo@declmathop\rmo@s{#1}{#2}}
\newcommand\rmo@declmathop[3]{%
  \DeclareRobustCommand{#2}{\qopname\newmcodes@#1{#3}}%
}
\newtheorem{definition}{Definition}
\newtheorem{theorem}{Theorem}
\newtheorem{lemma}{Lemma}
\newtheorem{conjecture}{Conjecture}
\newtheorem{corollary}{Corollary}
\newtheorem{alg}{Algorithm}
\newtheorem{formula}{Formula}
\newtheorem{assumption}{Assumption}
\newcommand{\paramv}{\vect{\Theta}}
\newcommand{\av}{\vect{a}}
\newcommand{\bv}{\vect{b}}
\newcommand{\xv}{\vect{x}}
\newcommand{\sv}{\vect{s}}
\newcommand{\zv}{\vect{z}}
\newcommand{\nv}{\vect{n}}
\newcommand{\mv}{\vect{m}}
\newcommand{\cA}{\mathcal{A}}
\RedeclareMathOperator*{\E}{{\mathbb{E}}}
\DeclareMathOperator*{\argmax}{arg\,max}
\DeclareMathOperator*{\argmin}{arg\,min}
\DeclareFontFamily{OMX}{MnSymbolE}{}
\DeclareSymbolFont{MnLargeSymbols}{OMX}{MnSymbolE}{m}{n}
\DeclareFontShape{OMX}{MnSymbolE}{m}{n}{
    <-6>  MnSymbolE5
   <6-7>  MnSymbolE6
   <7-8>  MnSymbolE7
   <8-9>  MnSymbolE8
   <9-10> MnSymbolE9
  <10-12> MnSymbolE10
  <12->   MnSymbolE12
}{}
\DeclareFontShape{OMX}{MnSymbolE}{b}{n}{
    <-6>  MnSymbolE-Bold5
   <6-7>  MnSymbolE-Bold6
   <7-8>  MnSymbolE-Bold7
   <8-9>  MnSymbolE-Bold8
   <9-10> MnSymbolE-Bold9
  <10-12> MnSymbolE-Bold10
  <12->   MnSymbolE-Bold12
}{}
\let\llangle\@undefined
\let\rrangle\@undefined
\DeclareMathDelimiter{\llangle}{\mathopen}%
                     {MnLargeSymbols}{'164}{MnLargeSymbols}{'164}
\DeclareMathDelimiter{\rrangle}{\mathclose}%
                     {MnLargeSymbols}{'171}{MnLargeSymbols}{'171}
\numberwithin{equation}{section}
\newcommand{\stkout}[1]{\ifmmode\text{\sout{\ensuremath{#1}}}\else\sout{#1}\fi}
\newif\ifverbose
\newcommand{\nocontentsline}[3]{}
\newcommand{\tocless}[2]{\bgroup\let\addcontentsline=\nocontentsline#1{#2}\egroup}
\newcommand{\Av}{\vect{\cA}}
\newcommand{\vect}[1]{\boldsymbol{#1}}
\newcommandx{\ltodo}[2][1=]{\todo[linecolor=red,backgroundcolor=red!25,bordercolor=red,#1]{L: #2}}
\newcommandx{\itodo}[2][1=]{\todo[linecolor=green,backgroundcolor=green!25,bordercolor=green,#1]{I: #2}}
\DeclareMathOperator{\EV}{\mathbb{E}}
\newcommand{\iid}{{i.i.d.}}
\newcommand{\cD}{\mathcal{D}}
\newcommand{\bbS}{\mathbb{S}}
\newcommand{\HamQAOA}{{\textnormal{HamQAOA}}}
\begin{document}

\title{A Quantum Approximate Optimization Algorithm \\ for Local Hamiltonian Problems}

\author{Ishaan Kannan}
\email{ishaan@alumni.caltech.edu}
\affiliation{California Institute of Technology, Pasadena, CA 91125}

\author{Robbie King}
\email{wking@caltech.edu}
\affiliation{California Institute of Technology, Pasadena, CA 91125}

\author{Leo Zhou}
\email{leoxzhou@ucla.edu}
\affiliation{California Institute of Technology, Pasadena, CA 91125}
\affiliation{University of California, Los Angeles, CA 90095}

\begin{abstract}
    Local Hamiltonian Problems (LHPs) are important problems that are computationally QMA-complete and physically relevant for many-body quantum systems. Quantum MaxCut (QMC), which equates to finding ground states of the quantum Heisenberg model, is the canonical LHP for which various algorithms have been proposed, including semidefinite programs \cite{parekh_et_al:LIPIcs.ICALP.2021.102, lee2024improvedquantummaxcut, King_2023} and variational quantum algorithms \cite{ vqcs,kagome,agm_beyond}. We propose and analyze a quantum approximation algorithm which we call the Hamiltonian Quantum Approximate Optimization Algorithm (HamQAOA), which builds on the well-known scheme~\cite{farhi2014quantum} for combinatorial optimization and is suitable for implementations on near-term hardware. We establish rigorous performance guarantees of the HamQAOA for QMC on high-girth regular graphs, and our result provides bounds on the ground energy density for quantum Heisenberg spin glasses in the infinite size limit that improve with depth. Furthermore, we develop heuristic strategies with which to efficiently obtain good HamQAOA parameters. Through numerical simulations, we show that the HamQAOA empirically outperforms prior algorithms on a wide variety of QMC instances. In particular, our results indicate that the linear-depth HamQAOA can deterministically prepare exact ground states of 1-dimensional antiferromagnetic Heisenberg spin chains described by the Bethe ansatz, in contrast to the exponential depths required in previous protocols for preparing Bethe states \cite{Van_Dyke_2021, Li_2022, Sopena_2022, raveh2024deterministicbethestatepreparation}. 
    
\end{abstract}
\nocite{*}
\maketitle
\singlespacing
\vspace{-2em}
\fontsize{9}{9.5}\selectfont
{\tableofcontents}
\normalsize

\vspace{-1em}
\section{Introduction}
Many problems in computer science ranging from combinatorial optimization, clustering, constraint satisfication, and beyond are studied through the lens of approximation algorithms due to complexity-theoretic restrictions on finding their exact solutions. The approximability of such problems, such as the ubiquitous MaxCut problem, is well-understood \cite{khot2004, goemans_williamson}; however, quantum physics brings us a plethora of computational problems that are even more difficult for classical computers, and whose approximability by quantum computers remains elusive. The canonical family of such problems are called Local Hamiltonian Problems (LHPs), in which one must find the ground state energy of a Hamiltonian composed of terms acting only on a few sites of a many-body quantum state \cite{kempe2005complexity}. LHPs are studied as important models of emergent quantum behavior, such as high-temperature superconductivity \cite{Zhanghtc, Hirschhtc, andersonhtc}, quantum spin liquids \cite{Kitaev_2006, Kitaev_2003}, and a wide variety of other theories spanning condensed matter physics, quantum error-correction, quantum chemistry, quantum field theories, and gravity \cite{Kitaev_2006, Dennis_2002, Almheiri_2015,witten2003chernsimonsgaugetheorystring, Rangamani_2017}. A frequently used benchmark task for approximation algorithms is a particular LHP with only 2-local terms known as Quantum MaxCut (QMC), a natural quantization of classical MaxCut. Solving QMC equates to finding the ground state of the antiferromagnetic (AFM) quantum Heisenberg model, a fundamental model of quantum magnetism.

Most classical approaches to solving QMC have focused on finding product-state approximations to its ground state. However, it is known that product states can only achieve an approximation ratio (the ratio between the energies of the output state and the true ground state) of 0.5 in the worst case due to the often highly-entangled nature of these quantum ground states \cite{parekh_prod_states}. One family of algorithms which output entangled quantum states round the solutions of classical semidefinite programs \cite{parekh_et_al:LIPIcs.ICALP.2021.102,lee2024improvedquantummaxcut, King_2023}, achieving worst-case approximation ratios of 0.595, 0.582 and 0.533 respectively. Moreover, \cite{agm_beyond} proposes a variational algorithm finding a shallow quantum circuit that provably outperforms any product-state solution. Despite these analytical results, however, the empirical performance of these techniques on LHPs of interest is not well-studied.

Alternatively, quantum algorithms are better suited to efficiently prepare entangled ground states. Algorithms such as quantum phase estimation \cite{kitaev1995quantummeasurementsabelianstabilizer} or thermal state preparation \cite{chen2023quantumthermalstatepreparation} are prominent approaches to ground-state preparation; however, the former requires an input state sufficiently close to the ground state to succeed, while the latter is challenging to implement at scale on near-term quantum devices. Due to these restrictions, variational quantum algorithms have been an often-utilized approach to prepare nontrivial quantum ground states, and have been tailored to specific problems such as the AFM Heisenberg model on 1d chains and the Kagome lattice \cite{kagome}, or GHZ states and 1d transverse field Ising model \cite{vqcs}, realizing either exact target states or order-1 fidelity approximations with relatively low-depth circuits. However, these tailored algorithms rely on known properties of the desired ground states and assumptions of the problem geometry that apply only to highly structured problem instances, and lack rigorous performance guarantees when applied to different LHPs at arbitrary depth.

In this work, we propose a flexible and near-term friendly variational quantum algorithm for LHPs that we call the Hamiltonian-QAOA (HamQAOA). Our algorithm is a generalization of the Quantum Approximate Optimization Algorithm (QAOA), a variational quantum approximation algorithm designed for classical combinatorial optimization problems \cite{farhi2014quantum}. The QAOA circuit consists of $p$ layers of gates, with a constant number of parametrized unitary drivers to optimize at each layer, and has inspired the well-studied variational quantum eigensolver (VQE) ansatz \cite{Peruzzo_2014} known as the Hamiltonian variational ansatz (HVA) \cite{HVA}. Since the QMC Hamiltonian is invariant under $SO(3)$ spin-rotations, any quantum circuit preserving the total angular momentum of the initial state can only produce approximations of the ground state that lie within the same angular momentum sector as the initial state, severely limiting the performance of the original QAOA and HVA 
when the angular momentum of the ground state is not known. This observation motivates our new ansatz, which utilizes an inhomogeneous rotation that breaks the rotational symmetry to overcome this limitation. (The specific differences between the HamQAOA and previous variational frameworks are discussed in Section \ref{sec:background}.) The HamQAOA thus has a similar circuit implementation to prior near-term-friendly algorithms, but can be flexibly applied to different problem Hamiltonians with improved performance.

Through theoretical analysis, we provide various performance guarantees for the HamQAOA. First, we prove that the $p=1$ HamQAOA performs at least as well as the variational algorithm from \cite{agm_beyond} on any graph, and the SDP-relaxation algorithm from \cite{King_2023} on any edge-transitive graphs. The performance of the HamQAOA monotonically improves with depth $p$, and we prove that it achieves the exact QMC ground state in the $p\rightarrow \infty$ limit on bipartite graphs (and conjecturally on general graphs). We also give average-case performance guarantees for the HamQAOA on large random regular graphs at any finite depth through an iterative formula that computes the expected energy achieved by the algorithm as a function of its parameters.
In the thermodynamic limit, our formula serves as a novel algorithmic upper bound on the ground state energy density of quantum Heisenberg spin glass models. A better understanding of our formula may yield an educated guess of the true ground energy of this model with all-to-all connectivity. This quantity is a quantum analogue to the Parisi value, generalizing the ground energy density of the classical Sherrington-Kirkpatrick (SK) spin-glass model \cite{Talagrand2006ThePF, parisi_inf} to quantum Heisenberg spin glasses. Furthermore, our formula can be used to obtain good HamQAOA parameter initializations on locally tree-like graphs. 

Empirically, we provide three heuristic strategies with which to efficiently optimize the HamQAOA's parameters and use these methods to benchmark its performance against other prior approaches on various problems. We perform this comparison on a large array of dense Erdos-Renyi graphs and sparse random 3-regular graphs, where we find that HamQAOA of depth only $2$ convincingly beats the previous best QMC-specific algorithms. In the case of periodic 1-dimensional Heisenberg chains, whose QMC ground energy density is known via the Bethe ansatz \cite{1931.Bethe.ZP.71} to be $-2\ln2\approx -1.386$ in the infinite-size limit, the HamQAOA of only depth $p=5$ achieves energy density $-1.3717$ and approaches the ground energy with excitation-energy density scaling roughly as $O(1/p)$. In contrast, a prior state-of-the-art algorithm based on semi-definite programming relaxation in Ref.~\cite{King_2023} does no better than $-1.3090$.  Moreover, at $p=2N$ (where $N$ is the system size), the HamQAOA shows empirical evidence of preparing states that have near-unity fidelity with the true ground state.
We further observe that the $p=4$ and $7$ HamQAOA successfully prepare the exact Bethe ground states (up to floating-point precision) of rings of size $N=4$ and $6$ respectively, a feat not achieved by any other variational quantum algorithm to our knowledge. 

These observations lead us to conjecture that the HamQAOA can prepare the exact Bethe ground states of 1-dimensional Heisenberg chains using a quantum circuit of $\Theta(N)$ depth.
We remark that these Bethe states are not only relevant in condensed matter physics but also encode challenging computational problems \cite{crichigno2024quantumspinchainssymmetric}, and existing circuit constructions for deterministically preparing such Bethe states require $\exp(N)$ depths \cite{Van_Dyke_2021, Li_2022, Sopena_2022, raveh2024deterministicbethestatepreparation}.
We repeat this case study of 1d ring graphs for the quantum XY model and Heisenberg XXZ model in an external field lying in a Luttinger-liquid phase of matter, demonstrating the HamQAOA's versatility for different LHPs.

\vspace{-1em}
\section{Background}\label{sec:background}
\subsection{Quantum MaxCut}\label{sec:QMC}
\begin{definition}[Quantum MaxCut]
\label{Quantum MaxCut}
Given a graph $G = (V, E)$ whose vertices are qubits, the \emph{Quantum MaxCut (QMC) problem} refers to the preparation of the quantum state corresponding to the maximum eigenvalue of the Hamiltonian 
\begin{equation}
\label{QMCHam}
    H_{\rm QMC} = \sum_{i \sim j} \frac{1}{2}(\mathbbm{1} - X_i X_j - Y_i Y_j - Z_i Z_j)
\end{equation}
where $i\sim j$ denotes $(i, j) \in E$ and $X, Y, Z$ are the standard Pauli operators.
\end{definition}
Note that QMC is the natural quantum generalization of the classical MaxCut problem, whose cost function is given by
\begin{equation}
    H_{\rm MaxCut} = \sum_{i\sim j} \frac{1}{2} (1-Z_i Z_j)
\end{equation}
which is a diagonal matrix in the standard basis, where we may treat $Z_i\in \{\pm 1\}$ as classical bit variables.

In a more physical context, the 2-local Hamiltonian $H_{\rm QMC}$ is known as the AFM Heisenberg XXX model with no transverse field. In one dimension, this problem is exactly solvable via the Bethe Ansatz \cite{1931.Bethe.ZP.71}. More generally, the study of Heisenberg model ground states is a longstanding problem of interest in condensed matter theory and quantum chemistry \cite{Suzuki_2021,Deumal2021InsightsIT, 1931.Bethe.ZP.71, Zhanghtc, Hirschhtc, andersonhtc, Kitaev_2006, Kitaev_2003}.
Furthermore, the weighted variant of the QMC problem is not only QMA-complete~\cite{CubittMontanaro2016} but also a universal family of Hamiltonians that can encode all other local Hamiltonians in the low-lying part of its spectrum~\cite{CMP2018universal, ZhouAharonov2021}.
Even approximating the unweighted $H_{\rm QMC}$ to higher than a 0.956 ratio is believed to be computationally hard \cite{hwang2022unique} assuming the Unique-Games conjecture, which ultimately limits the potential of purely classical algorithms (though none exist which saturate the known classical upper bound \cite{Briet_2014}). Thus, we turn to the quantum setting in the hopes of finding an empirically better-performing, near-term solution.

Minimizing the subtracted part of $H_{\rm QMC}$ is equivalent to finding approximate ground states of the AFM Heisenberg XXX model and manifestly maximizes the energy of the QMC Hamiltonian. Throughout the rest of this paper, we discuss energy minimization rather than maximization due to this equivalence.
 
\vspace{-2em}\subsection{Quantum Approximate Optimization Algorithm and Hamiltonian Variational Ansatz}\label{sec:QAOA_intro}

The QAOA is an algorithm proposed by \cite{farhi2014quantum} for approximately solving classical combinatorial optimization problems. It guarantees monotonically improving performance in the number of layers applied (commonly denoted by $p$), with a number of variational parameters scaling linearly with $p$.
The input to the QAOA includes a classical problem whose cost function is encoded in a classical Hamiltonian (i.e., a diagonal operator) $A$, and another ``mixing'' driver Hamiltonian $B$ whose ground state $\ket{g_B}$ is known and efficiently preparable. Then we define $2p$ unitary evolution operators $e^{-i\alpha_j A}, e^{-i\beta_j B}$ for $j \in \{1, ..., p\}$, with all $\alpha_j, \beta_j \in \mathbb{R}$. The QAOA prepares the following state 
\begin{equation}
   \ket{\psi_{\vect\alpha,\vect\beta}} = e^{-i \beta_p B}e^{-i \alpha_p A}...e^{i \beta_1 B}e^{-i \alpha_1 A}\ket{g_B}
\end{equation}
with the output energy (i.e., the cost function value) given by $\braket{\psi_{\vect\alpha,\vect\beta}}{A|\psi_{\vect\alpha,\vect\beta}}$. The variational parameters are tuned to minimize this energy.

The QAOA is attractive partly due to an asymptotic guarantee that it can find a ground state of the classical Hamilonitan $A$ as $p\to\infty$. To see this, we use the QAOA to simulate a discretization of the Quantum Adiabatic Algorithm (QAA) in which a system is initialized in the known ground state of a simple Hamiltonian $B$ and time evolved to reach the unknown ground state of a complex Hamiltonian $A$ \cite{farhi2000quantumcomputationadiabaticevolution}. This is done by evolving the system over a time $T$ with a time-dependent $H(t) = (t/T) A + (1-t/T) B$, and converges to the exact ground state after sufficiently large $T$ if and only if there is a nonzero spectral gap between the smallest two eigenvalues of $H(t)$ for all $t \in [0, T)$, a guarantee due to the adiabatic theorem. As a concrete example, one can take the classical MaxCut objective Hamiltonian $A=H_{\rm MaxCut} = \sum_{j\sim k} (1-Z_j Z_k)/2$ and mixing Hamiltonian $B  = \sum_i X_i$ (both of which can be readily implemented on most quantum hardware platforms), and apply the Perron-Frobenius theorem to demonstrate that $H(t)$ has a gapped spectrum for all $t\in [0, T)$, yielding the guarantee for the QAA to prepare a ground state of $A$ as $T\to\infty$. A more careful argument, e.g., Theorem 12 of \cite{binkowski2023elementary}, demonstrates that the QAOA with drivers $A$ and $B$ can guarantee the same convergence in the $p\rightarrow \infty$ limit when $H(t)$ is gapped. The argument relies on the fact that as $p$ grows, the QAOA circuit can better simulate a Trotter approximation of the evolution under $H(t)$, and converges to the adiabatic algorithm in the $p\to\infty$ limit. In this work, we use this technique to prove convergence of the HamQAOA for QMC at large depth on bipartite graphs (see Theorem \ref{Thm:bipartite_guarantee} in Section~\ref{sec:guarantees}). Even at depth $p$\,=\,1, the QAOA and HamQAOA are not efficiently classically simulable~\cite{farhi2016supremacy} and exhibits nontrivial performance guarantees~\cite{farhi2014quantum}, making it an appealing framework for finding quantum advantage for LHPs. 

The HVA \cite{HVA} has a similar setup to the QAOA. To target the ground state of a Hamiltonian $H$, one first decomposes $H$ into a sum of $q$ terms, $H = \sum_{j=1}^q H_j$, then chooses a depth $p$ and an easily-prepared initial state $\ket{\psi_0}$. Given these choices, the HVA state is 
\begin{equation}
    \ket{\psi_{\vect\theta}} = \left(\prod_{i=1}^p e^{-i\theta_{i, q}H_q}\hdots e^{-i\theta_{i, 1}H_1}\right)\ket{\psi_0}
\end{equation}
where the $p\cdot q$ variational parameters $\theta_{i,j}$ are chosen to minimize the energy $\braket{\psi_{\vect\theta}}{H|\psi_{\vect\theta}}$. A key difference between the HVA and QAOA is that the QAOA often uses drivers that are not terms in the Hamiltonian. Leveraging this crucial distinction allows our HamQAOA to make significant improvements over other VQE frameworks for LHPs.

\section{Hamiltonian Quantum Approximate Optimization Algorithm}

\begin{alg}[General Hamiltonian QAOA]
\label{GHamQAOA}
    Given an instance of any $2$-local Hamiltonian 
    \begin{equation}
        H = \sum_{i\sim j} J_{ij}\sigma_i\sigma_j, \ J_{ij}\in \mathbb{R}, \ \sigma_i, \sigma_j \in \{I, X, Y, Z\}
    \end{equation} 
    with interaction graph $G = (V, E)$, the \emph{General Hamiltonian QAOA} is defined as follows. Select collections of unit vectors $\nv = \{\nv_v\}_{v\in V}, \mv = \{\mv_v\}_{v\in V}$ where each $\nv_v,\mv_v$ lying on the 3-dimensional unit sphere, and a positive integer $p$. Define four driver Hamiltonians:
    \begin{equation}
    \begin{split}
    A = \sum_{u \sim v} Z_u Z_v, \qquad
    B = \sum_v X_v, \qquad
    C = \sum_v Z_v, \qquad
    D = \sum_v \nv_v \cdot \vect{\sigma}_v,
    \end{split}
    \end{equation}
    where $(u, v) \in E$ and each operator $X_v, Y_v, Z_v$ is a local Pauli acting nontrivially only on the qubit at vertex $v$, with $\vect{\sigma}_v = (X_v, Y_v, Z_v)$.
    Let $\ket{\mv_v}$ be the single-qubit state on the Bloch sphere corresponding to the unit vector $\mv_v$. Given parameters $\paramv = (\vect{\alpha}, \vect{\beta}, \vect{\gamma}, \vect{\delta})$, where each $\vect{\alpha}, \vect{\beta}, \vect{\gamma}, \vect{\delta} \in (-\pi/2, \pi/2]^p$, we define the \emph{HamQAOA state (HQS)} as
    
    \begin{equation}  \label{eq:hamqaoa_circuit}
        \ket{HQS_p\left(\paramv, G, \nv, \mv\right)} = e^{-i\delta_p D} e^{-i\gamma_p C} e^{-i\beta_p B} e^{-i\alpha_p A} \dots e^{-i\delta_1 D} e^{-i\gamma_1 C} e^{-i\beta_1 B} e^{-i\alpha_1 A} \bigg( \bigotimes_v |\mv_v\rangle \bigg).
    \end{equation}
    With fixed $G, \nv$ and $\mv$, let $\ket{\paramv} = \ket{HQS_p\left(\paramv, G, \nv, \mv\right)}$ . Then the algorithm output is 
    \begin{equation}
        {\HamQAOA}_p(G, \nv, \mv) = \ket{\paramv^*}, \  \text{where} \ \   \paramv^* = \argmin_{\paramv} \langle\paramv|H|\paramv\rangle
    \end{equation} 
    In practice, the global minimum $\paramv^*$ may be difficult to find; one can use the optimization heuristics defined in Section \ref{subsec:prac_use} to find a local minimum that provides a good approximation.
\end{alg}

The intuition behind this generalized choice of drivers is as follows. One can view a single layer of the HamQAOA as the composition of commuting interaction terms and an arbitrary single-qubit rotation on each qubit, specified by the choices of vectors $\nv_v$. Furthermore, the HamQAOA can Trotter-approximate $e^{-i \theta H_{\rm QMC}}$ for any $\theta$, allowing us to recover asymptotic performance guarantees via reduction to the QAA in the same vein as the original QAOA. 

Crucially, a central challenge in solving QMC is the problem's rotational symmetry. Observe that the QMC Hamiltonian \eqref{QMCHam} is invariant under global spin-rotation of all qubits,  which implies that the ground state of QMC has a fixed total angular momentum. If the QMC ground state lies in a different angular momentum sector of the Hilbert space than the initial state, then any algorithm which acts homogeneously on each qubit at every circuit layer (such as the original QAOA) will preserve the total angular momentum of the initial state, and thus can never find the QMC state. This would make an asymptotic convergence guarantee to the ground state impossible.

Prior variational approaches have been developed for problems where the angular momentum sector of the ground state is known---such as the 1d periodic chain and Kagome lattice\cite{vqcs, kagome}---so that the initial state could be prepared accordingly in the correct sector. In contrast, the $D$ driver in our algorithm can be used to break this problematic symmetry by choosing the vectors $\nv_v$ to be spatially varying, so that $D$ does not commute with the total angular momentum operator. Our HamQAOA ansatz thus allows the algorithm to access any angular momentum sector of the Hilbert space to tackle arbitrary problem geometries and Hamiltonians.

The freedom to choose $\nv_v$ and $\mv_v$, each of which can be described by two polar angles, yields a large total of $4|V|$ tunable parameters in selecting the algorithm ansatz. However, one of the key strengths of the ``vanilla'' QAOA~\cite{farhi2014quantum} was having a number of variational parameters per layer that does not scale with system size. This raises the question of how to find the best, simplified choice of HamQAOA ansatz for the QMC problem. We find that restricting each $\nv_v$ to the $\hat{\xv}$-axis and fixing every $\mv_v = \nv_v$ yield both strong empirical results and a connection to the classical MaxCut problem. The choice of setting $\mv_v = \nv_v$ also recovers the large-$p$ asymptotic reduction to the adiabatic algorithm from the vanilla QAOA, since the initial state then coincides with the ground state of the $D$ driver. This brings us to the simplified HamQAOA algorithm: 
\begin{alg}[Simplified Hamiltonian QAOA]
\label{HamQAOA}
 Given an LHP with interaction graph $G = (V, E)$, a bitstring $\vect{s} \in \{\pm 1\}^{|V|}$ and $p\in\mathbb{N}$, let $\nv'_v = \mv'_v = (s_v, 0, 0)$. Then the \emph{simplified HamQAOA} is defined as
 \begin{equation}
     \HamQAOA_p(G, \vect{s}) \coloneqq \HamQAOA_p(G, \nv', \mv')
 \end{equation}
 where $\nv', \mv'$ are the collections of all $\nv_v', \mv_v'$ as in Algorithm \ref{GHamQAOA}.
\end{alg}
In short, the simplified Hamiltonian QAOA retains the same drivers $A, B, C$ from Algorithm \ref{HamQAOA}, but with 
\begin{equation}
     D = \sum_v s_v X_v
\end{equation}

For most of this paper, we will focus on applying Algorithm \ref{HamQAOA}, but we compare its performance to the more general Algorithm~\ref{GHamQAOA} in Appendix \ref{sec:gen_ansatze}. There, we find that optimizing the general ansatz over choice of vectors $\nv_v$ and $\mv_v$ is computationally difficult for QMC since the number of parameters increases with system size, and that due to the hardness of finding good vectors the simplified HamQAOA performs just as well for practical purposes. 
Moreover, we find that when fixing $\nv = \mv$ and computing the average algorithm performance where these vectors are sampled either (a) uniformly from the unit sphere, as the general ansatz allows, versus (b) equiprobably on the $\pm \hat{\xv}$-axis, as in the simplified algorithm, the simplified HamQAOA wins out. We also find that the simple Algorithm \ref{HamQAOA} is highly effective for LHPs beyond the Heseinberg XXX model, as detailed in Section \ref{sec:otherlhp}. Whether the general ansatz can be tailored differently for specific LHPs is an interesting consideration for practical many-body problems.

We remark that the generalization of our HamQAOA algorithm to Hamiltonians of an arbitrary locality $k$ is immediate: we simply replace driver $A$ with a sum over $k$-local Z operators matching the connectivity of the target Hamiltonian.

\subsection{Practical Usage of HamQAOA}\label{subsec:prac_use}

To approximately solve real QMC instances using the simplified HamQAOA, one must choose the values $s_v$ and a strategy with which to find good parameters $\paramv$.

\begin{itemize}
    \item \textbf{Choosing $s_v$}: The values $s_v\in \{\pm1\}$ should coincide as closely as possible with a bitstring yieiding the classical MaxCut of the interaction graph (see Appendix \ref{sec:D_Driver_choice}).
    \begin{itemize}
        \item When the average vertex degree is $o(N)$ for system size $N$ and the exact MaxCut is not known, the $s_v$ should be chosen to give the largest cut found by an approximation algorithm, e.g., \cite{goemans_williamson}.
        \item When the average degree is $\Theta(N)$, then $s_v$ can be chosen uniformly at random in the large $N$ limit. This is because for large vertex degrees, random cuts become asymptotically close to the true MaxCut in the large-size limit.
    \end{itemize}
    \item \textbf{Parameter Optimization Strategies:}
    \begin{enumerate}
        \item Greedy Iterative (GI) strategy (heuristic simplification of \cite{Sack_2023}): First optimize at $p=1$, sampling initial parameters from $[-\frac{\pi}{2}, \frac{\pi}{2})^4$ 
        uniformly at random many times. Take the set of optimized parameters which yield the best energy and append a layer of 0's for the second layer of the HamQAOA. Iterate this procedure until depth $p$. Further adjustments to this strategy are detailed in Appendix \ref{sec:GI}. The GI strategy is particularly useful when the interaction graph is far from regular (where the following strategy outperforms) and $p$ is not so large that this greedy method gets stuck in bad local minima.
        \item Iterative Formula Parameter strategy: For a graph with average degree $d$ and HamQAOA of depth $p$, use the best parameters obtained from the finite-degree formula \ref{alg:fin_iter} with the same $d$ and $p$ as the parameter initialization. This strategy is useful when $p$ is small enough to numerically optimize the formula  and $d$ is large enough that random cuts are close to the MaxCut, and performs particularly well when the graph is nearly regular.
        \item Random Initialization strategy: For HamQAOA depth $p$, sample all parameters independently and uniformly at random from $[-\frac{\pi}{2}, \frac{\pi}{2})$ and optimize many times. This strategy is useful when the HamQAOA is highly overparametrized, such as when $p \gg$ the diameter of the graph, because we find in this regime that the variational landscape is populated with many easily-found local minima close to the global minimum. Alternatively, if a large $p$ is required to solve a problem, where the GI strategy gets stuck in bad minima and the iteration is intractable, one can use random initialization.
    \end{enumerate}
    
\end{itemize}

\subsection{Performance Guarantees}
\label{sec:guarantees}

Using this construction, we prove the following asymptotic performance guarantee:
\begin{theorem}\label{Thm:bipartite_guarantee}
    Given $H_{\rm QMC}$ on bipartite graph $G = (V, E)$, select $\vect{s} \in \{\pm 1\}^{|V|}$ to coincide with the exact classical MaxCut of G. For $p\geq 1$ define $\ket{\paramv_p^*} = \HamQAOA_p(G, \vect{s})$ for the objective Hamiltonian $H=-H_{\rm QMC}$. Then
    \begin{align}
        \lim_{p\to\infty} \braket{\paramv_p^*}{H_{\rm QMC} |\paramv_p^*} = \lambda_{\max} (H_{\rm QMC})
    \end{align}
    where $\lambda_{\max}(H_{\rm QMC})$ denotes the largest eigenvalue of the QMC Hamiltonian.
   
\end{theorem}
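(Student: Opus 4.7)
The plan is to show that the depth-$p$ HamQAOA circuit can Trotter-simulate a quantum adiabatic algorithm (QAA) along a gapped interpolation from the driver $-D$ (whose ground state is exactly the input product state $\bigotimes_v |\mv_v\rangle$) to the target $-H_{\rm QMC}$. Since $\ket{\paramv_p^*}$ is a variational minimizer at depth $p$, its energy is at least as low as that achieved by any specific choice of HamQAOA parameters, so it suffices to show that the Trotterized QAA prepares the top eigenstate of $H_{\rm QMC}$ to arbitrary accuracy as $p\to\infty$. This is the same reduction-to-QAA strategy used for the vanilla QAOA on classical problems (cf.\ Theorem 12 of \cite{binkowski2023elementary}).

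The key step is to show that the interpolation $H(t) = (1-s(t))(-D) + s(t)(-H_{\rm QMC})$, for a smooth schedule $s:[0,T]\to[0,1]$, has a uniform spectral gap $\Delta_{\min}>0$; this is where both the bipartiteness of $G$ and the MaxCut property of $\vect s$ enter. I would apply the Marshall--Peierls sign transformation $U = \prod_{v:\,s_v=-1} Z_v$, which maps $|\mv_v\rangle \mapsto |+\rangle_v$ for every $v$ and sends $D \mapsto B = \sum_v X_v$. Because $G$ is bipartite and $\vect s$ achieves the exact MaxCut, every edge crosses the bipartition, so $U$ anticommutes with exactly one of $\{X_i, X_j\}$ (and of $\{Y_i,Y_j\}$) on each edge, giving $UX_iX_jU^\dagger = -X_iX_j$, $UY_iY_jU^\dagger = -Y_iY_j$, while $UZ_iZ_jU^\dagger = Z_iZ_j$. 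A direct computation then yields
\[
-UH(t)U^\dagger = (1-s(t))\,B + s(t)\sum_{i\sim j}\tfrac{1}{2}\bigl(\mathbbm{1} + X_iX_j + Y_iY_j - Z_iZ_j\bigr),
\]
whose matrix elements in the computational basis are all non-negative: the $B$ term contributes entries $+1$ between single-spin-flip neighbors, $(X_iX_j+Y_iY_j)/2$ contributes entries $+1$ swapping antiparallel pairs, and the diagonal $(\mathbbm{1}-Z_iZ_j)/2$ is non-negative. For any $s(t)\in[0,1)$ the $B$ term alone makes this non-negative matrix irreducible on the full Hilbert space, so by Perron--Frobenius its top eigenvalue is simple with strictly positive eigenvector; at $s(t)=1$, the Lieb--Mattis theorem (or a continuation argument from $s<1$) guarantees that the ground state of $H_{\rm QMC}$ on a connected bipartite graph is still unique, so the gap remains positive. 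Continuity of eigenvalues on the compact interval $[0,T]$ then yields a uniform lower bound $\Delta_{\min}>0$.

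Given this gap, the adiabatic theorem guarantees that Schr\"odinger evolution under $H(t)$ with $T = \Omega(\Delta_{\min}^{-2}\,\mathrm{poly}(1/\epsilon))$ prepares the ground state of $-H_{\rm QMC}$ within error $\epsilon$. It remains to argue that HamQAOA of sufficiently large depth can Trotter-approximate this evolution. The drivers $A,B,C,D$ directly implement $e^{-i\alpha\sum Z_iZ_j}$, $e^{-i\beta B}$, $e^{-i\gamma C}$, $e^{-i\delta D}$; the missing $XX$- and $YY$-evolutions needed to build $e^{-i\alpha H_{\rm QMC}}$ are obtained by conjugating $e^{-i\alpha A}$ with a uniform single-qubit rotation synthesized via an Euler decomposition from $B$- and $C$-exponentials (e.g.\ a global $R_y(\pi/2)$ sends $Z\mapsto X$, so the conjugated circuit produces $e^{-i\alpha\sum X_iX_j}$; similarly for $Y$). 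A standard product-formula bound then shows that a HamQAOA circuit of depth $p = \mathrm{poly}(T,1/\epsilon)$ approximates the adiabatic unitary in operator norm within $\epsilon$; the prepared state has $\langle H_{\rm QMC}\rangle$ within $O(\epsilon)$ of $\lambda_{\max}(H_{\rm QMC})$, and the variational minimizer $\ket{\paramv_p^*}$ can only do better. Taking $\epsilon\to 0$ as $p\to\infty$ completes the argument.

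The main obstacle is justifying the gap: the non-negativity of the entries of $-UH(t)U^\dagger$ in the computational basis uses \emph{every} edge crossing the bipartition under $\vect s$---exactly the setting of bipartite $G$ with $\vect s$ an exact MaxCut. A single monochromatic edge would leave $-X_iX_j$ and $-Y_iY_j$ terms with the wrong sign in the transformed Hamiltonian, destroying non-negativity and blocking the Perron--Frobenius step. The remaining quantitative pieces (concrete gap lower bound, Trotter error, and adiabatic error) are standard and mirror the classical QAOA-to-QAA reduction.
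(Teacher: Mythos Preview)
Your proposal is correct and follows essentially the same approach as the paper: both apply the Marshall--Peierls sign transformation (conjugation by $Z$ on one side of the bipartition) to render the interpolating Hamiltonian entrywise non-negative in the computational basis and invoke Perron--Frobenius for the gap, then show that HamQAOA can Trotter-simulate the resulting two-driver adiabatic evolution by synthesizing $XX$ and $YY$ evolutions from the $ZZ$ driver $A$ via single-qubit conjugations built out of the $B$ and $C$ drivers. Your treatment is slightly more careful at the $s=1$ endpoint (invoking Lieb--Mattis) and in phrasing the single-qubit rotation synthesis as an Euler decomposition, but the core argument is identical.
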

Intuitively, the proof of the above theorem uses the fact that the HamQAOA can be used to simulate an adiabatic evolution to the QMC eigenstate which has an avoided eigenstate crossing. While this performance guarantee is only proven for bipartite graphs (which include many physically relevant cases such as square and hexagonal lattices), the rotational symmetry-breaking of our ansatz makes eigenstate crossings in the adiabatic evolution unlikely even on general graphs. Hence, we expect this asymptotic guarantee to hold more generally.

Furthermore, our ansatz provably encapsulates some of the best prior approaches:
\begin{lemma}\label{lemma:contains_prev}
    The $p=1$ HamQAOA performs no worse than the Anshu-Gosset-Morenz algorithm \cite{Anshu_2021} on any graph. Furthermore, the $p=1$ HamQAOA performs no worse than the algorithm from \cite{King_2023} on edge-transitive graphs, that is, graphs for which for any pair of edges $e_1, e_2$ there exists an automorphism of the graph mapping $e_1$ to $e_2$.
\end{lemma}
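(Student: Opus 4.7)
The plan is to establish both claims by exhibiting explicit parameter specializations of the $p=1$ HamQAOA ansatz that recover the states produced by the Anshu--Gosset--Morenz (AGM) algorithm and by King's SDP-rounding scheme on edge-transitive graphs. Since Algorithm~\ref{HamQAOA} outputs $\ket{\paramv^*}$ with $\paramv^*=\argmin_{\paramv}\langle\paramv|H|\paramv\rangle$, any state realized at a specific $\paramv$ in the feasible region $(-\pi/2,\pi/2]^4$ upper-bounds the optimized energy, so embedding each prior algorithm's output into the $p=1$ ansatz suffices to conclude variational dominance.

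For the AGM part, I would first recall that their construction starts from the product state $\bigotimes_v |s_v\rangle_x$ aligned with a classical MaxCut bitstring $\vect{s}$, applies a single scalar-angle entangling step of the form $e^{-i\alpha \sum_{u\sim v} Z_u Z_v}$, and (in the beyond-product-states refinement) composes it with uniform single-qubit rotations chosen from a small variational family. This exactly matches the $p=1$ simplified HamQAOA state of Algorithm~\ref{HamQAOA} with driver $D=\sum_v s_v X_v$ and initial state $\bigotimes_v|s_v\rangle_x$, taking $(\alpha,\beta,\gamma,\delta)$ so that the unused mixers are set to $0$ and the remaining angles absorb the AGM single-qubit rotations. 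Verification reduces to a per-edge calculation: writing $\langle H_{\rm QMC}\rangle$ as a sum of expectations over two-qubit reduced states, each local contribution depends only on the same sequence of local unitaries applied to $\ket{s_u}_x\otimes\ket{s_v}_x$, which are identical in the two ansatz descriptions.

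For King's algorithm on edge-transitive graphs, the rounded state can be written as $U\bigotimes_v|\phi_v\rangle$ where $U$ is a layer of two-qubit entangling gates, one per edge, with angles determined by the SDP solution. The key reduction is a symmetrization step: averaging the SDP solution over the automorphism group $\mathrm{Aut}(G)$ preserves feasibility by convexity of the SDP feasible set and preserves the objective value since the objective is $\mathrm{Aut}(G)$-invariant on edge-transitive graphs. This yields an optimal solution with a single common per-edge angle and uniform vertex data, so King's circuit collapses to an $e^{-i\alpha A}$ layer preceded by a uniform initial product state and followed by uniform single-qubit rotations. Choosing $\vect{s}$ according to the (symmetrized) vertex data and matching the residual single-qubit rotations to $e^{-i\beta B}e^{-i\gamma C}e^{-i\delta D}$ realizes King's output as a specific point in the $p=1$ HamQAOA parameter space.

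The main obstacle is carrying out these two translations faithfully: aligning rotation-axis and initial-state conventions between AGM/King and the HamQAOA, and in particular rigorously justifying the $\mathrm{Aut}(G)$-symmetrization of King's SDP (both that the averaged solution is feasible and that the energy of the rounded state is no worse than the average of the original rounded energies, which uses linearity of the rounding on the relevant moments). Once these identifications are in place, both inequalities follow immediately by variational optimality, as the HamQAOA minimum is taken over a parameter region that by construction contains the embedded AGM and King states.
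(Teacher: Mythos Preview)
Your outline has a genuine gap in the AGM reduction. The AGM state is not ``$ZZ$ entangling on an $X$-basis product state followed by uniform single-qubit rotations'': in the form the paper uses, it is $e^{-i\theta H_{XY}}\ket{\vect{b}}$ with $\ket{\vect{b}}$ a computational-basis string and an \emph{inhomogeneous} entangling Hamiltonian $H_{XY}=\sum_{i\sim j}P(i)P(j)$ where $P(i)\in\{X,Y\}$ depends on $b_i$. A uniform $ZZ$ driver cannot be turned into this inhomogeneous driver by any \emph{uniform} post-rotation, so the vague ``remaining angles absorb the AGM single-qubit rotations'' does not go through. The paper's argument hinges on two ingredients you do not mention: first, the specific choice $\paramv=(\theta,0,0,\pi/8)$, which conjugates $A$ by the inhomogeneous $D$-rotation to produce $\tilde H_{ZY}=e^{-i\frac{\pi}{8}D}Ae^{i\frac{\pi}{8}D}$, an entangler whose Paulis now depend on $s_v$; second, the observation that $(\tilde Y,\tilde Z,X)$ and $(X,Y,Z)$ are related by a global $SU(2)$ rotation $R$, and that $H_{\rm QMC}$ is invariant under conjugation by $R^{\otimes N}$, so $\bra{\tilde\theta}H_{\rm QMC}\ket{\tilde\theta}=\bra{\theta}H_{\rm QMC}\ket{\theta}$ even though the states themselves differ. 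Without the $\pi/8$ conjugation and without invoking the $SO(3)$ symmetry of $H_{\rm QMC}$, you cannot close the identification.

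For the King part, your $\mathrm{Aut}(G)$-averaging route is more elaborate than what the paper actually does and introduces an extra obstacle you flag yourself (controlling the energy of the rounded state under averaging). The paper bypasses this entirely: on an edge-transitive graph the SDP assigns the same value to every edge by symmetry, so King's rounding degenerates \emph{exactly} to the AGM algorithm; the inequality then follows immediately from the first part. Once you repair the AGM argument as above, you can replace your symmetrization paragraph with this one-line reduction.
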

We defer the proofs of these performance guarantees to Appendix \ref{sec:proof_guarantees}.

\section{Analysis of HamQAOA on High-Girth Regular Graphs}
\label{sec:Iteration}

While analysis of the HamQAOA is difficult on general graphs, we are able to derive an exact formula for the average-case performance of our algorithm on high-girth regular graphs.
Specifically, we give an iterative formula that computes the energy per edge of the output state of the HamQAOA of any depth $p$ on regular graphs of degree $d+1$ and girth $> 2p + 1$, averaged over a uniformly random choice of signs $s_v$ in the $D$ driver of our Algorithm~\ref{HamQAOA}.
Similar to Ref.~\cite{Leo_MaxCut}, our result on the high-girth regular graphs also applies to evaluate the algorithm's performance on random regular graphs since the latter are locally treelike.
Moreover, the parameters obtained from optimizing this formula allow us to efficiently optimize other locally treelike regular graphs and more general weighted problems (see e.g.,~\cite{Sureshbabu2024parametersetting}).
    
As an additional consequence, our formula yields an upper bound to the ground energy density of the (diluted) quantum Heisenberg spin glass.
To our knowledge, there is currently no rigorous understanding of ensemble-averaged ground state energy densities for the quantum Heisenberg spin glass in the same way that we have a Parisi value for the classical Sherrington-Kirkpatrick (SK) model \cite{parisi_inf, Talagrand2006ThePF}. In the classical setting, previous works~\cite{farhi2019SK, Leo_MaxCut} has identified that the energy achieved by the QAOA on MaxCut appears to provide a tight bound on the Parisi value. Therefore, our work takes a step towards identifying an analogous ``quantum Parisi value'' for the Heisenberg model.

Note that our formalism allows the evaluation of our algorithm's average-case performance for any 2-local objective Hamiltonian composed of identical interaction terms (possibly with varying coupling strengths) across all edges on a high-girth regular interaction graph, and applies to a broad array of choices of HamQAOA drivers, distributions over the choices of single-qubit rotation axes $\nv_v$ and initial product states $\ket{\mv_v}$ whose tensor product forms the HamQAOA input state.

Given this flexibility, our iterative formula can be used to compare the performance of different HamQAOA ansatze, to benchmark the HamQAOA on other 2-local LHPs, or obtain bounds on the ground energy of different many-body physical models.

\subsection{Finite-Degree Iterative Formula}
The goal of our analysis is to compute a 2-local observable expectation value $\langle{\sigma_L \sigma_R}\rangle$ for some Pauli operators $\sigma_{i}\in\{I,X,Y,Z\}$ acting on two neighboring qubits $L,R$.
The key observation is that high-girth regular graphs are particularly amenable to analysis since the Heisenberg picture operator
\begin{equation}
    \hat{O}_{\paramv}(\sigma_L, \sigma_R) = e^{i\delta_p D} e^{i\gamma_p C} ... e^{i\beta_1 B} e^{i\alpha_1 A}\sigma_L\sigma_R e^{-i\delta_p D} e^{-i\gamma_p C} ... e^{-i\beta_1 B} e^{-i\alpha_1 A}
\end{equation}
acts nontrivially only on vertices who are $p$ or less edges far from $L$ and $R$.
Hence, for a regular graph with degree $d+1$ and girth $> 2p+1$, the lightcone of local operators seen the HamQAOA is supported on a $p$-level $d$-ary tree subgraph, with each vertex having $d$ children (see Figure \ref{fig:D_reg_tree} in Appendix~\ref{sec:proof_iter}).

The following iterative formula can be evaluated in $\tilde{O}(16^p)$ time using $O(4^p)$ memory for the General HamQAOA of depth $p$ with randomized choices of all $\mv_v, \nv_v$, and evaluates the expectation of any two-local Pauli observable on a single edge of a regular graph of girth $> 2p+1$, written as 
\begin{equation}
    \bra{\paramv}\sigma_L\sigma_R\ket{\paramv} = \bra{\vect{m}} \hat{O}(\sigma_L, \sigma_R) \ket{\vect{m}}
\end{equation}
such that the Heisenberg expectation on a single edge of the QMC Hamiltonian~\eqref{QMCHam} can be computed as
\begin{equation}
    \frac12 - \frac{1}{2}\mathbb{E}_{\mv, \nv}[\bra{\vect{m}} \hat{O}(X,X) +  \hat{O}(Y,Y) + \hat{O}(Z,Z) \ket{\vect{m}}].
\end{equation}
The expectation $\EV$ is taken over $\mv = \{\mv_v\}_{v\in V}$ and $\nv = \{\nv_v\}_{v\in V}$ within the HamQAOA $D$ driver for each of $N$ vertices. 

Notationally, we index a vector $\vect{x}$ of length $2p+2$ as
\begin{equation}
    \vect{x} = (x_1,  x_2, ... x_p, x_{p+1}, x_{-(p+1)}, x_{-p}, ..., x_{-2}, x_{-1})
    \label{idx_convention}
\end{equation}
and the element-wise product of vectors $\vect{x}$ and $\vect{y}$ as $\vect{xy}$. We define 
\begin{align}
    &\vect{\mathcal{A}} = (\alpha_1, ... \alpha_p, 0, 0, -\alpha_p, ..., \alpha_1)\\
    &E_j(\beta_j, \gamma_j, \delta_j) =\begin{cases}
         e^{i\beta_j B}e^{i\gamma_j C}e^{i\delta_j D}, \qquad &j>0\\
         E_{|j|}^\dagger,\qquad &j<0
    \end{cases}
\end{align}
and work in the computational basis, where $\ket{1}, \ket{-1}$ are up and down along the $\hat\zv$-axis respectively. Though $E_j$ is a function of HamQAOA parameters, we omit its arguments for brevity in the following equations.
\begin{formula}[Finite-Degree Iteration for General HamQAOA]\label{alg:fin_iter}

Given any HamQAOA depth $p\in\mathbb{N}$, consider a 2-local LHP on a $(d+1)$-regular interaction graph $G$ of girth $> 2p+1$. The iterative formula takes the following inputs: (1) a Pauli interaction term $\sigma_L\sigma_R$ acting on one edge of G, (2) a collection of $4p$ real-valued HamQAOA parameters denoted by $\paramv = (\vect{\alpha}, \vect{\beta}, \vect{\gamma}, \vect{\delta})$, and (3) distributions $\cD_n$ and $\cD_m$ over $\bbS^2$ from which the HamQAOA initial product state $\ket{\mv} = \bigotimes_v \ket{\mv_v}$ and the $D$-driver rotation axes on each vertex $\nv = \{\nv_v\}_{v\in V}$ are sampled. The formula then computes the expected energy of the HamQAOA output state with respect to $\sigma_L\sigma_R$.

With these inputs, define
\begin{align}\label{fin_iter_f_func}
    f^{\sigma}_{\mv_v,\nv_v}(\vect{z}) = \braket{\mv_v}{z_1}\bra{z_1}E_1\ket{z_2}...\bra{z_{p-1}}&E_{p-1}\ket{z_p}     \bra{z_p}E_p\ket{z_{p+1}}\bra{z_{p+1}}\sigma\ket{z_{-(p+1)}} \nonumber \\
    &\bra{z_{-(p+1)}}E_{-p}\ket{z_{-p}}...\bra{z_{-2}}E_{-1}\ket{z_1}\braket{z_{-1}}{\mv_v}
\end{align}
for $\vect{z} \in \{\pm 1\}^{2p+2}$. Note that the $\nv_v$-dependence is implicitly included through the $D$-driver in each unitary $E_i$. Then we define
\begin{equation} 
\label{eq:sphere_avg}
    \bar{f}^\sigma(\zv) = \mathbb{E}_{\mv_v, \nv_v} [f^{\sigma}_{\mv_v, \nv_v}(\zv)], \qquad \text{where} \quad \ 
    \mv_v \sim_\iid \cD_m \quad \text{and} \quad \nv_v \sim_\iid \cD_n.
\end{equation}
We next define $H_d^{(k)}(\vect{z}): \{\pm 1\}^{2p+2} \rightarrow \mathbb{C}$ iteratively for  $k=0,1,\ldots,p$:
\begin{align}
H_d^{(0)}(\vect{z}) &= 1, \\
H_d^{(k)}(\vect{z}) &= \left( \sum_{\substack{\vect{x} \\ x^{(p+1)}=x^{-(p+1)}}} \exp( -i \vect{\mathcal{A}} \cdot (\vect{x} \vect{z})) H_d^{(k-1)}(\vect{x}) \bar{f}^I(\vect{x}) \right)^d
\end{align}
where the sum is over bitstrings $\vect{x}\in\{\pm1\}^{2p+2}$. The formula output is
\begin{equation}
    \mathbb{E}_{\mv,\nv} [\langle\vect{m}|\hat{O}(\sigma_L, \sigma_R)|\vect{m}\rangle] = \sum_{\vect{z}_L, \vect{z}_R} \exp( -i \vect{\mathcal{A}} \cdot (\vect{z}_L \vect{z}_R)) H_d^{(p)}(\vect{z}_L) H_d^{(p)}(\vect{z}_R) \bar{f}^{\sigma_L}(\vect{z}_L) \bar{f}^{\sigma_R}(\vect{z}_R).
\end{equation}
\end{formula}

If $\mv_v = \nv_v$, we use the notation $f^{\sigma}_{\nv_v}$ for the function defined in \eqref{fin_iter_f_func}.
One possible choice of inputs to Formula \ref{alg:fin_iter} might be to set $\mv_v = \nv_v$ and $\cD_n=\cD_m=\text{Unif}(\bbS^2)$. As discussed in Appendix~\ref{sec:gen_ansatze}, the expectation in equation \eqref{eq:sphere_avg} over this distribution can be computed by taking the mean of $2p+1$ values of $f_{\nv}^\sigma(\zv)$, where $\nv$ are taken from a spherical-$(2p+1)$ design.

Another possible choice is to consider the simplified HamQAOA (Algorithm \ref{HamQAOA}) and choose each $s_v\in \{\pm 1\}$ uniformly at random. In this case, one needs only define 
\begin{equation}
    \bar{f}^\sigma(\zv) = \frac{1}{2}\left(f^{\sigma}_1(\zv) + f^{\sigma}_{-1}(\zv)\right)
\end{equation}
since the vectors $\nv_v = s_v$ are restricted to the $\pm X$ axis. In this case, the expectation over $\mv, \nv$ would collapse to an expectation over only $\sv = \{s_v\}_{v\in V}$, where each $s_v \in \{\pm 1\}$ and the initial state is $\ket{\sv} = \bigotimes_{v\in V} \ket{s_v^X}$ where each single-qubit state is an $X$-basis eigenstate. This is because the initial state in Algorithm \ref{HamQAOA} is fixed upon choosing the $D$ driver. Later in Section \ref{sec:iter_numerics}, we compute Formula~\ref{alg:fin_iter} to assess the performance of the simplified HamQAOA for QMC at different values of depth $p$ and vertex degree $d$, and plot the results in Figure~\ref{fig:iter_scaling}.

\vspace{-1em}
\subsection{The Infinite-Degree Limit}
\label{sec:inf_d_iter}

From our numerical analysis of Formula \ref{alg:fin_iter} on QMC on high-girth regular graphs at various depth $p$ and degree $d$ (see Figure \ref{fig:iter_scaling} in Section~\ref{sec:iter_numerics}), we observe that the optimal function value at fixed $p$ scales as $1/\sqrt{d}$. Based on this observation, we wish to extend Formula $\ref{alg:fin_iter}$ to calculate the asymptotic energy scaling to leading order in $d$, allowing us to approximate the formula output well at large finite $d$.

We remark that the optimal HamQAOA parameters found from our numerical optimizations of Algorithm \ref{HamQAOA} take all $\beta$ values to be multiples of $\frac{\pi}{4}$ and $\gamma = 0$ for all tested $d>3$.
By restricting the parameter space to these values, the initial state to a product over $X$-eigenstates, and making a technical assumption to complete our proof (Assumption \ref{lem:fx_H=0}), we are able to simplify our iteration in the $d\to\infty$ limit. We find numerically that the optimal energy returned by this infinite-$d$ formula follows the scaling expected from our finite-$d$ formula as $d$ is increased (Figure \ref{fig:iter_scaling}). This simplified formula can be computed asymptotically in $\tilde{O}( 4^p)$ time and $O(p^2)$ memory, improving over the finite-$d$ case with quadratically better time and exponentially better memory usage. Under Assumption $\ref{lem:fx_H=0}$, this formula provides a rigorous upper bound on the ground energy density of Heisenberg spin glasses.

The infinite and finite-$d$ formulas require the same inputs: a 2-local Pauli interaction term $\sigma_L\sigma_R$ on neighboring sites of high-girth regular graph $G$, $4p$ parameters $\paramv$, and distributions $\cD_m, \cD_n$ over the input state and $D$-driver rotation axes. Note that the support of $\cD_m$ is not $\bbS^2$, but is constrained to the $\hat{\vect{x}}$ axis. For convenience, we define $\nu_{p,d}(\paramv, \sigma_L, \sigma_R)$ as follows: 
\begin{equation*}
    \nu_{p,d}(\paramv, \sigma_i, \sigma_j) = -\frac{\sqrt{d}}{2}\mathbb{E}_{\mv, \nv}\langle\paramv|\sigma_i\sigma_j|\paramv\rangle
\end{equation*}
where $\ket{\paramv}$ is the HQS defined in Algorithm \ref{GHamQAOA}.
This definition will be convenient for obtaining an expression free of $d$ when we go to the large-degree limit, as made apparent by the scaling of $\mathbb{E}\langle\paramv|X_iX_j + Y_iY_j + Z_iZ_j|\paramv\rangle$ as $1/\sqrt{d}$ in Figure \ref{fig:iter_scaling}.
 
Then 
\begin{equation}
    \nu_{p,d} = -\frac{\sqrt{d}}{2}\mathbb{E}\langle\paramv|\sigma_i\sigma_j|\paramv\rangle = -\frac{\sqrt{d}}{2}\sum_{\vect{z}_L, \vect{z}_R} \exp( -i \vect{\mathcal{A}} \cdot (\vect{z}_L \vect{z}_R)) H_d^{(p)}(\vect{z}_L) H_d^{(p)}(\vect{z}_R) \Bar{f}^{\sigma_i}(\vect{z}_L) \Bar{f}^{\sigma_j}(\vect{z}_R).
    \label{40}
\end{equation}
Furthermore, our numerical optimizations also indicate that the magnitude of optimal $\alpha$ scale as $\Theta\left(1/\sqrt{d}\right)$. 
Motivated by this observation, we define $\tilde{\alpha} \vcentcolon= {\alpha}{\sqrt{d}}$ and $\Tilde{\vect{\mathcal{A}}}\vcentcolon= \vect{\mathcal{A}}\sqrt{d}$. With these definitions, our objective function across an edge $ij$ in the infinite-degree limit becomes
\begin{equation}
    \lim_{d\rightarrow\infty} \left[\nu_{p,d}(\paramv, X_i, X_j) + \nu_{p,d}(\paramv, Y_i, Y_j) + \nu_{p,d}(\paramv, Z_i, Z_j)\right]
\end{equation} 
whose maximization corresponds to maximization of the QMC cost function. We now give a more efficiently computable expression to compute this $\nu_{p,d}$.

\begin{formula}[Infinite-Degree Iterative Formula for HamQAOA]\label{alg:d_inf_iter}
We define inputs $p, \paramv, \sigma_L,\sigma_R, \cD_m, \cD_n$ as in Formula \ref{alg:fin_iter}. Restrict all values of $\gamma$ to $0$ and $\beta$ to $k\pi/4$ for any integer $k$. Moreover, restrict the support of $\cD_m$ to the $\hat{\vect{x}}$ axis on $\bbS^2$. Now define matrices $G^{(m)} \in \mathbb{C}^{2p+2}\times\mathbb{C}^{2p+2}$ for $0 \leq m \leq p-1$ as follows: For $j, k \in \{1, 2, ... , p, p+1, -(p+1), -p, .... , -2, -1\}$, let
\begin{equation}
    G^{(p+1)}_{j, k} = \sum_{\zv} \bar{f}^I(\zv) z_j z_k
\end{equation}
where $\bar{f}^\sigma(\zv) = \bar{f}(\zv)h^\sigma(\zv)$, and
\begin{equation}
    G^{(m)}_{j,k} = \sum_{\zv}\bar{f}^I(\zv)z_j z_k\ \exp\left(-\frac{1}{2}\sum_{j',k'=-p}^p G^{(m-1)}_{j',k'}\tilde\alpha_{j'}\tilde\alpha_{k'}z_{j'}z_{k'}\right)
\end{equation}
for $1\leq m \leq p-1$.
Now define the vector $K^\sigma \in \mathbb{C}^{2p+2}$ such that
\begin{equation}
    K^{\sigma}_{i} = \sum_{\zv}\bar{f}^\sigma(\zv)z_i\exp\left(-\frac{1}{2}\sum_{j',k'=-p}^p G^{(p-1)}_{j',k'}\tilde\alpha_{j'}\tilde\alpha_{k'}z_{j'}z_{k'}\right)
\end{equation}
Then assuming a technical Assumption~\ref{lem:fx_H=0}
\begin{equation}
\label{eq:nu_p}
    \nu_{p}(\paramv, \sigma_L,\sigma_R) := \lim_{d\rightarrow\infty} \nu_{p,d}(\paramv, \sigma_L, \sigma_R) = \begin{cases}
        \frac{i}{2}\sum_{j=-p}^p \tilde\alpha_j (K^{\sigma}_o)(K^{\sigma}_j),\qquad &\sigma_L \textnormal{ and } \sigma_R \in {Y, Z} \\
        0, \qquad & \sigma_L \textnormal{ or } \sigma_R = X
    \end{cases}
\end{equation}
\end{formula}

We remark that Assumption \ref{lem:fx_H=0}, on which the correctness of the $d\to\infty$ formula relies, has been verified by us via exhaustive numerical search. Moreover, we find that the optimal energy found using the infinite-degree formula follows the $1/\sqrt{d}$ scaling from the finite-degree case and matches the finite-degree values when rescaled by the appropriate factor, as shown in Figure \ref{fig:iter_scaling}.
The improved efficiency of evaluating this formula makes analysis feasible at higher depth, and could potentially provide parameters that are good initial values for optimization for the more computationally intensive finite-degree formula and for real high-degree graphs. 

\section{Empirical Results}

\subsection{Numerical Evaluation of Iterative Formulae and Agreement on Specific Graphs}
\label{sec:iter_numerics}

We first evaluate the performance of HamQAOA on the Quantum MaxCut problem on high-girth regular graphs. To do this, we optimize the HamQAOA parameters of Formula \ref{alg:fin_iter} for Algorithm \ref{HamQAOA} with cost function $X_LX_R + Y_LY_R + Z_LZ_R$, sampling initial parameters uniformly at random from $[-\pi/2, \pi/2]^{4p}$ and each $\mv_v=\nv_v$ uniformly from $\pm \hat{\vect{x}}$.
As in the case of classical MaxCut (e.g. \cite{Leo_MaxCut}), we find that the optimal cost function value scales as $1/{\sqrt{d}}$.
This scaling also serves as a numerical consistency check on our infinite-$d$ formula, as demonstrated in Figure \ref{fig:iter_scaling}.

\begin{figure}[h!]
    \centering
    \includegraphics[width=\textwidth]{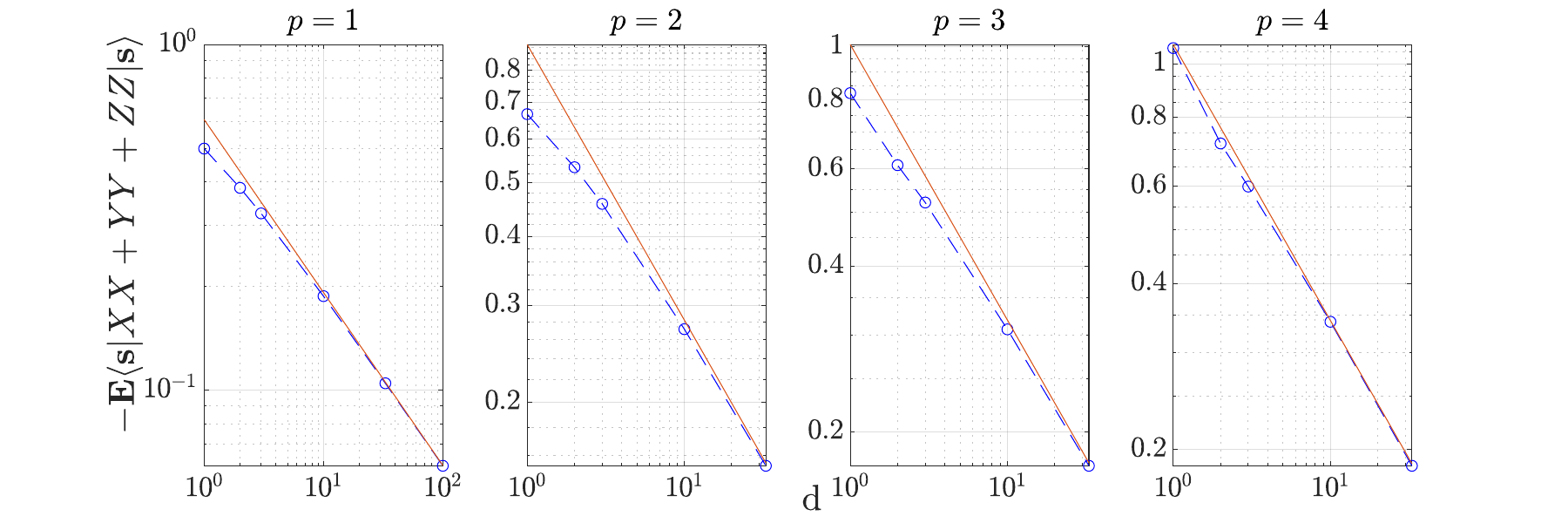}
    \caption{Comparison of the optimized finite-$d$ energies from Formula \ref{alg:fin_iter} (dash blue curve) to the optimized and rescaled infinite-$d$ formula value $2\nu_{p}/\sqrt{d}$ (solid red line). We observe that the red curve is an increasingly good estimate for the finite-$d$ energy as $d$ grows large, demonstrating that the energy from Formula \ref{alg:fin_iter} scales as $1/\sqrt{d}$ and that the infinite-$d$ formula provides a good estimate to energies at large $d$. 
    }
    \label{fig:iter_scaling}
\end{figure}

We also find that the optimal parameters given by the formula yield theoretical energies consistent with the HamQAOA ground energy approximations on specific instances of regular graphs, even with small size and girth. Specifically, we evaluate the HamQAOA with optimal parameters from the $d=2$ (i.e., vertex degree $3$) formula on an ensemble of 3-regular graphs. We do the same on ring graphs, where for each $p$ we choose a ring of size $2p+2$ and use the formula parameters at $d=1$. The energies obtained on these graphs, averaged over many choices of $D$ driver, are displayed in Figure \ref{fig:Theory-experiment-agreement}.
 
\begin{figure}[h!]
    \centering
    \includegraphics[width=0.5\textwidth]{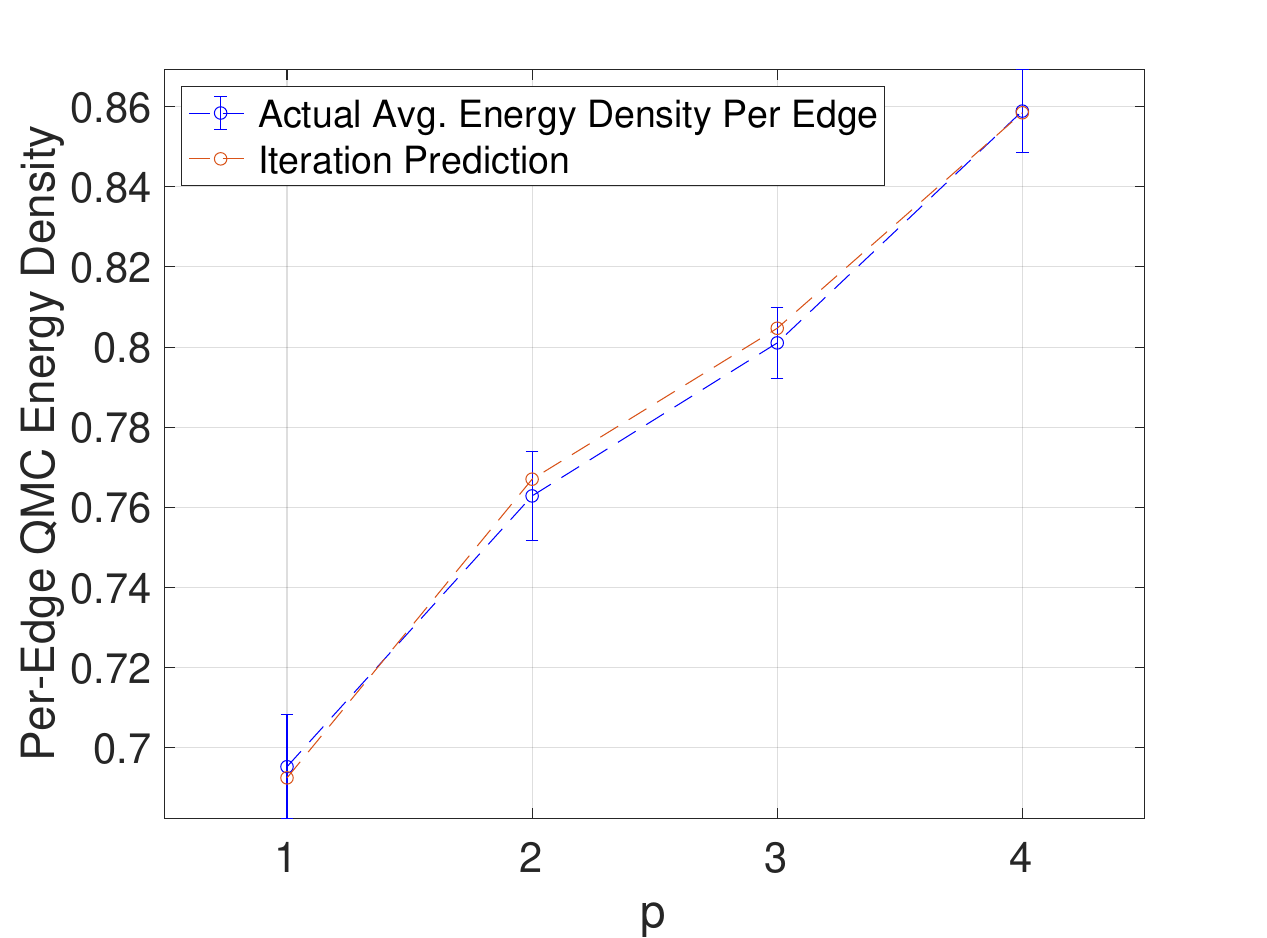}\hfill
    \includegraphics[width=0.5\textwidth]{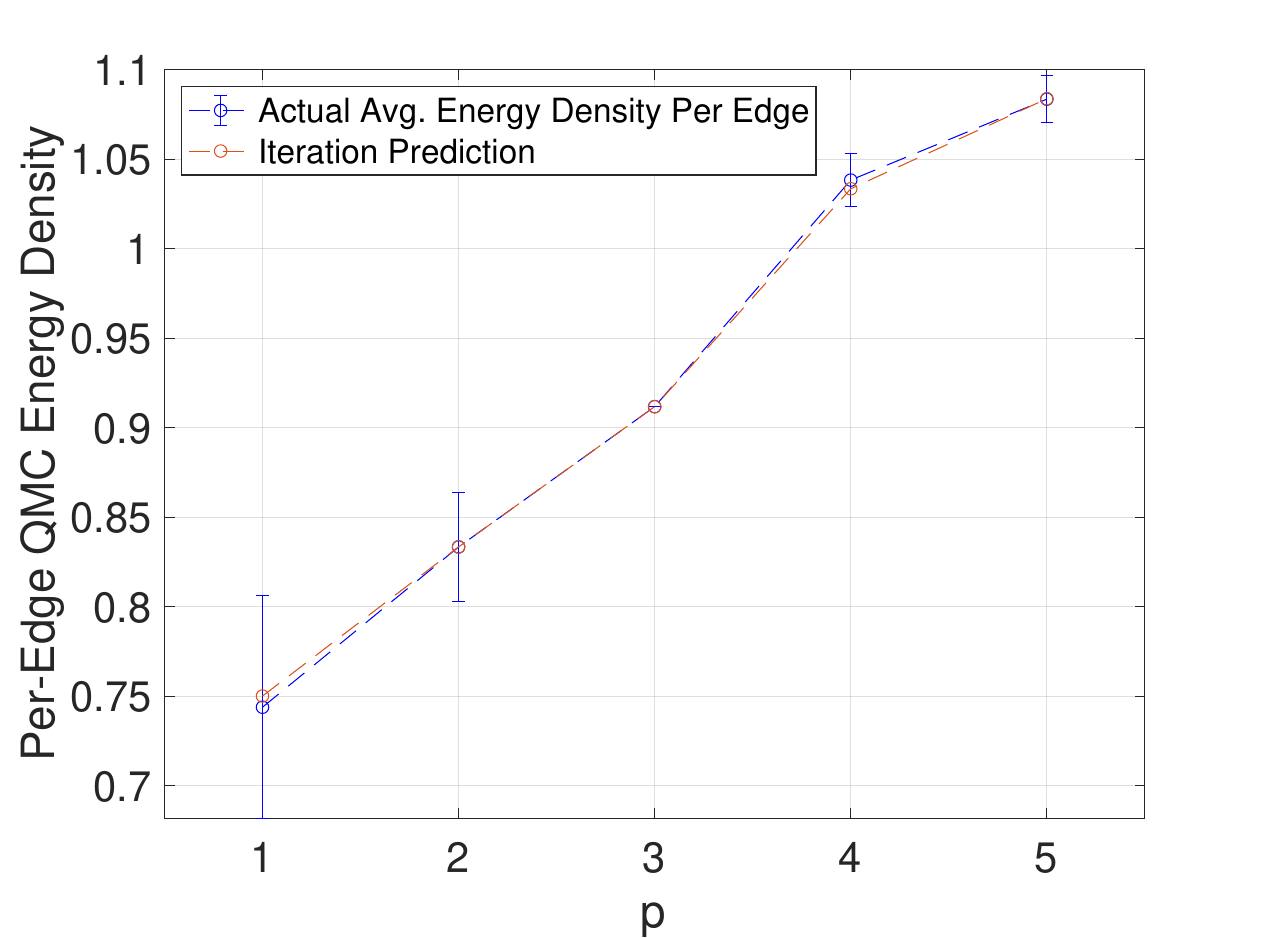}
    \caption{Agreement of theoretical average-case prediction with empirical graphs. Left: 3-regular graphs of size 14. At each $p \leq 4$ we sample $20$ strings from $\{\pm 1\}^{14}$ and, using optimal iterative formula parameters at $d=2$, we compute the average performance over $8$ graphs. Right: Ring graphs of size $2p+2$. At each $p\leq 5$ we repeat the above process, only using one ring graph of size $2p+2$ such that the high-girth assumption of the formula is satisfied, and with parameters at $d=1$. For these graphs, we sample $100$ strings from $\{\pm 1\}^{2p+2}$. In both cases, we see that the empirical data agrees with the formula prediction up to statistical uncertainty. 
    }
    \label{fig:Theory-experiment-agreement}
\end{figure}

We remark that the energy values we obtained in Formula~\ref{alg:fin_iter} can serve as rigorous lower bounds on the true maximum QMC value, and correspondingly, upper bounds on the Heisenberg ground energy.
In particular, we note that the solution for the infinite-size ring is known to be about $1.386$, whereas our $p=5$ value in Figure \ref{fig:Theory-experiment-agreement} yields $1.08$, which is a loose lower bound.
This looseness is because the HamQAOA's performance is strongly dependent on the choice of the rotation axes $\{\nv_v\}$ or $\{s_v\}$ that parameterize the $D$ driver. Note that for the simplified HamQAOA, the classical bitstring $\sv$ that defines the $D$ driver can be interpreted as a partition on the vertices of the interaction graph, and defines a cut on the edges of the graph. As we show in Figure \ref{fig:D_Driver_Perf} in Appendix \ref{sec:D_Driver_choice}, the simplified HamQAOA performs significantly better when this partition defines a cut as close as possible to the graph's true MaxCut. Thus, we expect that the iterative formula lower bounds will be tighter for graphs with large vertex degree $d$, where a cut chosen uniformly at random will cut half the edges in expectation, a fraction that becomes asymptotically close to the MaxCut value on random regular graphs~\cite{Dembo_2017} as $d$ goes to infinity. We also provide numerical evidence that the average-over-$\sv$ driver performance improves with degree $d$ in Figure \ref{fig:rand_vs_mc_d}. Therefore, Formula~\ref{alg:fin_iter} provides a new strategy to bound ground energies of LHPs, particularly those with large vertex degrees.

Given that these parameters are optimal for regular graphs when averaging uniformly over choice of $\sv$, it is interesting to study their performance as parameter initializations in the best case, when the choice of $\sv$ coincides with the classical MaxCut. Empirically, this works well for nearly-regular graphs, even of low girth, which is discussed in Appendix \ref{sec:iter_strategy}.
\vspace{-2em}
\subsection{Algorithm Performance on Specific Graph Ensembles}\label{sec:numerics}
In this section we optimize the HamQAOA for the QMC problem on ensembles of three families of graphs, and compare the results against some of the best previous algorithms for QMC. For benchmarking, we consider the SDP algorithm from \cite{King_2023}, which at the time that this work was underway achieved the best worst-case guarantee for the QMC problem, and the variational quantum algorithm by Anshu, Gosset and Morenz \cite{Anshu_2021}, which we refer to as the AGM algorithm.
As detailed in Appendix \ref{sec:D_Driver_choice}, the HamQAOA performs best when the choice of $D$-driver coincides with the classical MaxCut of the graph, with performance positively correlated with the cut fraction. The AGM algorithm also depends on a classical assignment of $\pm 1$ to each vertex, and its performance is similarly correlated with the classical cut fraction. For an appropriate comparison, we give both the HamQAOA and AGM algorithms access to the classical MaxCut of each graph, such that the following numerical results represent the best achievable performance of each algorithm. 

For all families of graphs we consider, we find that the HamQAOA at only $p=2$ layers substantially outperforms both the SDP and AGM algorithms. 
Note that both the SDP and AGM algorithms employ single-layer variational circuits, similar to the $p=1$ HamQAOA; the HamQAOA allows us to go beyond the shallowest depths and to achieve monotonically improving performance at each layer, with asymptotic convergence guarantees.

We then focus specifically on 1-dimensional chains with periodic boundary conditions, i.e. ring graphs. The ground states of these models are exactly solvable via the Bethe ansatz \cite{1931.Bethe.ZP.71}, and many works have presented quantum circuit constructions to prepare these Bethe states \cite{Van_Dyke_2021, Li_2022, Sopena_2022, raveh2024deterministicbethestatepreparation}, making rings an ideal testbed for further analysis of the HamQAOA's capabilities. We present numerical evidence that the HamQAOA may be able to prepare Bethe states of AFM Heisenberg rings of length $N$ in $p=O(N)$, which would constitute an exponential improvement in circuit depth over the previous best algorithms.

\vspace{-1em}
\subsubsection{Performance on Erdos-Renyi and Random 3-Regular Graphs}\label{sec:ER_3Reg_Perf}

\begin{figure}[h!]
    \centering
    \includegraphics[width=0.49\textwidth]{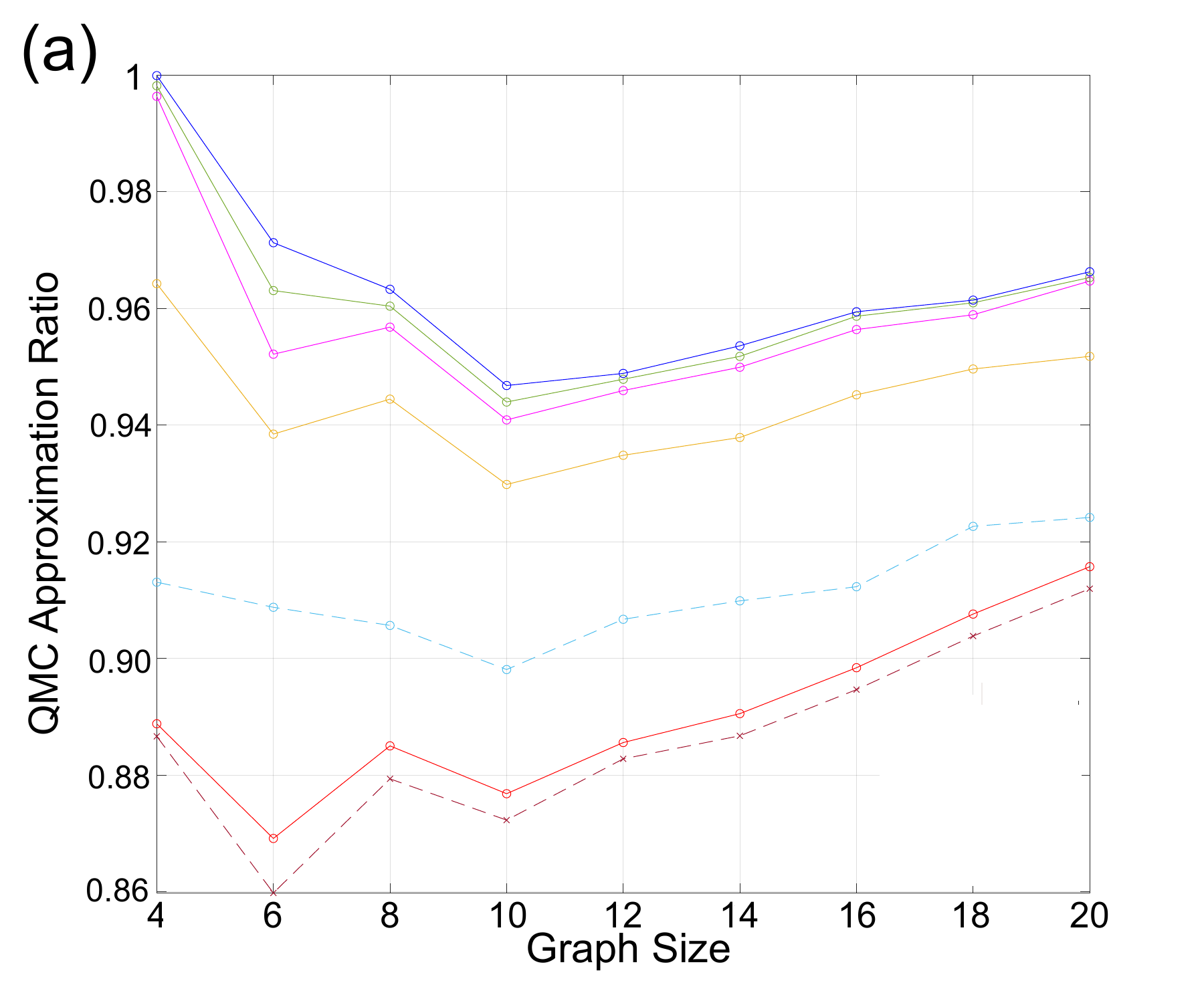}\hfill
    \includegraphics[width=0.49\textwidth]{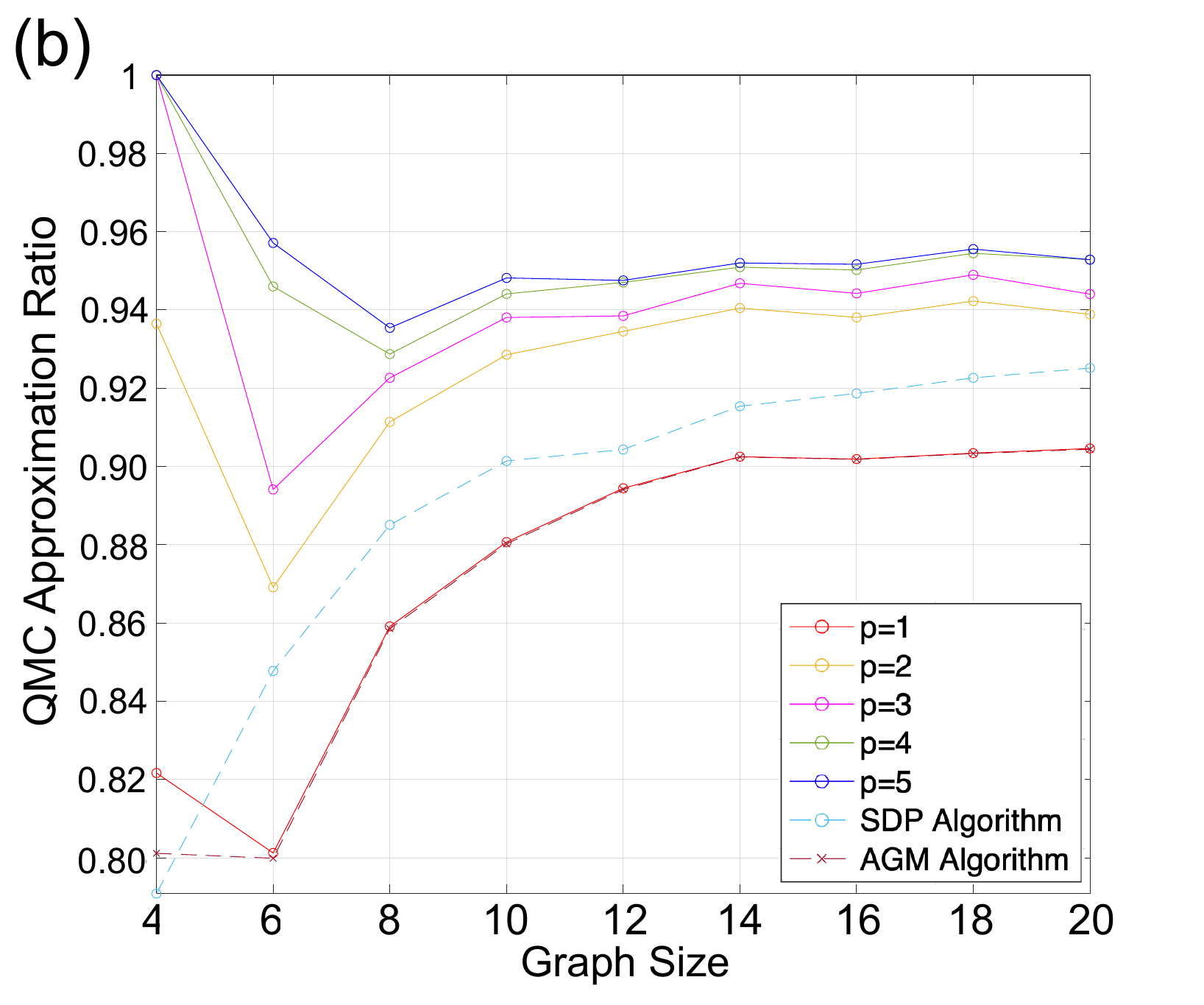}
    \caption{Performance of the HamQAOA versus the AGM and SDP algorithms on (a) Erdos-Renyi graphs with edge probability 0.5 and (b) random 3-regular graphs. As expected from Lemma \ref{lemma:contains_prev}, the HamQAOA does no worse than the AGM algorithm, and for $p > 2$ the HamQAOA also significantly outperforms the SDP algorithm.}
    \label{fig:er_3reg_best}
\end{figure}
First, we benchmark the HamQAOA on random graph instances. For each system size in Figure \ref{fig:er_3reg_best}, we generate 8 Erdos-Renyi random graphs with edge probability 0.5 and 8 random 3-regular graphs, allowing us to assess HamQAOA performance on both dense and sparse random graphs.
We remark that the AGM algorithm is known to achieve an energy on 3-regular graphs that provably surpasses any product state, and our $p>1$ HamQAOA strictly outperforms the AGM algorithm. Moreover, since there is no known method to compute the total angular momentum of the ground states on these generic non-bipartite graphs, variational schemes without symmetry-breaking drivers will be intrinsically limited.

\subsubsection{Performance on Rings, and Bethe State Conjecture}
\label{sec:rings}

For ring graphs, we find that the performance of the SDP algorithm, the AGM algorithm, and the $p=1$ HamQAOA all coincide at every system size, and that the HamQAOA at higher depths significantly outperform the other algorithms. While the HamQAOA's outperformance is expected from Lemma \ref{lemma:contains_prev}, Figure \ref{fig:Ring_Best_p[1-5]} (left) shows that for rings, the two benchmark algorithms (which we prove to be equivalent on edge-transitive graphs) exactly match the $p=1$ HamQAOA.

\begin{figure}[h!]
    \centering
    \includegraphics[width=0.49\textwidth]{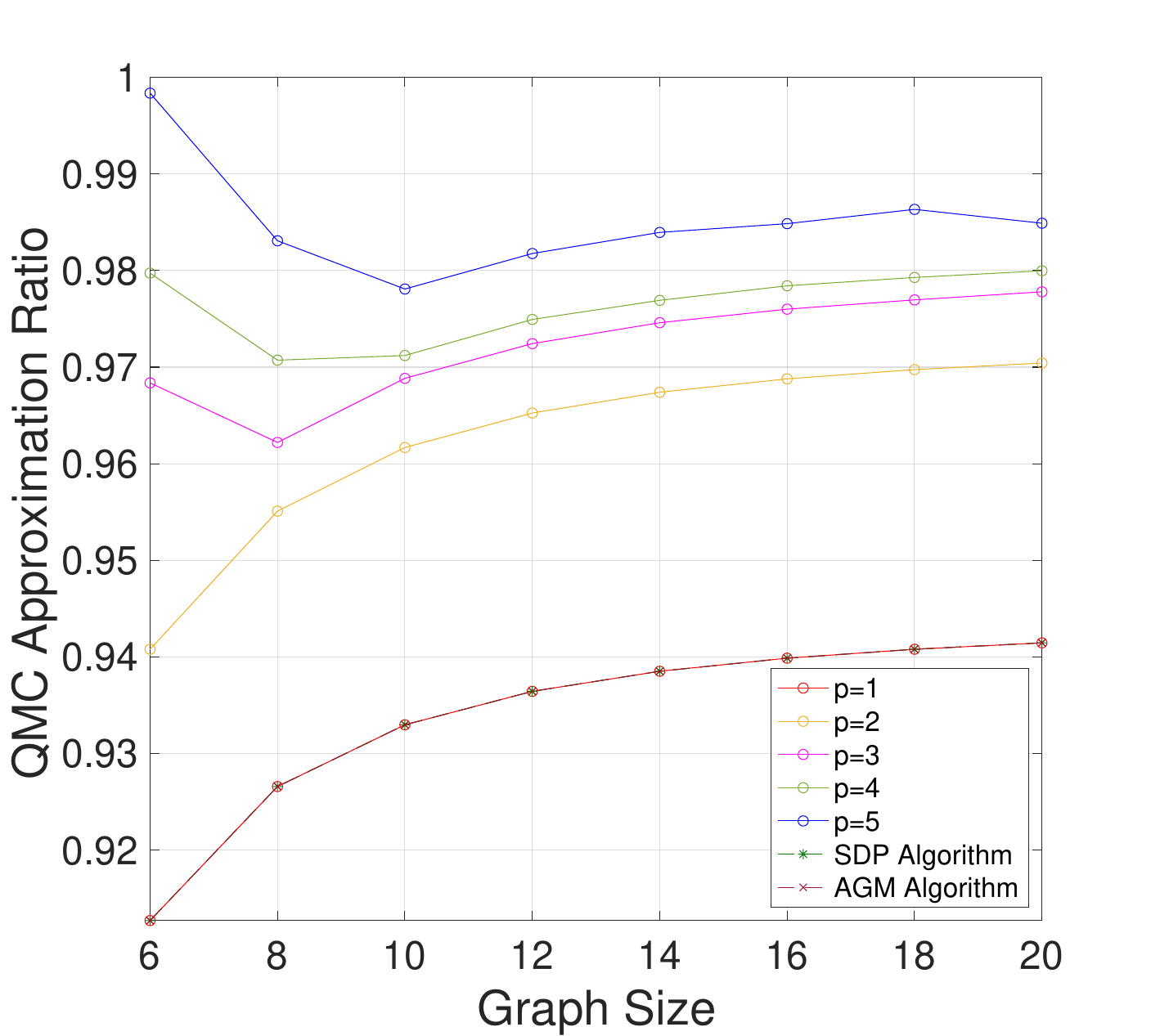} \includegraphics[width=0.49\textwidth]{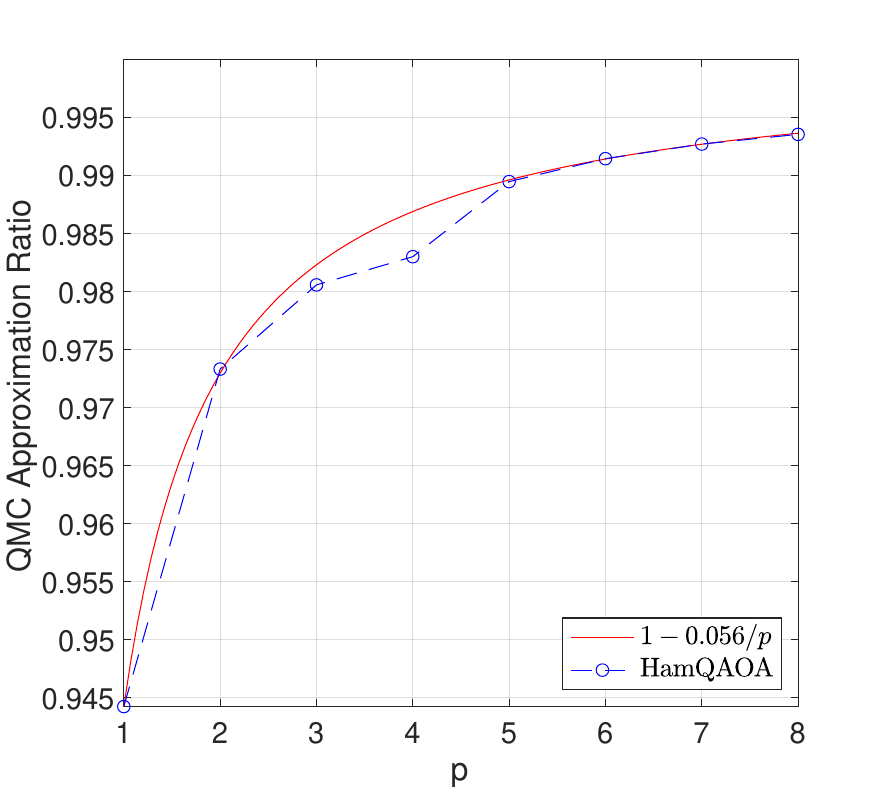}
    \caption{\textit{Left}: Performance of the HamQAOA at $1\le p\le 5$ and two benchmark algorithms as a function of system size on even-sized ring graphs. The QMC approximation ratio is the output state energy $\langle{H_{\rm QMC}}\rangle$ divided by the maximum eigenvalue of the $H_{\rm QMC}$. Note that the AGM, SDP and $p=1$ HamQAOA curves lie exactly on top of one another. 
    \textit{Right}: Scaling of HamQAOA approximation ratio on infinite-size chain as a function of depth $p$. The exact ground state energy density is known from the Bethe ansatz. The approximation ratio achieved scales as $1-1/p$ with fluctations at $p=3, 4$; here we fit the curve $I - C/p^j$ for parameters $I, C, j$ and obtain $I=1, j=1$. The kinks at $p=3, 4$ also appear in the left plot, where the spacing between $p=4$ and $5$ is larger than that between $3$ and $4$.}
\label{fig:Ring_Best_p[1-5]}
\end{figure}

Recall that for a HamQAOA circuit of depth $p$, the light cone of each qubit contains only vertices $\leq p$ edges away.
On ring graphs, the energy contribution from one edge only considers $2p+2$ neighboring edges, and the HamQAOA does not see the whole ring graph whenever $p\leq (N-2)/2$.
The energy density of states prepared in this regime is therefore identical to that prepared on an infinite-length chain.
In this regime, we study the approximation ratio achieved by the HamQAOA relative to the Bethe ansatz ground energy, and find that it scales roughly as $1-O(1/p)$ in Figure \ref{fig:Ring_Best_p[1-5]} (right). 
Conversely, when $p \gg N/2$ (e.g., $p=2N$), the HamQAOA has depth far greater than the minimum necessary to see the whole graph.
At this linear depth, we find it easy to prepare states that have very high overlap with the true ground state of the system which remains constant as system size increases. This is shown in Figure \ref{fig:ring_GSpop}.

\begin{figure}[h!]
    \centering
\includegraphics[width=0.9\textwidth]{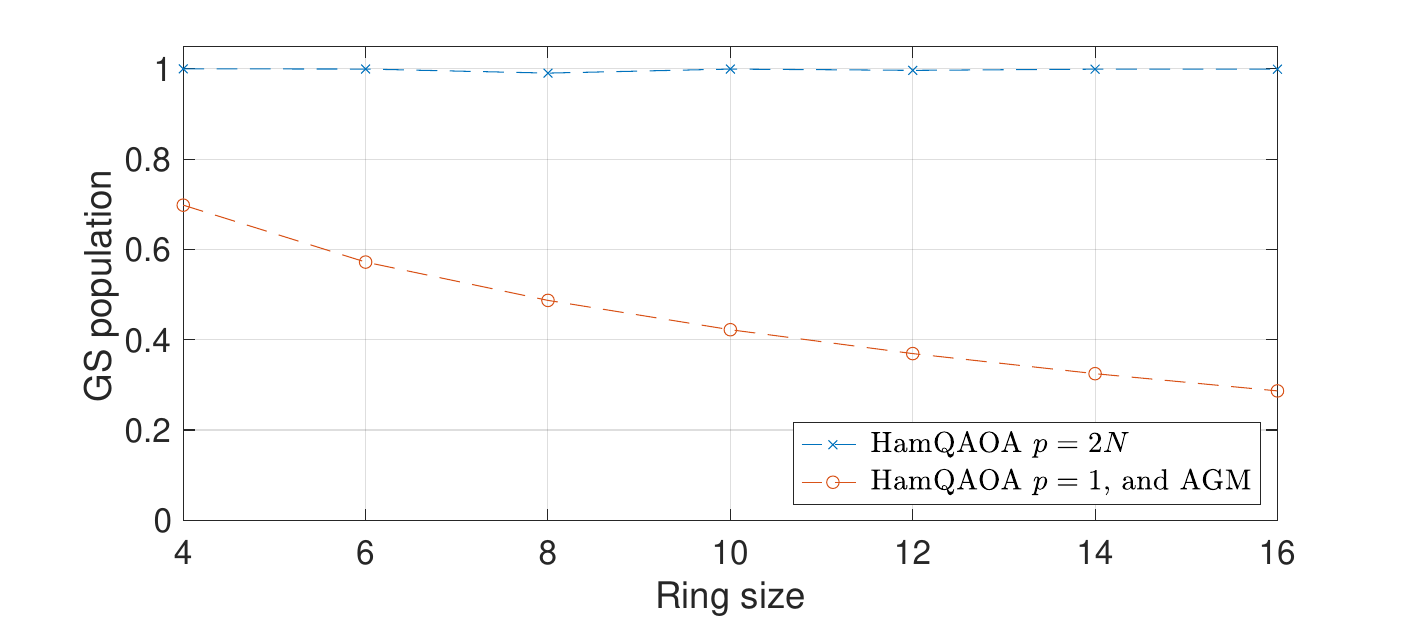}
    \caption{Fidelity of states prepared by the HamQAOA of depth $p=2N$, depth $p=1$, and the AGM algorithm with the maximum-energy eigenstate of the QMC Hamiltonian, denoted as ground state (GS) population. The AGM and $p=1$ HamQAOA yield the same curve, so they have been displayed as a single curve, although it is not obvious that the algorithms are exactly equivalent in this subcase. Whereas the output states from the AGM algorithm have GS population that decays smoothly with ring size, the HamQAOA is capable of preparing state with essentially perfect GS population.
    } 
    \label{fig:ring_GSpop}
\end{figure}

Since the AGM algorithm coincides with the $p=1$ HamQAOA in Figure \ref{fig:Ring_Best_p[1-5]}, one can also interpret Figure \ref{fig:ring_GSpop} as depicting a phase transition in the HamQAOA's computational power to prepare output states. Particularly, there exists some depth $p$ as function of $N$, between any fixed constant and $O(N)$ at which the scaling of ground state fidelity changes from exponetially decaying with $N$ to a constant. At any $p < N/2$, the HamQAOA does not see the full graph, and thus produces locally imperfect approximations. These errors in the fidelity multiply together, and we thus expect exponentially decaying fidelity in the system size until $p=N/2$ and the whole graph is seen. Given the constant-order fidelity observed at $p=2N$ and the required exponentially-decaying fidelity scaling at $p<N/2$, we expect that this transition occurs at $\Theta(N)$.

We also find HamQAOA parameters at $p=4$ which prepare the exact ground state of the $N=4$ ring, and similarly $p=7$ parameters for preparing the exact ground state of the $N=6$ ring. These parameters are are given in Table~\ref{tab:exactparam} Appendix \ref{sec:optparams}. This result is surprising as the QAOA framework is intended to produce exact solutions only in the large-depth limit, and it is an open theoretical question as to why this is possible for systems with particular structure such as the ring. Combining this observation with the results of Figure $\ref{fig:ring_GSpop}$ and the $1/p$ scaling of the approximation error in Figure~\ref{fig:Ring_Best_p[1-5]}, we state our main conjecture:

\begin{conjecture}
\label{conj1}
    For an AFM Heisenberg XXX ring of size $N$, there is a HamQAOA circuit of depth $O(N)$ which prepares its exact ground state.
\end{conjecture}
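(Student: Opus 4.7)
The plan is to construct explicit $O(N)$-depth HamQAOA circuits that prepare the Bethe ansatz ground state by combining the algebraic structure of the XXX chain with the light-cone geometry of the HamQAOA ansatz. Since exact $p=4$ parameters for $N=4$ and $p=7$ parameters for $N=6$ are already known from Table~\ref{tab:exactparam}, a natural first step is to extract structural patterns from these numerical solutions--for instance, examining whether the optimal angles $(\alpha_j,\beta_j,\gamma_j,\delta_j)$ encode the $N/2$ Bethe rapidities--and then conjecture a closed-form family parametrized by $N$. Because the light cone at depth $p$ reaches only $p$ edges, $p \geq N/2$ is information-theoretically necessary, so the target $p=O(N)$ is within a constant factor of optimal; the question is whether the HamQAOA's very restricted ansatz is nonetheless expressive enough to saturate this bound.

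My strategy would proceed via the algebraic Bethe ansatz. The ground state in the $S^z=0$ sector with $M=N/2$ magnons is created by applying $M$ creation operators $B(\lambda_k)$ (entries of the monodromy matrix) with rapidities satisfying the Bethe equations, acting on the ferromagnetic reference state $|{\uparrow}\rangle^{\otimes N}$. The HamQAOA initial state with $\vect{s}$ equal to the Néel MaxCut is, up to a global $Y$-rotation, the Néel state, which lies in the $S^z=0$ sector and is related to the reference state by a single $M$-magnon amplitude. The Yang--Baxter equation factorizes each $B(\lambda_k)$ into a row of $N$ nearest-neighbor $R$-matrices $R_{j,j+1}(\lambda_k)$, and each such $R$-matrix is a two-qubit unitary in the Heisenberg-point family. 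The key technical step is to show that a constant number of HamQAOA layers--a block $e^{-i\alpha A}e^{-i\beta B}e^{-i\gamma C}e^{-i\delta D}$ with appropriately chosen parameters--implements the conjugation action of one such row of $R$-matrices (up to a global single-qubit basis change), so that $M=N/2$ such blocks give total depth $O(N)$.

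The main obstacle is expressivity: the simplified HamQAOA has only four real parameters per layer, and the layers act almost translation-invariantly (the $D$-driver alternation being the sole spatial inhomogeneity), whereas the Bethe construction seems to demand independent control of the rapidity $\lambda_k$ at each stage. Matching a row of $R$-matrices to a HamQAOA sub-circuit via an Euler-like decomposition of the two-qubit unitary is plausible but requires checking that the composition $e^{-i\alpha A}e^{-i\beta B}e^{-i\gamma C}e^{-i\delta D}$ covers the needed two-parameter family, possibly by iterating a small fixed number of layers to generate enough effective parameters. A cleaner intermediate target is the XY ring, where Jordan--Wigner reduces the problem to a free-fermion Bogoliubov rotation provably realizable in $O(N)$ depth by a Givens network; verifying HamQAOA preparation there first--where the full Lie-algebraic machinery is available--and then transporting the result to XXX via adiabatic interpolation along the integrable anisotropy line, with gap control supplied by the symmetry-breaking of the $D$ driver and Theorem~\ref{Thm:bipartite_guarantee} applied in each bipartite-restricted sub-argument, provides a concrete route. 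Numerical extension to $N=8,10,12$ would serve both as sanity checks and to pin down the constant hidden in the $O(N)$ depth.
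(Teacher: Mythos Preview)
This statement is a \emph{conjecture} in the paper, not a theorem; the paper offers no proof, only numerical evidence (exact preparation for $N=4$ at $p=4$ and $N=6$ at $p=7$, near-unity fidelity at $p=2N$, and $1-O(1/p)$ approximation-ratio scaling on the infinite chain). There is no paper proof to compare against; your proposal is a research strategy for an open problem.

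As such a strategy, the algebraic-Bethe-ansatz route has a concrete gap you do not address: the creation operators $B(\lambda_k)$ are \emph{not unitary}. They are off-diagonal entries of the monodromy matrix in an auxiliary space, and the map $|{\uparrow}\rangle^{\otimes N} \mapsto \prod_k B(\lambda_k)\,|{\uparrow}\rangle^{\otimes N}$ is not norm-preserving. Producing the normalized Bethe state deterministically by a unitary circuit is precisely the obstruction that forces exponential depth in the prior work the paper cites. Your statement that ``each such $R$-matrix is a two-qubit unitary in the Heisenberg-point family'' is true of the $R$-matrices individually, but $B(\lambda)$ is a particular matrix element in the auxiliary space after composing a row of $R$-matrices, not a unitary on the physical chain; factorization into $R$-matrices therefore does not by itself yield a unitary circuit that a constant number of HamQAOA layers could implement.

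The XY fallback is on firmer ground---Jordan--Wigner plus a Givens network genuinely gives $O(N)$ depth, and the paper's Figure~\ref{fig:xy} already shows HamQAOA reaching exact XY ground states at $p<N$. But your final step, ``adiabatic interpolation along the integrable anisotropy line,'' does not preserve linear depth: the finite-size XXZ gap closes as $1/N$, so the adiabatic time is at least polynomial in $N$, and Theorem~\ref{Thm:bipartite_guarantee} gives only $p\to\infty$ convergence with no depth bound. The missing ingredient, if you pursue this direction, is either a unitary extension of the $B$-operator action that HamQAOA can realize in $O(1)$ layers per magnon, or a direct controllability/parameter-counting argument on the Lie algebra generated by the HamQAOA drivers that bypasses the monodromy construction entirely.
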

Proving this conjecture would constitute an exponential improvement in deterministic Bethe state preparation via the HamQAOA, compared to the best existing unitary protocol from \cite{raveh2024deterministicbethestatepreparation}. 

Our optimism for the HamQAOA's performance stems from the demonstrated success of variational algorithms for this task. The Variational Quantum-Classical Simulation (VQCS) algorithm from \cite{vqcs} also achieves $\Omega(1)$ fidelity using a linear-depth quantum circuit, though it does not find exact ground states up to numerical precision at any finite $p$. The similar HVA-based ansatz from \cite{kagome} achieves fidelity greater than $1-10^{-5}$ on rings up to size $16$ with $p$ again scaling linearly in ring size $N$; however, each layer has a number of tunable parameters linear in $N$ to allow for symmetry breaking. Our $D$ driver serves this purpose, but with only a single parameter per layer. Finally, as discussed in Section \ref{sec:otherlhp} below, Conjecture \ref{conj1} likely extend to LHPs beyond the AFM Heisenberg XXX model.

\subsection{Performance on Other LHPs}\label{sec:otherlhp}
While Algorithm \ref{HamQAOA} is tailored for the Heisenberg XXX model (i.e. QMC), we find it highly effective for other physically important 2-local LHPs. Here, we benchmark its performance on the quantum XY model and Heisenberg XXZ model in an external field.

\begin{figure}[h!]
    \centering
    \includegraphics[width=0.75\textwidth]{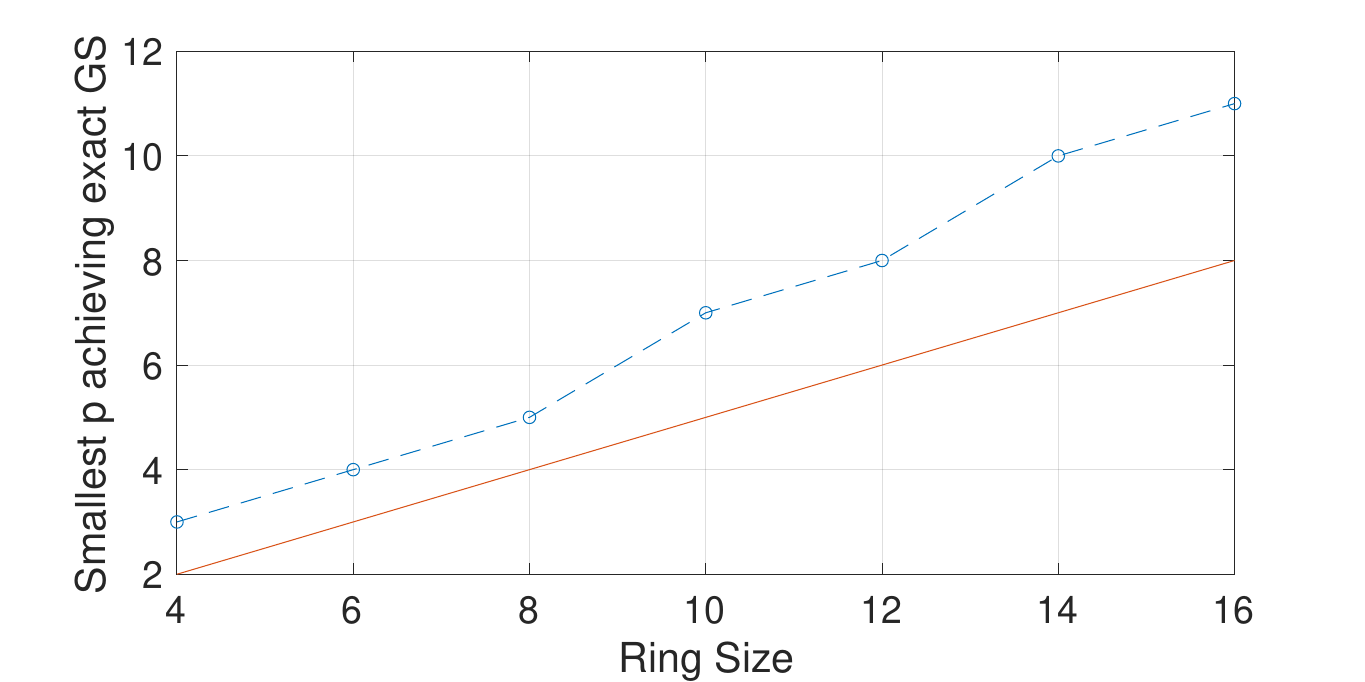}
    \caption{Smallest depth $p$ at which our numerical experiments of the HamQAOA find the exact ground state (up to floating-point precision in the fidelity) of the 1d XY model for different system size. The red line is $p=N/2$, the smallest $p$ at which the HamQAOA sees the whole system, for visual comparison.
    }
    \label{fig:xy}
\end{figure}

The XY model Hamiltonian is
\begin{equation}
    H_{XY} = \sum_{i\sim j} (X_i X_j + Y_i Y_j)
\end{equation}
One of the classic problems in condensed matter physics, the 1-dimensional XY model is exactly solvable by employing the Jordan-Wigner transformation to map the quantum spins to a free theory of spinless fermions \cite{Jordan:1928wi}. Perhaps due to this innate tractability, we find in Figure \ref{fig:xy} that the HamQAOA can exactly prepare the ground state of the 1d XY ring with $p$ less than system size.

\begin{figure}[h!]
    \centering
    \includegraphics[width=0.75\textwidth]{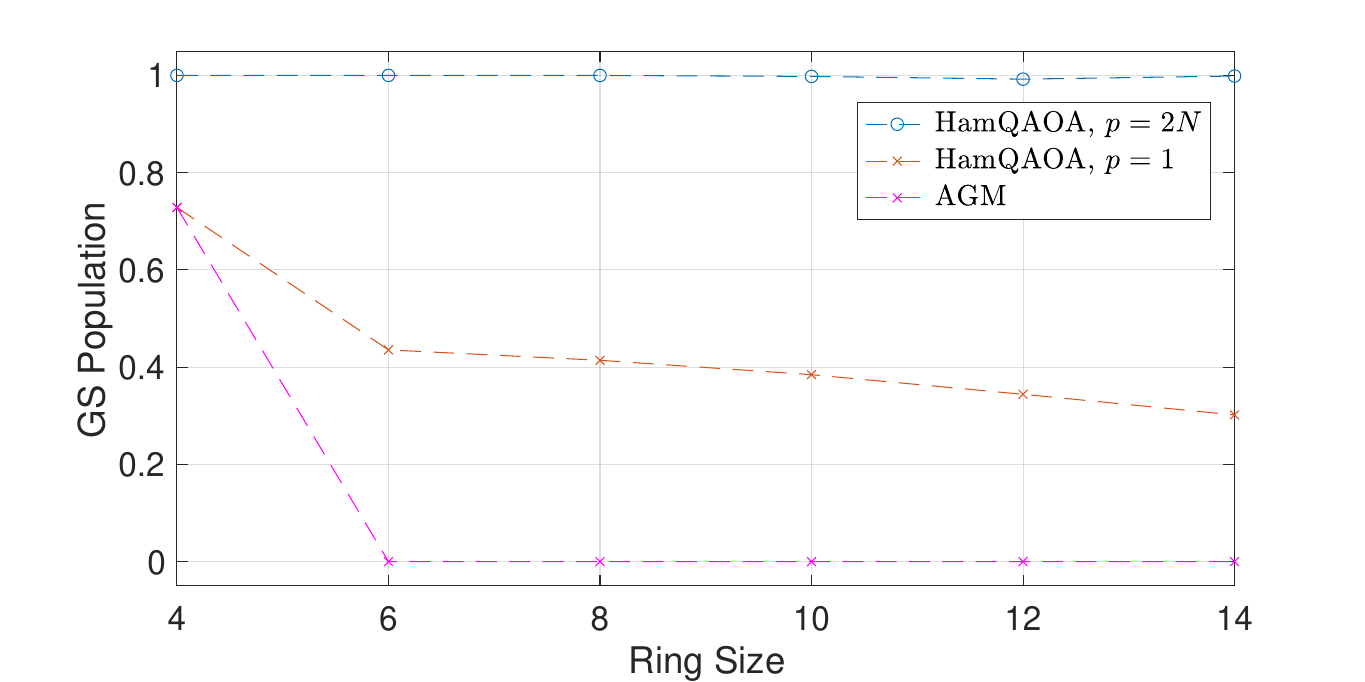}
    \caption{Fidelity of states prepared by HamQAOA of depth $p=2N$ versus AGM with ground state of XXZ model Hamiltonian \eqref{eq:H_XXZ} with $\Delta=1/2, h= 1/2$. The AGM struggles more than for QMC and the $p=1$ HamQAOA behaves less smoothly given the problem anisotropy, but the $p=2N$ HamQAOA again achieves near unity population in the ground state (GS),  similar to Figure \ref{fig:ring_GSpop}.}
    \label{fig:xxz}
\end{figure}

Next, the XXZ model in an external field is a generalization of the XXX model, allowing for anistropy in the 2-local Z term and a single-qubit field-coupling term:
\begin{equation}
\label{eq:H_XXZ}
    H_{XXZ} = \sum_{i\sim j} \big(X_iX_j + Y_iY_j + \Delta Z_i Z_j\big) + h\sum_i Z_i
\end{equation}
As seen in the phase diagram in Fig 1 in \cite{Rakov_2016}, this model permits an XY-like Luttinger-liquid phase, in which the long-range magnetic order observed in the ferromagnetic and AFM phases is lost \cite{Sachdev_2011, tomonaga}. Since QMC already describes an AFM model, we consider here the Luttinger-liquid phase with $\Delta = h = 0.5$. In Figure \ref{fig:xxz} we recreate Figure \ref{fig:ring_GSpop} for the XXZ case.
We are able to prepare exact XXZ ground states of $N=4$ ring at $p=4$, and $N=6$ ring at $p=8$.
This suggests that Conjecture \ref{conj1} extends to more general LHPs with richer ground state properties.

\section{Discussion}

We have proposed the HamQAOA as a simple, flexible, and practical quantum algorithm for LHPs with rigorous performance guarantees.
Our analysis of the HamQAOA's performance shows that it empirically beats the current best approaches for Quantum MaxCut, and provably outperforms the AGM and SDP algorithms on all graphs and on edge-transitive graphs respectively. Furthermore, we show that HamQAOA prepares states with near-unity overlap with the true ground state of the 1-dimensional AFM Heisenberg XXX model and other 2-local LHPs in depth only linear in system size $N$, and exhibits a ``phase transition'' at a depth between $N/2$ and $2N$ where the ground state fidelity goes from exponentially decaying to remaining a constant close to 1. Since we are also able to prepare exact ground states of small 1d systems with low-depth HamQAOA circuits and achieve an approximation ratio on an infinite-size chain that approaches 1 as $1/p$, we conjecture that the HamQAOA can prepare exact ground states of 1d 2-local LHPs with a circuit depth linear in the system size, a potential exponential improvement in Bethe state preparation.

We also provide an iterative formula to compute the expected performance of our HamQAOA when averaged over choice of the symmetry-breaking driver on high-girth regular graphs. This formula also applies to large random regular graphs, which have locally tree-like neighborhoods. Our formula gives a rigorous upper bound on the ground energy density of Heisenberg spin glass models in the thermodynamic limit, and this bound improves with depth.
We further simplify this formula in the infinite-degree limit based on a technical assumption. In the setting of classical MaxCut, a similar infinite-degree formula has yielded a connection between solving MaxCut on high-girth regular graphs and preparing the ground state of the Sherrington-Kirkpatrick (SK) spin glass model on the complete graph, and fitting the infinite-degree energy curve as a function of $p$ provides a very precise estimate of the Parisi value \cite{leo_talk}. For the HamQAOA on Heisenberg spin glasses, we optimize the energy to depth $p=8$, and provide their values in Appendix \ref{sec:nu_p_values}; however, our data so far has not yielded a good fit by the same family of curves. 
Understanding the connection between the high-girth and complete graph cases and studying the scaling of this formula to obtain a conjectured ``quantum Parisi value'' for the ground energy of the quantum Heisenberg model are natural and important next steps stemming from this work.
   
Our study of the HamQAOA framework for LHPs suggests several other avenues for both empirical and theoretical work. First, while our choice of drivers are tailored for 2-local LHPs, the generalization to $k$-local LHPs is immediate: one needs only replace the entangling driver $A$ with a sum over $k$-local Pauli-$Z$-products matching the geometry of the target LHP.
As approximation algorithms outputting entangled states have not been widely applied to practical LHPs that involve finding ground states of relevant many-body systems, exploring new HamQAOA ansatze is an interesting avenue for future work. One can also incorporate further improvements on the base HamQAOA, such as choosing the ansatz adaptively \cite{PhysRevResearch.4.033029} or using a clever warm start such as in \cite{Egger_2021,sridhar2023adaptqaoaclassicallyinspiredinitial}.

Moreover, while we have presented a thorough empirical study of the HamQAOA's optimal performance, we do not fully understand how the algorithm's variational landscape behaves. Although our numerical experiments suggest that it is easy to find useful local minima at large depth (i.e., without encountering barren plateau issues) with random initialization, it is unclear whether there is a depth regime where there are traps in the landscape that make it difficult to find minima within a constant gap of the global minimum. Prior works on the original QAOA have also found heuristic patterns in the optimal sets of parameters, such as a smooth increase in the $\beta$ values over layers while $\gamma$ values decrease smoothly, simulating an quasi-adiabatic evolution \cite{leo_qaoa_performance}; in our case, we have not observed such a trend within the depths that we could numerically access. Furthermore, our Greedy Iterative (GI) strategy is drawn from \cite{Sack_2023}, which has shown rigorous results for a more general parameter-guessing strategy with the QAOA; it may be fruitful to look for similar guarantees for the HamQAOA.
A rigorous understanding of our algorithm's optimization properties would be valuable for future application.

Although our HamQAOA is geared towards practical application, it can also be used to prove worst case guarantees to improve understanding of hardness-of-approximation of QMC and other LHP problems.
Our Lemma~\ref{lemma:contains_prev} shows that the HamQAOA at $p=1$ already inherits the worst-case guarantee of the AGM algorithm~\cite{agm_beyond}, and our empirical results indicate that it has strictly better performance at $p=2$.
It would thus be interesting to derive a worst-case guarantee of HamQAOA at $p>1$, similar to what has been done for the vanilla QAOA in \cite{WurtzLove2021}.
Our iterative formula also gives a rigorous upper bound on the ground energy density of Heisenberg spin glass models, and with modifications to the problem cost function can yield nontrivial bounds for various other many-body models. Thus, it is valuable to prove the correctness of our formula for the infinite-degree limit, which would allow for optimization at higher $p$ to yield better rigorous bounds, in addition to guiding the search for optimal parameters for finite-degree cases at larger depths. 

Finally, our main conjecture suggests potential for an exponential improvement in 1d Bethe state preparation. These states have been studied for decades in physics, but there do not exist efficient deterministic methods to prepare them. Moreover, these Bethe states can encode solutions to classically hard computational problems~\cite{crichigno2024quantumspinchainssymmetric}. Understanding the power of the HamQAOA or other variational frameworks for preparing eigenstates of quantum spin chains, and potentially more general Bethe eigenstates, is thus a promising direction in which to search for quantum advantage in both physics and computer science that arises from this work.

\begin{acknowledgments}
We thank Kunal Marwaha and John Preskill for helpful discussions.
IK was supported by the Caltech Summer Undergraduate Research Fellowship during part of this project.  RK was supported by the Institute for Quantum Information and Matter, an NSF Physics Frontiers Center (NSF Grant PHY-1125565). LZ acknowledges funding from the Walter Burke Institute for Theoretical Physics at Caltech. The authors declare no competing interests.
\end{acknowledgments}

\newpage
\let\oldaddcontentsline\addcontentsline
\bibliography{refs}
\let\addcontentsline\oldaddcontentsline

\appendix
\appendixpage
\section{Algorithm Heuristics and Data for HamQAOA}
\subsection{Choice of driver $D$}
\label{sec:D_Driver_choice}
In the simplified HamQAOA for QMC (Algorithm \ref{HamQAOA}), the $D$ driver is
\begin{equation}
    D = \sum_v s_v X_v
\end{equation}
where each $s_v\in \{\pm 1\}$. We define the string $\vect{s}$, a collection of all $s_v$. Note that $\vect{s}$ can be interpreted a bipartition on the vertices of the graph by grouping all vertices assigned the same $s_v$ together. Therefore, $\sv$ defines a classical cut of the graph given by the edges spanning this bipartition. Different choices of this cut change the HamQAOA's variational landscape, so it is important to understand the connection between the choice of $\vect{s}$ and the ease of finding good local minima. Here we supply numerical evidence that the choice of $\sv$ should coincide as closely as possible with the classical MaxCut of the interaction graph. 

\begin{figure}[h!]
    \centering
    \includegraphics[width=\textwidth]{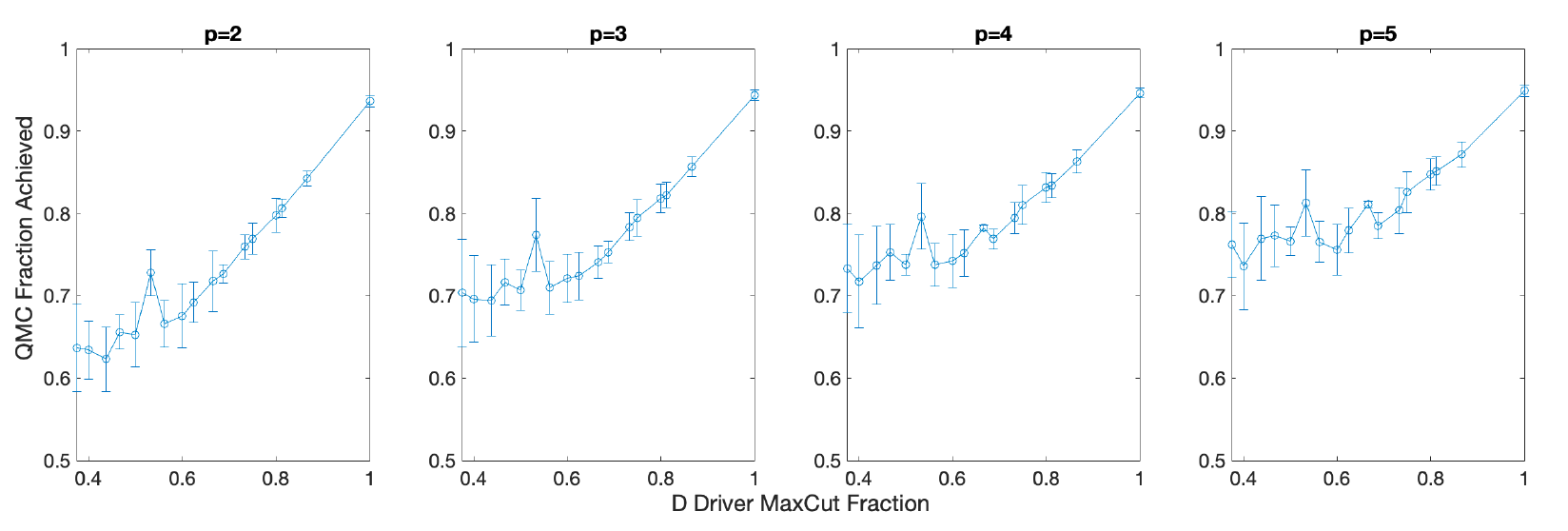}
    \caption{Best QMC approximation ratio achieved by the simplified HamQAOA of depth 2-5 as a function of choice of the classical MaxCut approximation ratio achieved by D-driver signs. To compute each data point corresponding to a choice of $\sv$, we first sample 20 sets of $4p$ parameters, each chosen independently and uniformly from $(-\pi/2, \pi/2]$, and optimize for each of 8 Erdos-Renyi graphs of size 12 with edge probability 0.5. We then choose the best-performing point for each graph, computing the mean and variance over 8 graphs. This is done for each p, over many choices of D-driver signs sampled uniformly from $\{\pm 1\}^{12}$, but always including at least one exact classical MaxCut. The upshot is a clear correlation between the classical MaxCut fraction of the HamQAOA D-driver and algorithm performance, and particularly the ease of finding good local minima in the parameter space when using relatively few parameter initializations.}
    \label{fig:D_Driver_Perf}
\end{figure}
Figure \ref{fig:D_Driver_Perf} demonstrates that choices of $\vect{s}$ with larger approximation ratios to the classical MaxCut of the given graph yield better QMC approximations. 

This also gives us intuition for when a random choice of $\vect{s}$, as opposed to the exact MaxCut or an approximation, is acceptable. For graphs of small vertex degrees, the expected cut fraction of a random cut is small, causing a significant drop in performance when using a random $\vect{s}$ vs. the exact MaxCut. However, in the large-size limit of a dense random graph, random cuts are asymptotically close to the MaxCut. In this case, we find that using a random cut does not perform much worse than using the true MaxCut, as shown in Figure \ref{fig:rand_vs_mc_d}.
\begin{figure}[h!]
    \centering
    \includegraphics[width=0.5\textwidth]{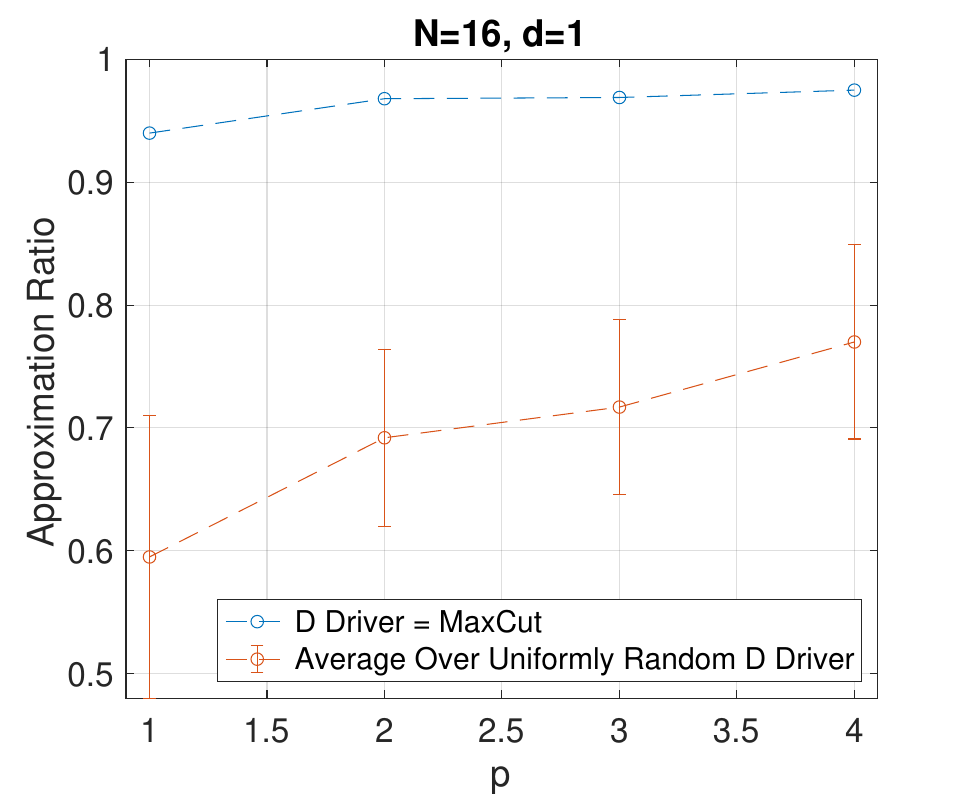}\includegraphics[width=0.5\textwidth]{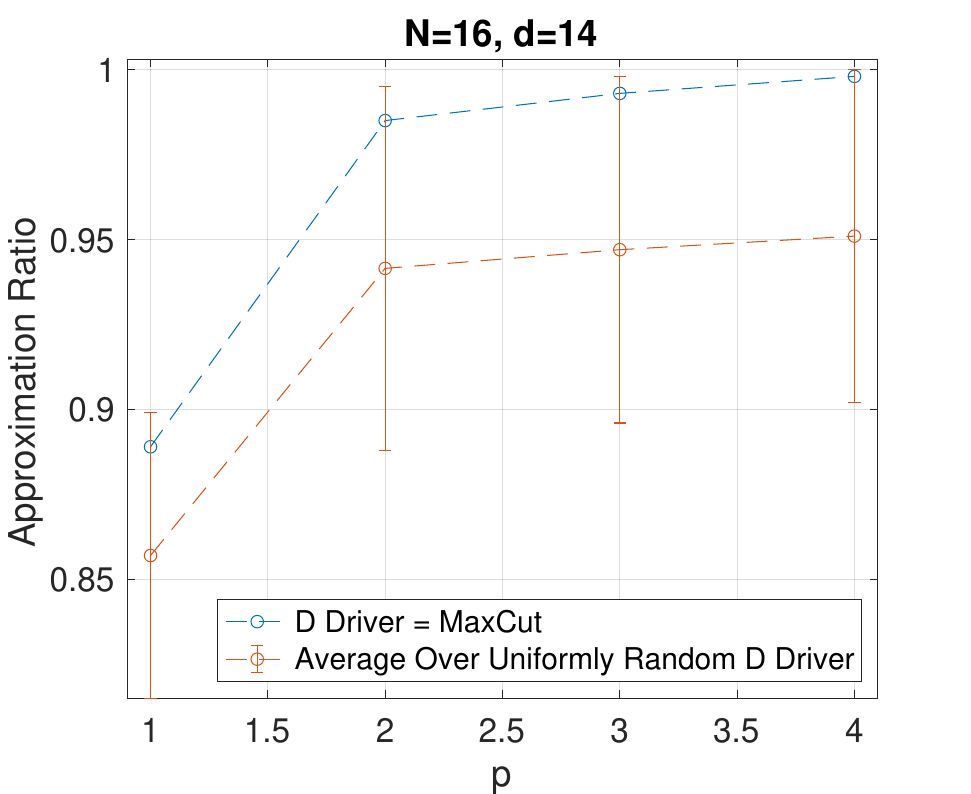}
    \caption{Performance of the simplified HamQAOA on low vs. high degree regular graphs of size 16 using $D$ driver that coincides with the MaxCut vs. randomly chosen $D$ driver. Initial points in parameter optimization are taken from optimizers of the finite-degree formula. Left: ring graph ($d=1$). Right: fully connected graph ($d=14$). For higher degree, the randomly chosen $D$ driver achieves much closer performance to the optimal $D$ driver. We average over $20$ uniformly random $D$ drivers.} 
    \label{fig:rand_vs_mc_d}
\end{figure}
Hence, when tractable, one should use the exact MaxCut of the given graph; this is often simple for practical physical models, like square and hexagonal lattices. Otherwise, an approximation algorithm such as the Goemans-Williamson SDP (\cite{goemans_williamson}) should be used. If the average degree of the graph is large, a random cut may be chosen.
\vspace{-2mm}
\subsection{Iterative Formula Parameter Initialization Strategy} 
\label{sec:iter_strategy}
To optimize the HamQAOA of depth $p$ on a 2-local LHP with an interaction graph that has average degree $d+1$ using the Iterative Formula Parameter (IFP) strategy, one optimizes Formula \ref{alg:fin_iter}, with interaction terms taken from the LHP Hamiltonian, at depth $p$ and degree parameter $d$. One then initializes the HamQAOA with the resulting optimal parameters and optimize, minimizing the output state's energy with respect to the LHP Hamiltonian. 
\begin{figure}[h!]
    \centering
    \includegraphics[width=0.5\textwidth]{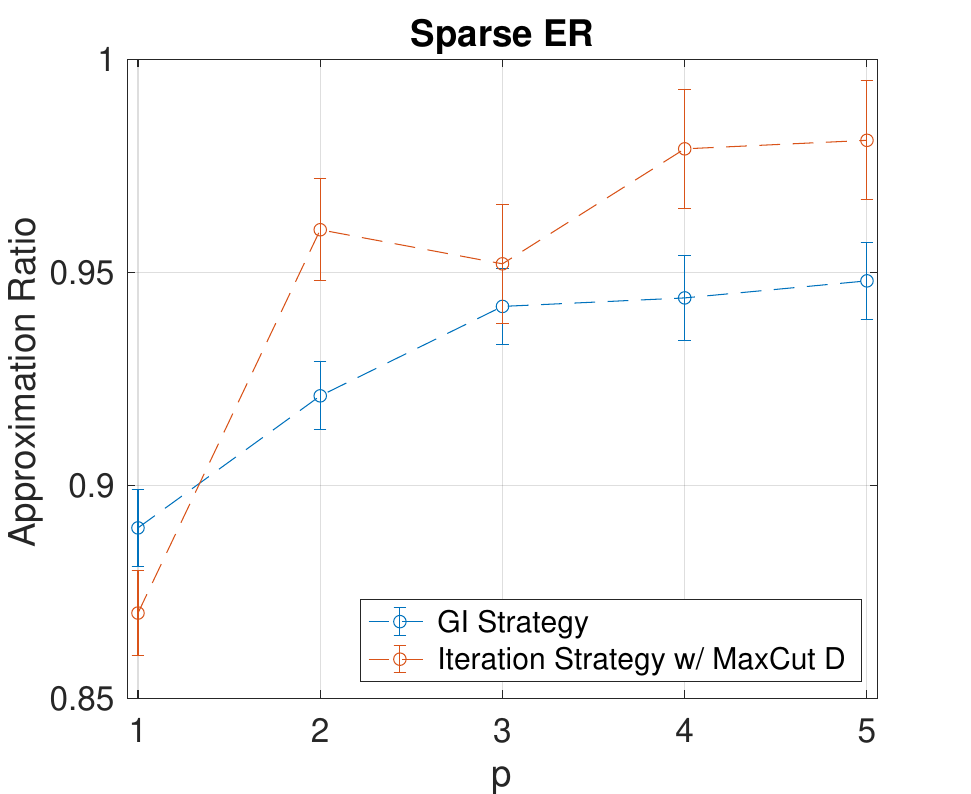}\includegraphics[width=0.5\textwidth]{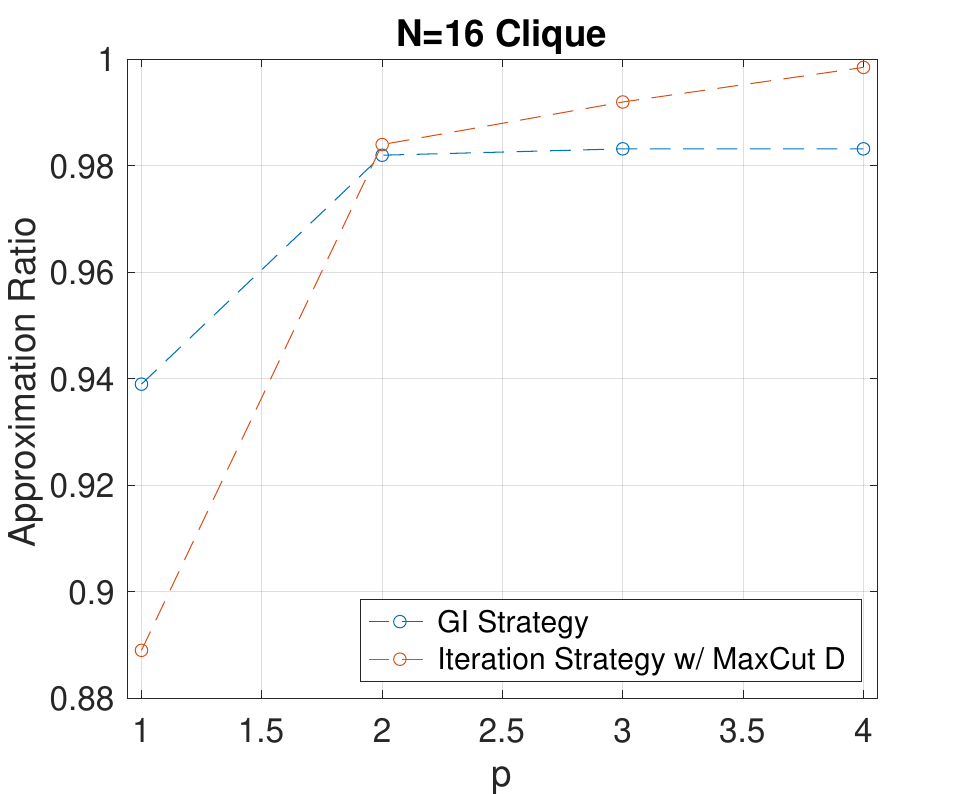}
    \caption{Performance of IFP vs. GI strategies on: (Left) an ensemble of 8 Erd\H{o}s-R\'enyi (ER) graphs with edge probability $2/N$ for $N=16$, and (Right) an all-to-all connected graph of size 16. In both cases, the iterative formula strategy outperforms the GI strategy for $p>1$. At $p=1$, the GI strategy outperforms because the optimal parameters do not have $\beta,\gamma = 0$ on specific graphs, but the formula parameters do. We do not find the iterative formula strategy better on dense ER graphs, where both regularity and high-girth assumptions are violated.}
    \label{fig:iter_vs_gi}
\end{figure}
Note that there may be multiple sets of parameters for the formula which yield the same energy; however, these are often equivalent up to symmetries of the parameter space, and can be collapsed using Protocol 1 (\ref{protocol:Gauge}). Our numerical benchmarks are performed for QMC, but this procedure applies to any LHP satisfying the requirements of Formula \ref{alg:fin_iter}.

The parameters obtained from optimizing Formula \ref{alg:fin_iter} are optimal for high-girth regular graphs in the average case; however, we find that the IFP strategy performs best when the $D$ driver is fixed to the classical MaxCut. Note that when degree $d$ is large, e.g. linear in system size, we expect that the performance of the IFP strategy should not suffer significantly when the $D$ driver is chosen at random vs when it coincides with the exact MaxCut. This expectation is confirmed by Figure \ref{fig:rand_vs_mc_d} (right).

As demonstrated in Figure \ref{fig:iter_vs_gi}, the IFP strategy can outperform the GI strategy on specific instances; particularly, approximately-regular high girth graphs and even low-girth but high degree regular graphs.
\subsection{Greedy Iterative Parameter Initialization Strategy}\label{sec:GI}
In this section, we detail and benchmark the Greedy Iterative optimization heuristic which is inspired by \cite{Sack_2023}. Suppose some set of $4p$ parameters achieves a QMC energy $E$ on a given graph. For each $i\in \{1, 2, \hdots p+1\}$ we can insert a row of parameters $(\alpha, \beta, \gamma, \delta)_i = (0, 0, 0, 0)$ into our original set of parameters, shifting all subsequent layers of the HamQAOA circuit.
\vspace{-2em}
\begin{figure}[h!]
    \centering
    \includegraphics[width=0.85\textwidth]{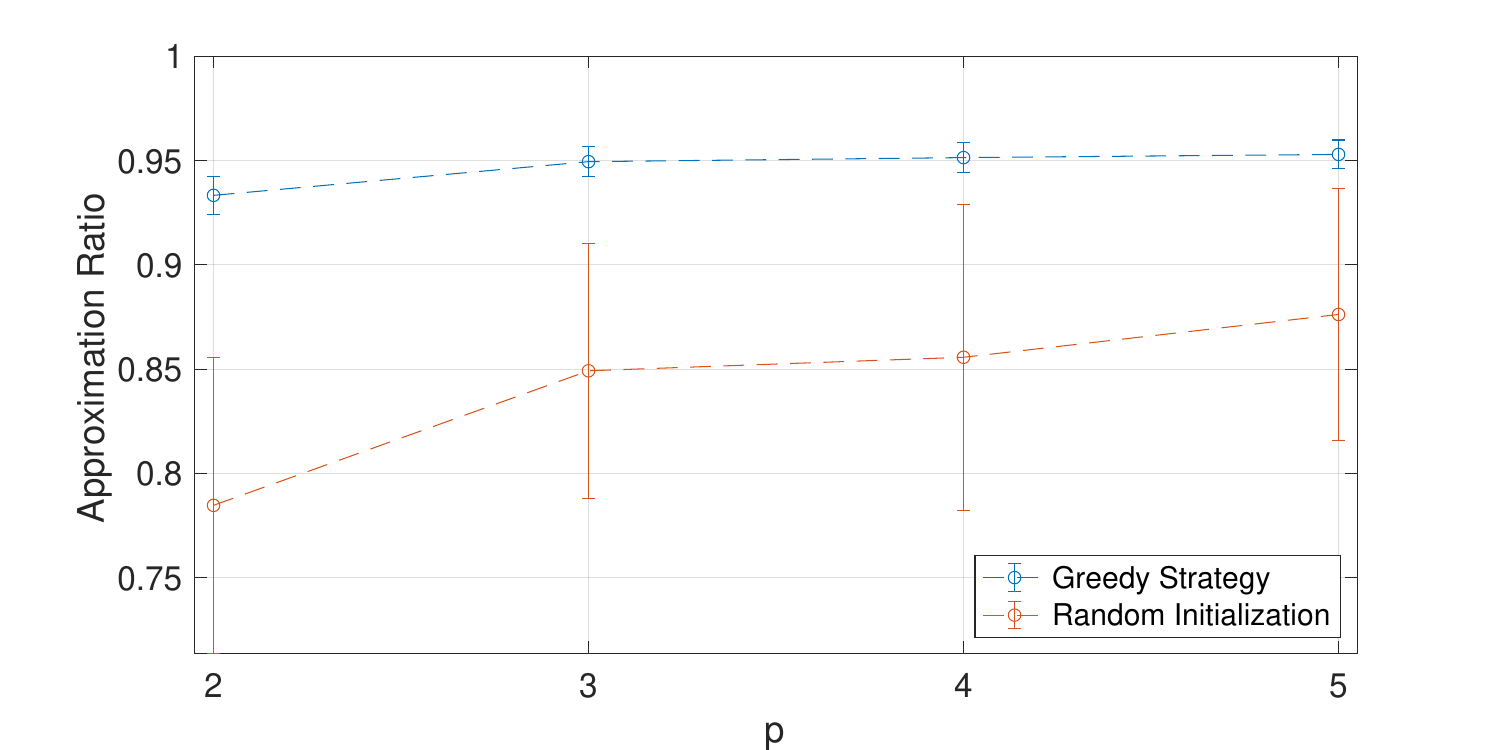}
    \caption{Performance of GI strategy vs. random initialization on ensemble of $8$ ER random graphs of size $14$. For the greedy strategy, we take the $2$ best-performing sets of parameters at each $p$ and append a row of zeroes to initialize at depth $p+1$. We also perform $2$ random initializations for even comparison.}
    \label{fig:enter-label}
\end{figure}
Since this layer of the circuit applies the identity map to our system, the resulting QMC energy is exactly $E$, now for a QAOA of depth $p+1$. Hence, for every $p$, this iterative strategy provides $p+1$ sets of initial parameters from which to optimize the circuit at depth $p+1$ with nondecreasing performance.
\cite{Sack_2023} shows that for the original 2-driver QAOA, these $p+1$ possible parameters constitute so-called \textit{transition states} in the energy landscape, which are connected in the negative curvature direction by a local minima of the depth $p+1$ QAOA, thus guaranteeing that better local minima are found at each successive depth.

This heuristic outperforms random initialization when the number of parameter initializations are limited, making it more efficient on a variety of graphs. However, despite the nondecreasing performance guarantee, it is possible to get stuck in poor local minima when iterating to large $p$, as the greedy path from small-$p$ initializations may not extend well on complex instances. In these instances, we find either the IFP strategy or random initialization more suitable, depending on the regularity and degree of the graph, as well whether the iterative formula is tractably optimized at the desired $p$.

\subsection{Zero Insertion for Greedy Strategy} \label{sec:0_insert}
\vspace{-2em}
\begin{figure}[H]
    \centering
    \includegraphics[width=0.9\textwidth]{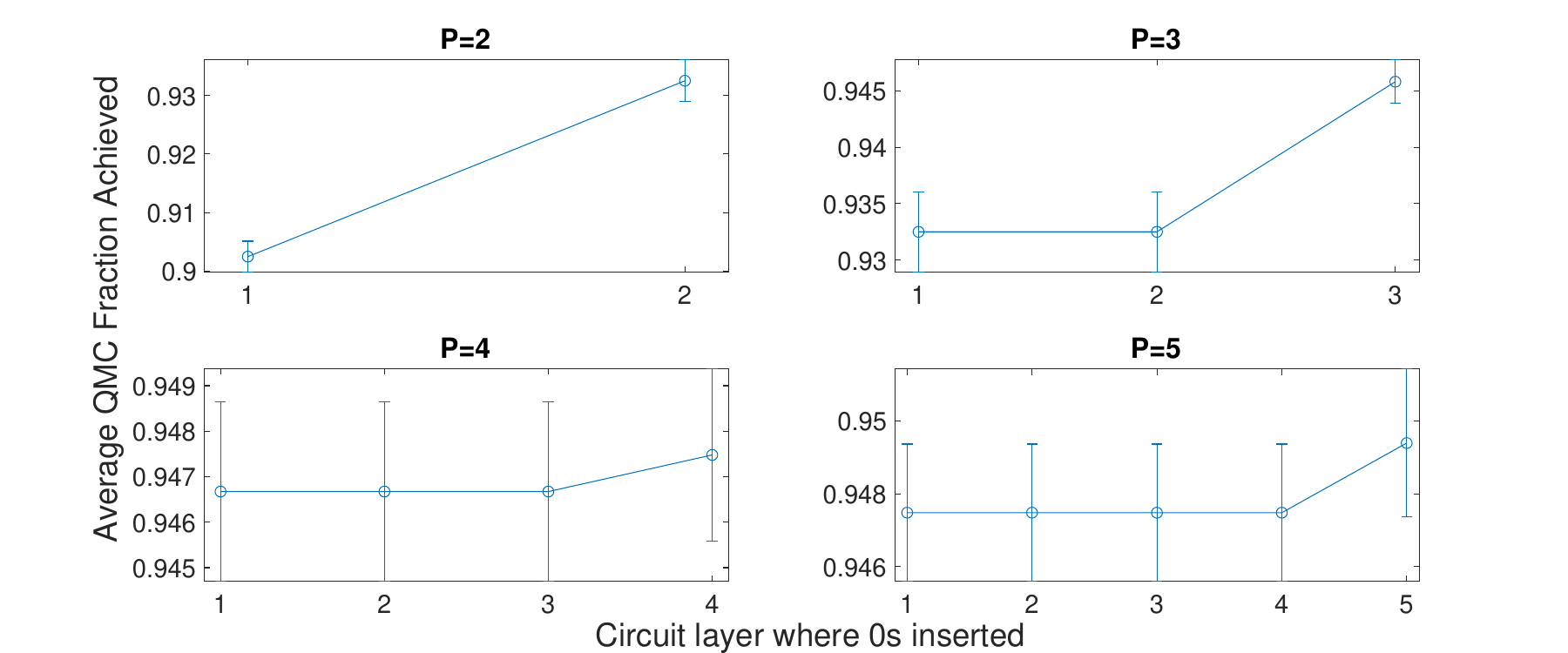}
    \caption{Performance of GI strategy with insertion of $0$ row at different indices. These results are obtained on an ensemble of 8 random 3-regular graphs of size 14. We find that inserting at the end performs best on average.}
    \label{fig:0_insertion}
\end{figure}
One may wish to make the GI optimization yet more efficient without sacrificing performance. We find that, on average, appending the row of zeroes to the end of the previous set of parameters, performs best out of the $p+1$ choices of layer at which the identity layer is inserted.

\subsection{Optimal rescaled $\nu_p$ in the $d\to\infty$ limit}
\label{sec:nu_p_values}
\vspace{-2em}
\begin{table}[h!]
\begin{center}
\begin{tabular}{|p{1cm} |p{1cm}|p{1cm}|p{1cm}|p{1cm}|p{1cm}|p{1cm}|p{1cm}|p{1cm}|}
\hline
$p$ & 1 & 2 & 3 & 4 & 5 & 6 & 7 & 8\\
\hline
$\nu_p$ &
0.3033 & 0.4459
 & 0.5045 &  0.5417
   & 0.5671
    &0.5909
    &0.6034
    &0.6191 \\
\hline
\end{tabular}
\end{center}
\vspace{-1em}
\caption{Numerically optimized values of $\nu_p(\paramv,X,X)+\nu_p(\paramv,Y,Y)+\nu_p(\paramv,Z,Z)$ defined in equation \eqref{eq:nu_p} for $1\le p\le 8$, given to the 4th decimal.}
\end{table}
\vspace{-2em}

\subsection{Optimal Parameters for QMC on Rings}\label{sec:optparams}
Here we give HamQAOA parameters that generate exact ground states of AFM Heisenberg rings of sizes 4 and 6. Parameters are listed in circuit order, i.e. the first row corresponds to the first set of HamQAOA gates applied.

\begin{table}[H]
\centering
\begin{tabular}{ |p{1.2cm}|p{1.5cm}|p{1.5cm}|p{1.5cm}|}
\hline
\multicolumn{4}{|c|}{$N=4, p=4$} \\
\hline
$\vect{\alpha}$& $\vect{\beta}$ &$\vect{\gamma}$& $\vect{\delta}$ \\
\hline
0.2821 & 0 & -1.2707 & -0.6880 \\
0.5697 & 0 & 0.0630 & -0.2841 \\ 
-1.0968 & 0 & -0.8312 & -1.3104 \\
1.1374 & 0 & 0.8710 & 1.4865 \\
\hline
\end{tabular}
\quad
\begin{tabular}{ |p{1.2cm}|p{1.5cm}|p{1.5cm}|p{1.5cm}|}
\hline
\multicolumn{4}{|c|}{$N=6, p=7$} \\
\hline
$\vect{\alpha}$& $\vect{\beta}$ &$\vect{\gamma}$& $\vect{\delta}$ \\
\hline
0.4440  &  0.5794 &  -1.5708 &   0 \\
-0.8367  & -0.7445  & -0.7854  &  0.7854\\
    1.4894  & -1.2421 &   1.1202&   -1.5686\\
    1.5708 &   1.0088  &  1.0335  & -1.9968\\
   -0.4696  & 0 &  -0.0025  & -1.3673\\
   -1.0117  & -0.7854   &-1.5708   & 0.3109\\
   -0.1558  & 0 &  -0.7854  &  0.7854\\
\hline
\end{tabular}
\caption{HamQAOA parameters to prepare ground state of QMC on $N$-vertex ring graphs exactly (up to floating-point precision), given to the 4th decimal. 
\label{tab:exactparam}}
\end{table}

\subsection{Symmetries in optimal parameters for high-girth regular graphs}
\vspace{-1em}
We are interested in optimizing Formula \ref{alg:fin_iter}, which gives the average-case HamQAOA performance on high-girth regular graphs, with respect to HamQAOA parameters. Performing this parameter optimization using random initializations generally yields degenerate sets of optimal parameters which all evaluate to yield the same energy. However, by considering symmetries in the parameter landscape, many of these degeneracies can be collapsed. Observe that adding or subtracting $\pi$ from any angle only contributes a global phase to the HamQAOA state, so our angles are all $\pi$-periodic with respect to the output energy. Furthermore, notice that adding or subtracting $\pi/2$ to any angle $\alpha$ introduces a term $\prod_{i\sim j} Z_jZ_j$ into the HamQAOA circuit \ref{eq:hamqaoa_circuit}. Recall that $d$ is one less than the degree of the regular graph in Formula \ref{alg:fin_iter}. For odd $d$, there are thus an even number of $Z$ terms in this product acting on each site. Then the expression $\prod_{i\sim j} Z_i Z_j$ commutes with all drivers and the QMC Hamiltonian, and does not affect the output state energy. Leveraging these observations, we use Protocol 1 to simplify our optimal iteration parameters.
\SetAlgorithmName{Protocol}{}{}
\label{protocol:Gauge}
\begin{algorithm}
    \caption{Reducing Parameter Degeneracy via ``Gauge-Fixing''}
    \KwIn{HamQAOA depth $p$, degree $d$ as defined in Formula \ref{alg:fin_iter}, set of $4p$ HamQAOA parameters $\paramv$.}
    \KwOut{Transformed set of $4p$ HamQAOA parameters.}
    Define subroutine RestrictToMiddle($\vect{\theta}$, $T$): \\ \Indp
    For all $\theta_i \in \vect{\theta}$: \\
    \Indp If $\theta_i < -T/2$, set $\theta_i = \theta_i + T$ \\
    Else if $\theta_i > T/2$, set $\theta_i = \theta_i - T$ \\
    \Indm Return $\vect{\theta}$.\\
    \Indm If $\alpha_1 < 0$, set $\paramv = -\paramv$. If $d$ is odd, set $\vect{\alpha} = \text{RestrictToMiddle}(\vect{\alpha}, \pi/2)$.\\
    For layer $l$ from 1 to $p-1$:\\
    \Indp Alpha Fixing: If $d$ even and $\alpha_l \lessgtr \mp \pi/4$, set $\alpha_l = \alpha_l \pm \pi/2, \beta_l = -\beta_l, \gamma_l = \text{RestrictToMiddle}(\gamma_l-\pi/2, \pi)$. \\
    Beta Fixing: If $\beta_l < 0$, set $\beta_l = \beta_l + \pi/2$, $\gamma_l = -\gamma_l$, $\beta_{l+1} = \text{RestrictToMiddle}(\beta_{l+1} - \pi/2, \pi)$ \\
    Gamma Fixing: If $\gamma_l < 0$, set $\gamma_l = \gamma_l + \pi/2, \delta_l = \delta_l$. If $d$ even, perform Alpha Fixing on layer $l+1$. If $d$ odd, set $\beta_{l+1} = -\beta_{l+1}, \gamma_{l+1} = \text{RestrictToMiddle}(\gamma_{l+1}-\pi/2, \pi)$. \\
    Delta Fixing: If $\delta_l < 0$, set $\delta_l = \delta_l + \pi/2, \gamma_{l+1} = -\gamma_{l+1}, \delta_{l+1} = \text{RestrictToMiddle}(\delta_{l+1} - \pi/2, \pi)$.\\
    \Indm Return $\paramv$.
\end{algorithm}

Intuitively, we are fixing a ``gauge'' where we force all $\beta,\gamma,\delta$ parameters except those in the last layer to be positive and restricted to $[0, \pi/2)$, and all $\alpha$ as close to $0$ as possible. At each layer, if a parameter is negative, we add $\pi/2$ and commute the induced Paulis to the next layer, subtracting $\pi/2$ from the same driver in the next layer and adjusting the intermediate parameters based on the standard commutation rules. Doing so iteratively, we continue until the second to last layer.
Figure \ref{fig:gauge_symm} demonstrates that this strategy is very effective in reducing the parameters obtained by the optimization.

This strategy may be useful for finding heuristic patterns in the optimal sets of parameters at larger QAOA depths, allowing one to choose good initial points at high $p$ without exhaustively searching the parameter landscape. We apply Protocol 1 to all combinations of $p$ and $d$ listed below, and report the gauge-fixed optimal parameters. 
\begin{center}
\begin{tabular}{ |p{0.8cm}|p{1cm}|p{0.6cm}|p{0.6cm}|p{1cm}|}
\hline
\multicolumn{5}{|c|}{$p=1$} \\
\hline
$d$ & $\vect{\alpha}$& $\vect{\beta}$ &$\vect{\gamma}$& $\vect{\delta}$ \\
\hline
1 & $\pi/8$  &  0 &  0 &   $\pi/8$ \\

3 & $0.2618$  &  0 &  0 &   $\pi/8$ \\
10 & $0.1531$  &  0 &  0 &   $\pi/8$ \\
100 & $0.0498$  &  0 &  0 &   $\pi/8$ \\
\hline
\end{tabular}
\vspace{-0.2cm}
\begin{tabular}{ |p{0.8cm}|p{2.8cm}|p{1.5cm}|p{1cm}|p{2.8cm}|}
\hline
\multicolumn{5}{|c|}{$p=2$} \\
\hline
$d$ & $\vect{\alpha}$& $\vect{\beta}$ &$\vect{\gamma}$& $\vect{\delta}$ \\
\hline
1 & (0.3279, 0.6214)  & (0, 0) &  (0, 0) &   (0.6214, 0.3279) \\

3 & (0.1819, 0.2123)  & ($\pi/4$, 0) &  (0, 0) &   (1.4188, 0.4711) \\
10 & (0.1085, 0.1261)  & ($\pi/4$, 0) &  (0, 0) &   (1.4198, 0.4678) \\
33 & (0.0614, 0.713)  & ($\pi/4$, 0) &  (0, 0) &   (1.4202, 0.4666) \\\hline
\end{tabular}
\end{center}
\begin{center}
\begin{tabular}{ |p{0.8cm}|p{4cm}|p{2cm}|p{2cm}|p{4cm}|}
\hline
\multicolumn{5}{|c|}{$p=3$} \\
\hline
$d$ & $\vect{\alpha}$& $\vect{\beta}$ &$\vect{\gamma}$& $\vect{\delta}$ \\
\hline
1 & $(0.4806, 0.5260, -0.6838)$  & $(0, 0, \pi/4)$ &  $(0, 0, -\pi/4)$ &   $(0.7716, 0.4090, \pi/4)$ \\

3 & $(0.1555, 0.2875, -0.1375)$  & $(\pi/4, 0, 0)$ &  $(0, 0, 0)$ &   $(1.3726, 1.2326, 0.8065)$ \\
10 & $(0.0966, -0.0997, 0.1750)$  & $(\pi/4, \pi/4, 0)$ &  $(0, 0, 0)$ &   $(1.4104, 0.6992, 0.3538)$ \\
33 & $(0.0541, -0.0569, 0.0977)$  & $(\pi/4, 0, 0)$ &  $(0, 0, 0)$ &   $(1.4061, 1.6979, 0.3537)$\\\hline
\end{tabular}
\end{center}
\begin{center}
\begin{tabular}{ |p{0.8cm}|p{5.7cm}|p{2.5cm}|p{1.7cm}|p{5cm}|}
\hline
\multicolumn{5}{|c|}{$p=4$} \\
\hline
$d$ & $\vect{\alpha}$& $\vect{\beta}$ &$\vect{\gamma}$& $\vect{\delta}$ \\
\hline
1 & $(0.6831, -0.2580, 0.6224, -0.5412)$  & $(\pi/4, \pi/4, 0, 0)$ &  $(0, 0, 0, 0)$ &   $(1.2135, 0.4228, 1.0670, 1.2715)$ \\

3 & $(0.2019, 0.1464, 0.1735, -0.1265)$  & $(\pi/4, \pi/4, 0, 0)$ &  $(0, 0, 0, 0)$ &   $(1.3078, 0.1804, 0.6461, 1.0604)$\\
10 & $(0.1100, 0.0823, 0.0921, -0.0641)$  & $(\pi/4, \pi/4, 0, 0)$ &  $(0, 0, 0, 0)$ &   $(0.2664, 0.1838, 0.6389, 1.0826)$ \\\hline
\end{tabular}
\end{center}
\begin{figure}[h!]
    \centering
    \includegraphics[width=0.8\linewidth]{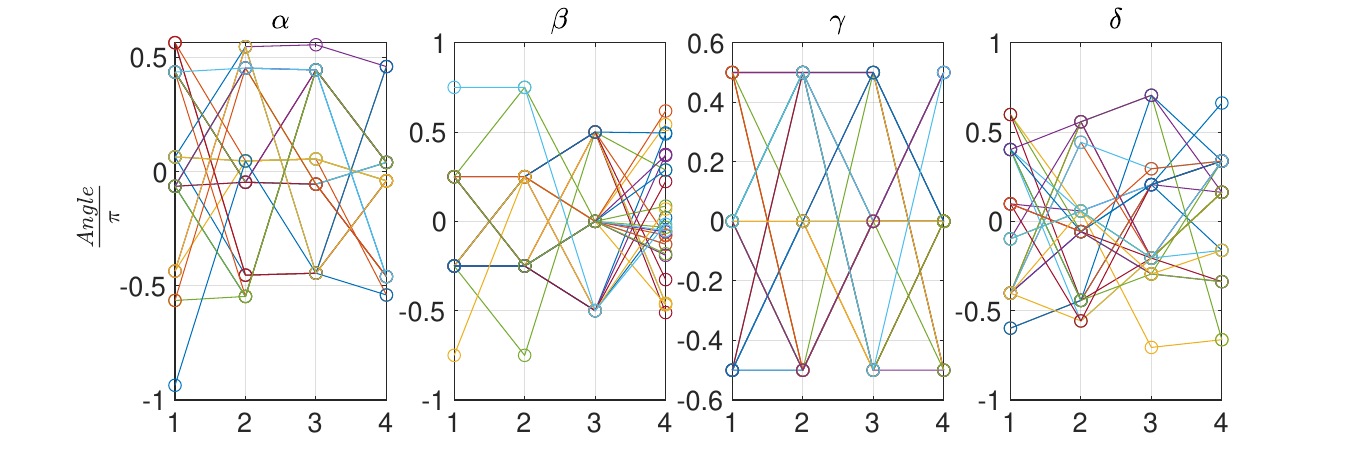}
    \includegraphics[width=0.8\linewidth]{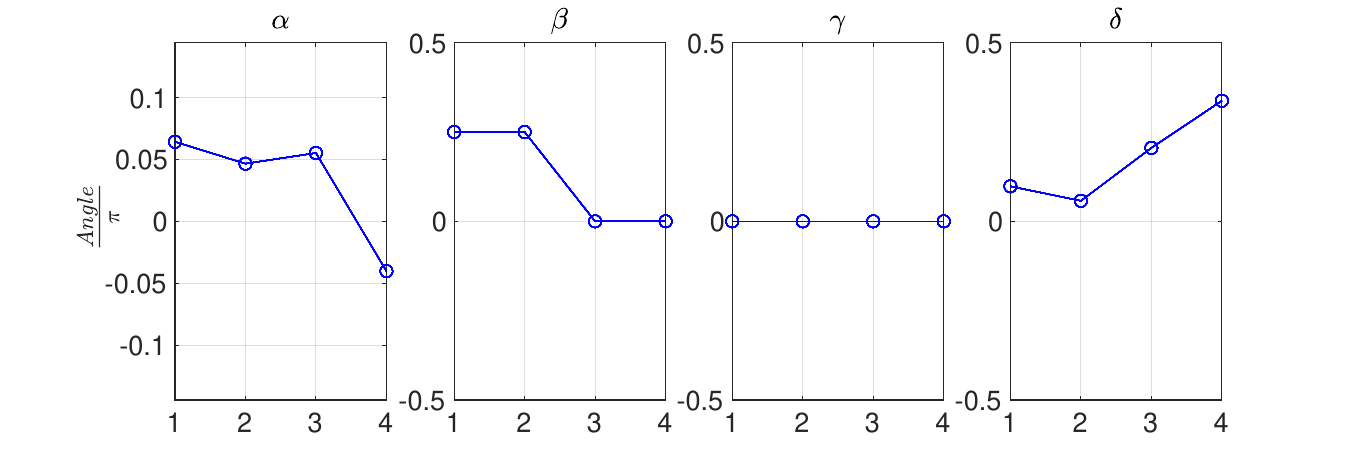}
    \caption{Optimal parameters for Formula \ref{alg:fin_iter} at $p=4$ and $d=3$. Top: Raw sets of optimal parameters yielding the same energy. Bottom: Gauge-symmetry parameter reduction applied to optimal parameters, collapsing all sets to a single one.}
    \label{fig:gauge_symm}
\end{figure}
\vspace{-2em}

\section{Comparison of simplified HamQAOA to Generalized Variants} 
\label{sec:gen_ansatze}
In this section, we benchmark the simplified HamQAOA against two versions of Algorithm \ref{GHamQAOA}.

\subsection{Spherical HamQAOA}
First, we define the Spherical HamQAOA ansatz, which is simply Algorithm \ref{GHamQAOA}, but with $\mv = \nv$ (noting that these vectors can be chosen from the full unit sphere). 
\begin{figure}[h!]
    \centering
    \includegraphics[width=\textwidth]{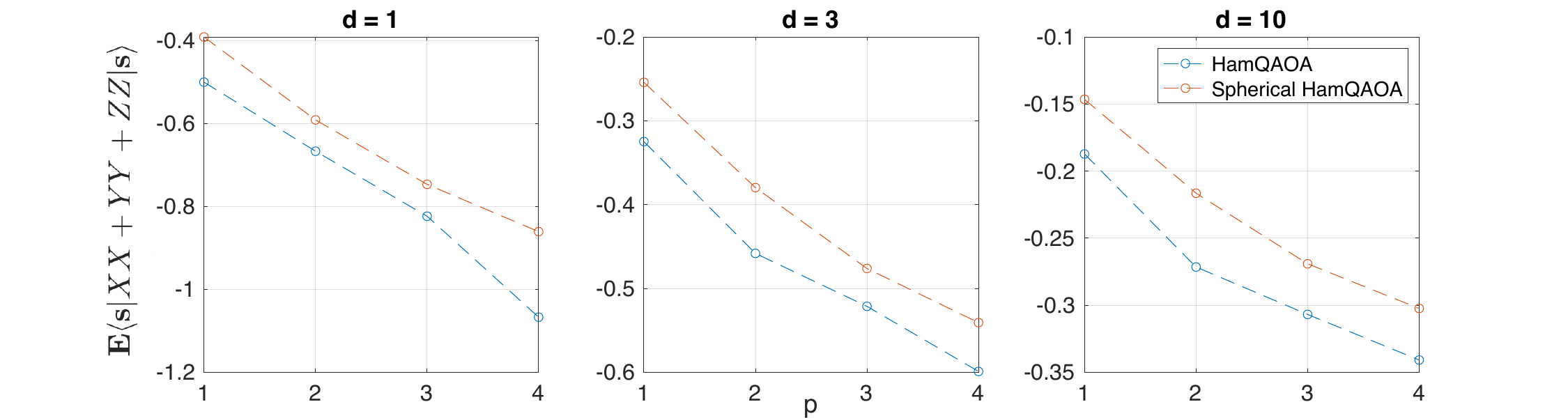}
    \caption{Optimal average-case energy of Spherical vs simplified HamQAOA on high-girth regular graphs. At all $p$ and $d$, the simplified HamQAOA achieves a better energetic minimum on the non-identity part of the QMC Hamiltonian.}
    \label{fig:sp_vs_rx}
\end{figure}
Upon applying this restriction, we can compute the average-over-$\nv$ performance of the Spherical HamQAOA on high-girth regular graphs using Formula \ref{alg:fin_iter}. If we expand
\begin{align}
\ketbra{\nv} &= \frac{1}{2} \left( I + \nv \cdot\bm{\sigma}\right) \\
e^{i\delta \nv\cdot\bm{\sigma}} &= \cos{\delta} \ I + i\sin{\delta} \ \nv\cdot\bm{\sigma}
\end{align}
we can see that $f_{\nv, \mv}(\zv)$ as defined in \ref{alg:fin_iter} is a degree $2p+1$ polynomial in $\nv$ (where $\mv=\nv$. Then we must take the expectation over choice of $\nv$ by
\begin{equation}
\bar{f}^\sigma(\zv) = \mathbb{E}_{\nv} f_{\nv}^\sigma(\zv) = \frac{1}{|S|} \sum_{\bm{s} \in S} f^\sigma_{\bm{v}}(\vect{z})
\end{equation}
where $S \subset \mathbb{S}^2$ is a spherical $(2p+1)$-design \cite{spherical_design}. Using this method of averaging and optimizing for the best Spherical HamQAOA parameters at various $d$ and $p$, we can benchmark this HamQAOA variant against the simplified HamQAOA, Algorithm \ref{HamQAOA}.

Figure \ref{fig:sp_vs_rx} suggests that maximizing the inhomogeneity in rotation between qubits by pushing the $D$ driver to opposing poles of the Bloch sphere is more efficient than using the entire Bloch sphere, as the simplified HamQAOA outperforms the Spherical HamQAOA.  
\subsection{General HamQAOA}
\begin{figure}[h!]
    \centering
    \includegraphics[width=0.8\textwidth]{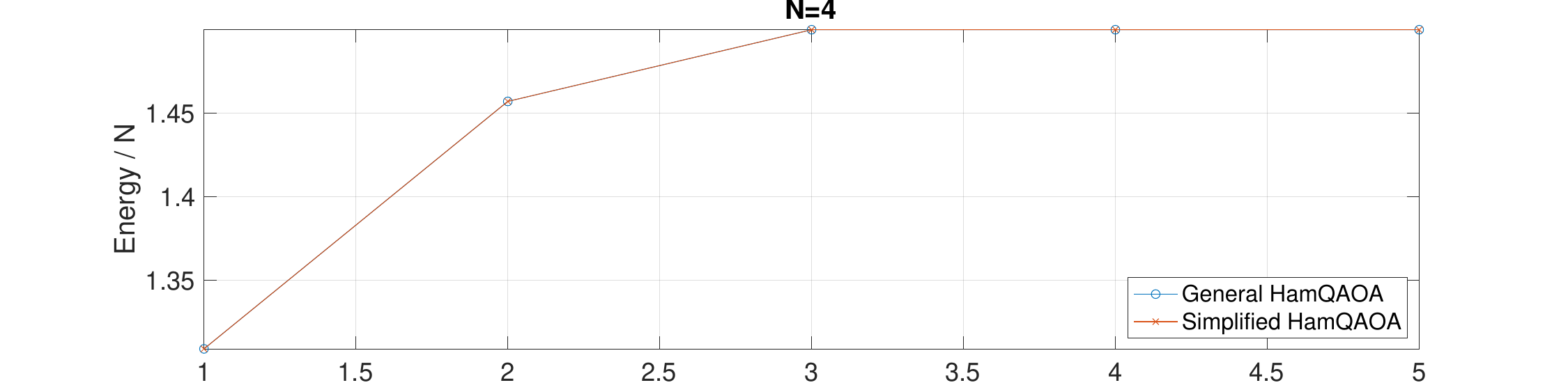}
    \includegraphics[width=0.8\textwidth]{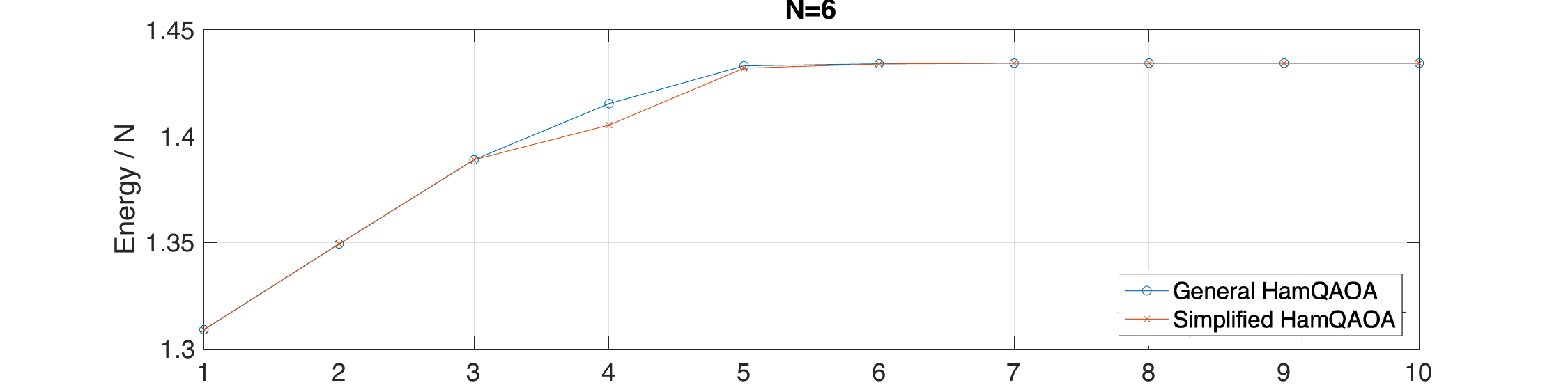}
    \includegraphics[width=0.8\textwidth]{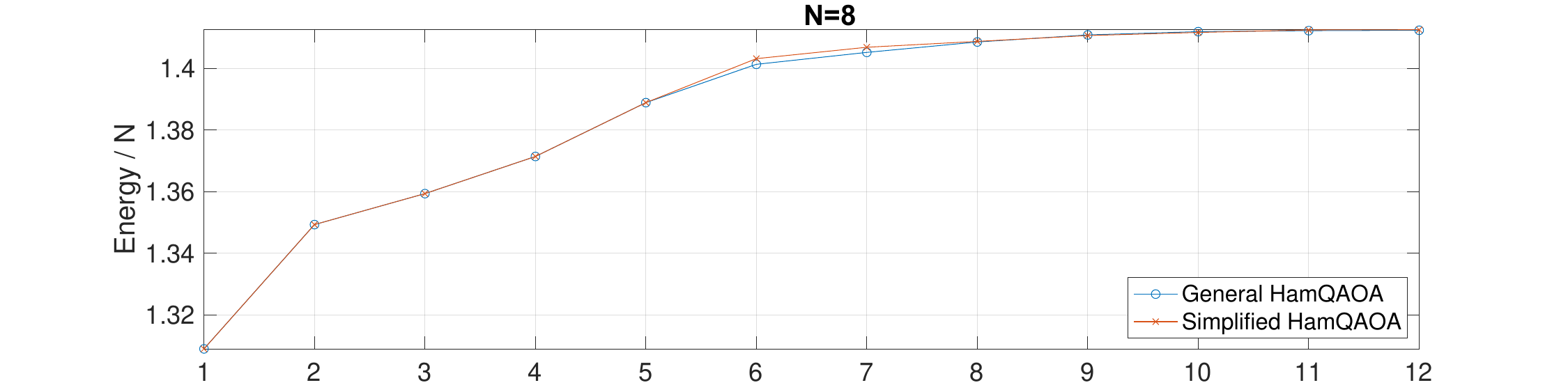}
    \caption{Energy Achieved by optimizing over both the $4p+4|V|$ parameters of the general ansatz and the $4p$ parameters of our simpler choice of HamQAOA ansatz for rings of size $4, 6, 8$ at various $p$. We find that the energy density achieved are very close at all $p$, making it advantageous to use the simplified ansatz for computational savings. At $N=8$, the simplified ansatz slightly outperforms the general one on our experiments as the tradeoff between ease of optimization and algorithm robustness tips toward the simpler ansatz.}
    \label{fig:gen_vs_ham}
\end{figure}
Next, we open up both $\mv$ and $\nv$ degrees of freedom, and consider optimizing over both the choice of Algorithm \ref{GHamQAOA} ansatz and its variational parameters. While the General HamQAOA generalizes all other possible variants, it is challenging to optimize in practice. This is because each $\nv, \mv$ can be described by two angles, leaving the ansatz with $4p + 4|V|$ variational parameters. Though classical optimization of these parameters is unfeasible except for very simple systems, we can numerically compare the General and simplified HamQAOA on small Heisenberg rings. This comparison is displayed in Figure \ref{fig:gen_vs_ham}.

We note that the general ansatz actually finds exact ground states of the $N=4$ ring at $p=3$; however, both ansatze find exact ground states of the $N=6$ ring at $p=7$. Given the optimization cost on more complex instances and the observation in Figure \ref{fig:gen_vs_ham} that the simplified HamQAOA achieves slightly better energies than the General HamQAOA on rings of only size $8$, it is evident that for QMC, our simpler ansatz will suffice, and the challenge of finding good parameters whose number scales with system size will hamper the general ansatz.

We also find that the optimization almost always pushes the vectors $\nv_v$ into the Bloch sphere's $XY$ plane, away from the $Z$-axis of our entangling gate. This supports the choice of our HamQAOA $D$ driver along the $\hat{\vect{x}}$-axis.
\vspace{-1em}

\section{Proofs of Iterative Formulas}
\label{sec:proof_iter}
Here we give the proofs of both finite-degree and infinite-degree formulas, alongside mathematical properties of the functions defined in these formulas.

\subsection{Proof of Finite-Degree Iterative Formula for HamQAOA on High-Girth Regular Graphs}
\begin{figure}[h!]
    \centering
    \includegraphics[scale = 0.35]{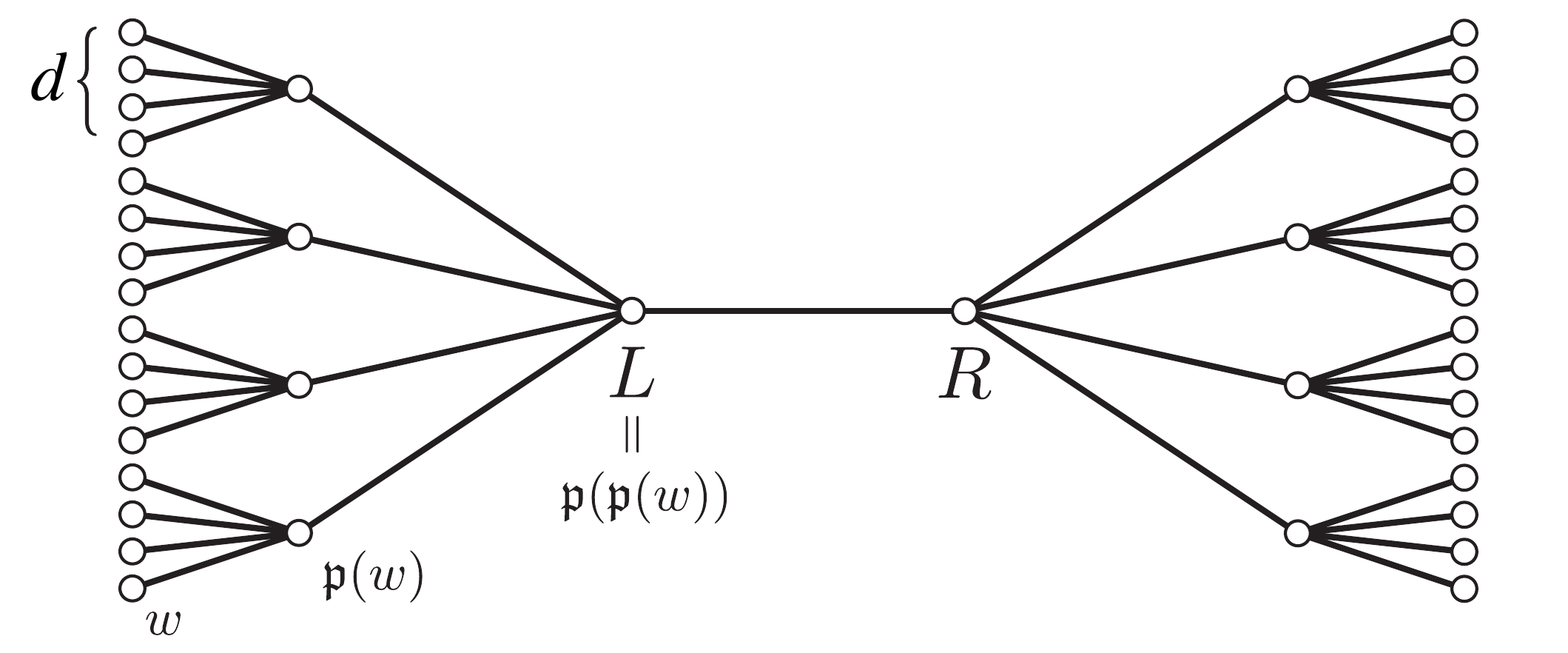}
    \caption{Local subregion of a high-girth $(d+1)$-regular graph, which looks like two depth-p trees glued at their roots. Our formula computes the expected Quantum MaxCut value across edge L-R, averaged over choice of the inhomogeneous $D$-driver. We denote the leaves of each tree as $w$, and for any vertex $v$, its parent as $\mathfrak{p}(v)$, as adapted from \cite{Leo_MaxCut}.}
    \label{fig:D_reg_tree}
\end{figure}

\begin{proof}[Proof of Formula \ref{alg:fin_iter}]
    We shall prove the formula for $p=2$ for simplicity, as it extends directly to all $p$. Since only vertices $\leq p$ edges away from $L$ and $R$ contribute to the energy across edge $LR$ prepared by the QAOA, we may restrict our attention to a subgraph which looks like Figure \ref{fig:D_reg_tree}. This subgraph has $N = 2(d^p+...+d+1)$ vertices. We look to evaluate 
    \begin{equation}
        \mathbb{E}_{\mv, \nv}\left[\bra{\mv}\hat{O}(\sigma_L, \sigma_R)\ket{\mv}\right] = \mathbb{E}\left[\bra{\mv}e^{i\alpha_1 A}E_1e^{i\alpha_2 A}E_2\sigma_L\sigma_R E_{-2}e^{-i\alpha_2 A}E_{-1}e^{-i\alpha_1 A}\ket{\mv}\right]
    \end{equation}
     where
    \begin{equation}
        E_j = \bigotimes_v E_j^{(v)} = \begin{cases}
            \bigotimes_v e^{i\beta_j X_v} e^{i\gamma_j Z_v} e^{i\delta_j s_vX_v}, \qquad &j > 0 \\
            \bigotimes_v e^{-i\delta_j s_vX_v} e^{-i\gamma_j Z_v} e^{-i\beta_j X_v}, &j < 0
        \end{cases}
    \end{equation}
    and where $\ket{\mv} = \bigotimes_v \ket{\mv_v}$, with the expectation taken over all $\mv_v$ and $\nv_v$ drawn i.i.d. from distributions $\cD_m, \cD_n$ over the Bloch sphere, respectively. 
    Since $e^{i\alpha_j A}$ is diagonal in the computational basis, we may replace it with $\sum_z \ket{\zv}e^{i\alpha_jA(z)}\bra{
    \zv
    }$.
    With this substitution, we can insert 6 complete sets in the $Z$-basis:
     \begin{align}
       \bra{\mv}\hat{O}(\sigma_L, \sigma_R)\ket{\mv} = &\sum_{\{\vect{z}^{(\cdot)} \}} \langle{\mv}|\zv^{(1)}\rangle e^{i\alpha_1 A(\zv^{(1)})} \langle\zv^{(1)}|E_1|\zv^{(2)}\rangle e^{i\alpha_2 A(\zv^{(2)})} \langle \zv^{(2)}|E_2|\zv^{(p+1)}\rangle \notag\\&\times\langle \zv^{(p+1)}| \sigma_L \sigma_R |\zv^{-(p+1)}\rangle \langle \zv^{-(p+1)}|E_{-2}|\zv^{(-2)}\rangle e^{-i\alpha_2 A(\zv^{(-2)})} \langle \zv^{(-2)}|E_{-1}|\zv^{(-1)}\rangle\notag\\&\times e^{-i\alpha_1 A(\zv^{(-1)})} \langle\zv^{(-1)}|\mv\rangle
    \end{align}
    where the sum is taken over all combinations of 6 computational basis bitstrings, each on $N$ qubits. We have upper-indexed the different complete sets for clarity in the following step. Grouping terms corresponding to each vertex, 
    \begin{align}
        \bra{\mv}\hat{O}(\sigma_L, \sigma_R)\ket{\mv}
        =& \sum_{\{\vect{z}^{(\cdot)} \}} \langle \zv^{(p+1)}|\sigma_L \sigma_R|\zv^{-(p+1)}\rangle
        \exp\left( i\alpha_1 A(\zv^{(1)}) + i\alpha_2 A(\zv^{(2)}) - i\alpha_2 A(\zv^{(-2)}) - i\alpha_1 A(\zv^{(-1)}) \right) \notag\\&\times
        \prod_v \langle \mv_v| z_v^{(1)}\rangle \langle z_v^{(1)}| E_1^{(v)} |z_v^{(2)}\rangle \langle z_v^{(2)}| E_2^{(v)} |z_v^{(p+1)}\rangle \langle z_v^{-(p+1)}| E_{-2}^{(v)} |z_v^{(-2)}\rangle \\ &\qquad\qquad\qquad\qquad\qquad\qquad\cdot\langle z_v^{(-2)}| E_{-1}^{(v)}|z_v^{(-1)}\rangle \langle z_v^{(-1)} |\mv_v\rangle
    \end{align}
   
    Here we have used the fact that the $B, C$ and $D$ drivers are sums of single-qubit terms and therefore factorize over the vertices. Then substituting our definitions of $\vect{\mathcal{A}}$ and $f$ and using the fact that $\sigma_L$ and $\sigma_R$ act on different qubits, we obtain
    \begin{equation}
        \bra{\mv}\hat{O}(\sigma_L, \sigma_R)\ket{\vect{m}} = \sum_{\{\vect{z}^{(\cdot)}\}} \langle \zv^{(p+1)}|\sigma_L \sigma_R|\zv^{-(p+1)}\rangle \exp\left(i\sum_{j=-2}^2 \mathcal{A}_jA(\zv^{(j)})\right)\prod_v f_{\mv_v, \nv_v}^{\sigma_v}(\vect{z}_v)
    \end{equation}
    where $\sigma_v = I$ for all $v\neq L, R$ and the bitstring $\vect{z}_v = (z_v^{(1)}, z_v^{(2)}, z_v^{(p+1)}, z_v^{-(p+1)}, z_v^{(-2)}, z_v^{(-1)})$ contains the indices of the complete sets which act on vertex $v$. Accordingly, we can rewrite the outer sum over length-$N$ bitstrings corresponding to complete sets on all vertices as an equivalent sum over length-$2p+2$ bitstrings corresponding to all ordered configurations of each vertex. We can further simplify by using the definition of $A$, as we know $A(\vect{z}) = \sum_{u\sim v} z_uz_v$ Since the $\vect{m}, \nv$-dependence above is contained in $f$, we may take the expectation of both sides and write:
    \begin{equation}
    \label{23}
        \mathbb{E}_{\mv, \nv}\left[\bra{\vect{m}}\hat{O}(\sigma_L, \sigma_R)\ket{\vect{m}}\right] = \sum_{\{\vect{z}_v\}} \exp\left( -i \sum_{u \sim w} \vect{\mathcal{A}} \cdot (\vect{z}_u \vect{z}_w) \right)  \prod_v \bar{f}^{\sigma_v}(\zv_v)
    \end{equation}
    Let us consider performing this sum over bit configurations of each vertex by first considering any leaf node of the tree subgraph, $w$, whose parent is $\mathfrak{p}(w)$. Since $w$ is only connected to $\mathfrak{p}(w)$, its contribution to the sum in \eqref{23} is simply
    \begin{equation}
        \sum_{\vect{z}_w} \exp\left( -i \vect{\mathcal{A}} \cdot (\vect{z}_w \vect{z}_{\mathfrak{p}(w)}) \right) \bar{f}^I(\zv_w)
    \end{equation}
    where we sum over the $2^{2p+2} = 64$ possible configurations of $w$. Each of $d$ vertices contributes this term to its parent, so when we sum over all possible configurations of the leaves, we get a term, for a fixed configuration of the parent:
    \begin{equation}
        H^{(1)}_d(\zv_{\mathfrak{p}(w)}) = \left( \sum_{\vect{z}_w} \exp( -i \vect{\mathcal{A}} \cdot (\vect{z}_w \vect{z}_{\mathfrak{p}(w)})) \bar{f}^I(\vect{z}_w)\right)^d
    \end{equation}
    Then the only remaining vertex to which $\mathfrak{p}(w)$ is connected is its parent, for whom the same concept applies. Packaging the contributions from its children into $H^{(1)}_d$, we can sum over configurations of $\mathfrak{p}(w)$ using \eqref{23}:
    \begin{equation}
        \sum_{\vect{z}_{\mathfrak{p}(w)}} \exp( -i \vect{\mathcal{A}} \cdot (\vect{z}_{\mathfrak{p}(w)} \vect{z}_{\mathfrak{p}(\mathfrak{p}(w))})) H_d^{(1)}(\vect{z}_{\mathfrak{p}(w)}) \bar{f}^I(\vect{z}_{\mathfrak{p}(w)}))
    \end{equation}
    Once again, $\mathfrak{p}(\mathfrak{p}(w))$ has $d$ children each yielding the same contribution; hence, for one of its fixed configurations,
    \begin{equation}
        H_d^{(2)}(\zv_{\mathfrak{p}(\mathfrak{p}(w))}) = \left(\sum_{\vect{z}_{\mathfrak{p}(w)}} \exp( -i \vect{\mathcal{A}} \cdot (\vect{z}_{\mathfrak{p}(w)} \vect{z}_{\mathfrak{p}(\mathfrak{p}(w))})) H_d^{(1)}(\vect{z}_{\mathfrak{p}(w)}) \bar{f}^I(\vect{z}_{\mathfrak{p}(w)}))\right)^d
    \end{equation}
    Here we note that at $p=2$, the vertex $\mathfrak{p}(\mathfrak{p}(w))$ is one of our roots — WLOG we let it be $L$. By symmetry, $R$ will have the same contribution. Then to evaluate \eqref{23} we now need only evaluate the sum over the $2p+2 = 6$ bits each of $L$ and $R$, leaving us with
    \begin{equation}
         \mathbb{E}_{\mv, \nv}\left[\bra{\vect{m}}\hat{O}(\sigma_L, \sigma_R)\ket{\vect{m}}\right] = \sum_{\vect{z}_L, \vect{z}_R} \exp( -i \vect{\mathcal{A}} \cdot (\vect{z}_L \vect{z}_R)) H_d^{(2)}(\vect{z}_L) H_d^{(2)}(\vect{z}_R) \bar{f}^{\sigma_L}(\vect{z}_L) \bar{f}^{\sigma_R}(\vect{z}_R)
    \end{equation}
    as our final answer. Naturally, the formula can be extended to any $p$, defined accordingly as 
    \begin{align}
        H_d^{(0)}(\vect{z}) &= 1 \\
        H_d^{(k)}(\vect{z}) &= \left( \sum_{\substack{\vect{x} \\ x^{(p+1)}=x^{-(p+1)}}} \exp( -i \vect{\mathcal{A}} \cdot (\vect{x} \vect{z})) H_d^{(k-1)}(\vect{x}) \bar{f}^I(\vect{x}) \right)^d
    \end{align}
    and with $H_d^{(p)}$ appearing in place of $H_d^{(2)}$ in the final result. Note here that $\bra{x^{(p+1)}}I\ket{x^{-(p+1)}}$, which appears implicitly in $\bar{f}^I(\xv)$, is nonzero only if $x^{(p+1)}=x^{-(p+1)}$, hence the restriction of the above sum. As each of $p$ levels of the formula requires summing over $2^{2p+1}$ bitstrings (with the aforementioned restriction) for each of $2^{2p+1}$ entries which must be explicitly calculated. The final sum, dominating the computation, is over $O(16^p)$ strings , so the time complexity of evaluating this formula is given by $O(16^p)$. By storing only the most recently computed $H_d^{(k)}$ we need memory of order $O(4^p)$.
\end{proof}
\vspace{-2em}
\subsection{General Properties of Iterative Formula Functions}
We now prove a few lemmas that will be instrumental to the proof of the $D\rightarrow\infty$ formula. First, we prove generally useful properties of the functions in Algorithm \ref{alg:fin_iter}.
\begin{lemma}\label{lem:sum_f_1}
    For the function $\Bar{f}^\sigma(\cdot)$ at depth $p$ as defined in Formula \ref{alg:fin_iter},
    \begin{equation}
        \sum_{\vect{z}\in \{\pm 1\}^{2p+2}} \Bar{f}^I(\vect{z}) = 1
    \end{equation}
\end{lemma}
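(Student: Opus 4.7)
The plan is to prove this by summing out the index variables one at a time using resolutions of the identity, then using unitarity to collapse the resulting product. Since the expectation commutes with the finite sum, it suffices to show the identity before taking the expectation, i.e.\ that $\sum_{\vect{z}} f^I_{\mv_v,\nv_v}(\vect{z}) = 1$ for any fixed $\mv_v, \nv_v$.

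First, I would specialize the definition \eqref{fin_iter_f_func} to $\sigma = I$, using $\bra{z_{p+1}}I\ket{z_{-(p+1)}} = \delta_{z_{p+1},z_{-(p+1)}}$. The resulting expression is a chain of matrix elements of the form $\braket{\mv_v}{z_1}\bra{z_1}E_1\ket{z_2}\cdots \bra{z_p}E_p\ket{z_{p+1}}\delta_{z_{p+1},z_{-(p+1)}}\bra{z_{-(p+1)}}E_{-p}\ket{z_{-p}}\cdots \bra{z_{-2}}E_{-1}\ket{z_{-1}}\braket{z_{-1}}{\mv_v}$. I would sum over the ``outer'' indices $z_1$ and $z_{-1}$ first, which gives $\bra{\mv_v}E_1\ket{z_2}$ and $\bra{z_{-2}}E_{-1}\ket{\mv_v}$ respectively via $\sum_{z_1}\ket{z_1}\bra{z_1}=I$. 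Iterating this for $z_2,\ldots,z_p$ and $z_{-2},\ldots,z_{-p}$, the expression collapses to
\begin{equation}
    \sum_{\vect{z}} f^I_{\mv_v,\nv_v}(\vect{z}) = \sum_{z_{p+1}} \bra{\mv_v} E_1 E_2 \cdots E_p \ket{z_{p+1}}\bra{z_{p+1}} E_{-p} \cdots E_{-1} \ket{\mv_v},
\end{equation}
where the Kronecker delta enforces $z_{-(p+1)} = z_{p+1}$. One more resolution of the identity over $z_{p+1}$ then gives $\bra{\mv_v} E_1 E_2 \cdots E_p E_{-p} \cdots E_{-1} \ket{\mv_v}$.

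Finally, I would invoke the definition $E_{-j} = E_{|j|}^\dagger$ from Formula~\ref{alg:fin_iter}, which means $E_p E_{-p} = E_p E_p^\dagger = I$, and iterating this cancellation inward shows $E_1\cdots E_p E_{-p}\cdots E_{-1} = I$. Hence $\sum_{\vect{z}} f^I_{\mv_v,\nv_v}(\vect{z}) = \braket{\mv_v}{\mv_v} = 1$, where I use that $\ket{\mv_v}$ is a normalized single-qubit state. Taking $\mathbb{E}_{\mv_v,\nv_v}$ on both sides and interchanging with the finite sum yields $\sum_{\vect{z}}\bar{f}^I(\vect{z}) = 1$.

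There is no real obstacle here: the proof is essentially a bookkeeping argument that unfolds the definition of $f^I$, uses unitarity of the single-qubit layer operators $E_j$, and uses normalization of $\ket{\mv_v}$. The only thing to be careful about is the indexing convention in \eqref{idx_convention}, and ensuring that the insertion of $\sigma = I$ at the ``turning point'' of the chain really does glue the forward and backward halves into a single telescoping product, which is what allows the unitary cancellation to proceed symmetrically from the center outward (or, equivalently in this presentation, from the outside in).
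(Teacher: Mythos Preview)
Your proposal is correct and follows essentially the same approach as the paper: both use the resolution of the identity $\sum_{z_i}\ketbra{z_i}=I$ to collapse the chain, invoke unitarity of the $E_j$ via $E_{-j}=E_j^\dagger$ to telescope the product to the identity, and finish with $\braket{\mv_v}{\mv_v}=1$. The only cosmetic difference is that you fix $\mv_v,\nv_v$ first and take the expectation at the end, whereas the paper pulls the expectation outside by linearity from the start; the content is identical.
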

\begin{proof}
    \normalfont{}
    $E_j(s, \beta_j, \gamma_j, \delta_j) = e^{i\beta_j X}e^{i\gamma_j Z}e^{is\delta_j X}$ is unitary by construction. 
    By linearity of expectation,
    \begin{align}
        \sum_{\vect{z}\in \{\pm 1\}^{2p+2}} \Bar{f}^I(\vect{z}) = \mathbb{E}_{\mv_v, \nv_v}\Big{[}&\bra{\mv_v}\sum_{z_1}\ketbra{z_1}E_1\sum_{z_2}\ketbra{z_2}E_2\hdots \\ &E_p\sum_{z_{p+1}}\ketbra{z_{p+1}}I\sum_{z_{-(p+1)}} \ketbra{z_{-(p+1)}}E_p^\dagger\hdots \sum_{z_{-1}} \ket{z_{-1}}\braket{z_{-1}}{\mv_v}\Big{]}
    \end{align}
    Since each sum $\sum_{z_i} \ketbra{z_i} = I$, all such sums drop out, leaving us with
    \begin{equation}
        \sum_{\vect{z}\in \{\pm 1\}^{2p+2}} \Bar{f}^I(\vect{z}) = \mathbb{E}_s\Big{[}\bra{\mv_v}E_1E_2\hdots E_p E_p^\dagger\hdots E_2^\dagger E_1^\dagger\ket{\mv_v}\Big{]} = \mathbb{E}_{\mv_v}[\braket{\mv_v}{\mv_v}] = 1
    \end{equation}
\end{proof}
\vspace{-1em}
\noindent Next we define an important symmetry transformation of $f$ and $H$. Let $B$ be the set of all $(2p+2)$-bit strings, indexed according to our convention \eqref{idx_convention}. 
We define the following subset
\begin{equation}
B_0 = \{\av \in B : a_{-r} = a_r \text{ for all } 1\le r \le p+1\}.
\end{equation}

For any $\av \in B$, we define the $T(\av)$ to be the largest positive index $T$ such that $a_{T} \neq a_{-T}$ if $\av\not \in B_0$, and 0 if $\av \in B_0$.
More formally,
\begin{equation}
    T(\av) = \max \big( \{j :  a_j \neq a_{-j}]\} \cup \{0\}\big).
\end{equation}
For any bit $\av\not\in B_0$, let $\av'\not\in B_0$ be the bitstring with
\begin{equation}
    a'_{\pm j} = \begin{cases}
        a_{\pm j}, & 1 \le j \le T(\av) \\
        -a_{\pm j}, & T(\av) +1 \le j  \le p+1
    \end{cases}
\end{equation}
Note $\av'=\av$ if $T(\av)=p+1$, and $\av'\neq \av$ if $1\le T(\av)\le p$. With these definitions, we obtain a number of symmetries.

\begin{lemma}
\label{lem:prime_op}
For any $\av\not\in B_0$ and $\mv_v \in \mathbb{S}^2$, we have 
    \begin{equation}
    f^I_{\mv_v, \nv_v}(\av') = -f^I_{\mv_v}(\av), \qquad
    f^Z_{\mv_v}(\av') = f^Z_{\mv_v}(\av) 
    \end{equation}
\end{lemma}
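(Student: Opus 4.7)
The plan is to exploit the fact that in the regime where this lemma is to be applied (the $d\to\infty$ formula), each $E_j$ is a pure $X$-rotation: with $\gamma_j=0$ and $\nv_v$ supported on $\pm\hat{\vect{x}}$, one has $E_j^{(v)}=e^{i\theta_j X}$ with $\theta_j=\beta_j+s_v\delta_j$. Consequently, the $Z$-basis matrix element $\bra{z'}E_j\ket{z}$ depends only on the product $zz'$, equalling $\cos\theta_j$ when $zz'=+1$ and $i\sin\theta_j$ when $zz'=-1$; the same holds for $E_{-j}=E_j^\dagger$ via complex conjugation. This lets me factor $f^\sigma_{\mv_v,\nv_v}(\av)$ into (i) boundary overlaps $\braket{\mv_v}{a_1}\braket{a_{-1}}{\mv_v}$, (ii) local products $c_j := a_j a_{j+1}$ for $j=\pm 1,\dots,\pm p$ controlling each $E_{\pm j}$ factor, and (iii) a central factor $\bra{a_{p+1}}\sigma\ket{a_{-(p+1)}}$ that depends on $c_0:=a_{p+1}a_{-(p+1)}$ together with $a_{p+1}$ itself when $\sigma=Z$.

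Next I would track how each piece changes under $\av\mapsto \av'$. Because $T(\av)\ge 1$, the bits $a_{\pm 1}$ are not flipped, so the boundary overlap is invariant for every $\mv_v\in\bbS^2$. For $j\neq T$, either both or neither of the bits entering $c_j$ flips (by the definition of the transformation), so each such $c_j$ is unchanged; the same holds for $c_0$. The only affected products are $c_T$ and $c_{-T}$, each of which has exactly one endpoint flipped, so both change sign. The central factor is invariant under the simultaneous flip of $a_{p+1}$ and $a_{-(p+1)}$ when $\sigma=I$, but picks up an extra sign when $\sigma=Z$ because $a_{p+1}\mapsto -a_{p+1}$.

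The core combinatorial identity I need is $c_T=-c_{-T}$ whenever the statement is nontrivial. Writing $c_T/c_{-T}=(a_T/a_{-T})(a_{T+1}/a_{-(T+1)})$, the first factor equals $-1$ by definition of $T(\av)$ as the largest index with $a_T\neq a_{-T}$, while the second equals $+1$: if $T+1\le p$ this is the maximality of $T$, and if $T+1=p+1$ it follows because otherwise $\bra{a_{p+1}}\sigma\ket{a_{-(p+1)}}=0$ for $\sigma\in\{I,Z\}$ and both sides of the claim vanish (the edge case $T=p+1$ gives $\av'=\av$, so the claim is trivial). With $c_T=-c_{-T}$ in hand, the paired contribution $\bra{a_T}E_T\ket{a_{T+1}}\,\bra{a_{-(T+1)}}E_{-T}\ket{a_{-T}}$ is of the form $\cos\theta_T\cdot\overline{i\sin\theta_T}$ (or its complex conjugate), and swapping $(c_T,c_{-T})\to(-c_T,-c_{-T})$ interchanges the roles of $\cos\theta_T$ and $i\sin\theta_T$, flipping the overall sign.

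Assembling the pieces, the $c_T,c_{-T}$ swap contributes a factor of $-1$ to $f^\sigma(\av')/f^\sigma(\av)$. For $\sigma=I$ nothing else changes, giving $f^I(\av')=-f^I(\av)$. For $\sigma=Z$ the additional sign from the central factor cancels the first, yielding $f^Z(\av')=f^Z(\av)$. I expect the main obstacle to be conceptual rather than computational: correctly identifying which $c_j$'s are affected by the partial flip and establishing the sign-pairing $c_T=-c_{-T}$ from the maximality definition of $T(\av)$; once that combinatorial step is isolated, the analytic part collapses to a single line of sine/cosine bookkeeping.
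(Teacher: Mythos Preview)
Your argument is correct under the extra hypothesis that each $E_j$ is a pure $X$-rotation, but that is not what the lemma asserts and not what the downstream uses assume: the statement ranges over arbitrary $\mv_v\in\mathbb{S}^2$ (and, by context, arbitrary $\gamma_j$ and $\nv_v$), and even the $d\to\infty$ formula where you say the lemma is applied keeps $\cD_n$ unrestricted, so $E_j=e^{i\beta_j X}e^{i\gamma_j Z}e^{i\delta_j\,\nv_v\cdot\vect{\sigma}}$ is in general an arbitrary $SU(2)$ element even when $\gamma_j=0$. The reduction to the products $c_j=a_ja_{j+1}$ is exactly the step that fails: for a generic single-qubit rotation $U_j=\exp(i\tau_j\,\hat r_j\cdot\vect{\sigma})$ one has $\langle z|U_j|z\rangle=\cos\tau_j+iz\,r_j^z\sin\tau_j$, which depends on $z$ itself whenever $r_j^z\neq 0$. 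So knowing which $c_j$'s flip under $\av\mapsto\av'$ no longer determines how $f^\sigma$ changes.

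The paper's proof avoids this by never separating the forward and backward factors at a given layer. It writes the combined rotation at layer $j$ as a single $U_j\in SU(2)$, pairs $\langle a_j|U_j|a_{j+1}\rangle$ with $\langle a_{-(j+1)}|U_j^\dag|a_{-j}\rangle$, and checks two short identities valid for \emph{any} $U_j$: the pair is unchanged under $\av\mapsto\av'$ for each interior layer $j>T$, and at the single boundary layer $j=T$ it acquires exactly one minus sign. Your combinatorics about which bits flip, your treatment of the central $\sigma$ factor, and your handling of the $T=p{+}1$ edge case are all correct; what is missing is replacing the $c_j$-only bookkeeping by this pairing step, which is what carries the sign analysis through without the pure-$X$ assumption.
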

\begin{proof}
We denote the combined single-qubit rotation in the $j$-th layer as $U_j = \exp(i\tau_j \hat{r}_j\cdot \vec\sigma)$.
Then
\begin{equation} \label{eq:fprod}
    f_{\vect{m}_v}^\sigma(\av)
    = \braket{\mv_v}{a_1}\braket{a_{-1}}{\mv_v} \braket{a_{p+1}}{\sigma |a_{-(p+1)}}  
    \prod_{j=1}^p \Big[\braket{a_j}{U_j|a_{j+1}} \langle{a_{-(j+1)}}|U_j^\dag |{a_{-j}}\rangle  \Big]
\end{equation}
Note for any $j$, we have
\begin{equation}
\braket{a}{U_j|b} = 
    \begin{cases}
    \cos\tau_j + i br_j^z \sin \tau_j, & a=b \\
    i\sin\tau_j (r_j^x + i b r_j^y), & a\neq b
    \end{cases}
\end{equation}
where  $a,b\in \{+1,-1\}$.

Note we only need to consider $\{\av: 1\le T(\av)\le p\}$, since $\Bar{f}^I(\av)=\Bar{f}^Z(\av)=0$ if $T(\av)=p+1$.
For these $\av$, under $\av\to\av'$, we note that the products of interior terms, corresponding to $j\ge T(\av)+1$ in \eqref{eq:fprod}, remain the same. This is because for any $a\in\{\pm1\}$,
\begin{align*}
\braket{a}{U_j|a}\langle{a}|{U_j^\dag |a}\rangle &= 
    (\cos\tau_j + i ar_j^z \sin \tau_j)(\cos\tau_j - i ar_j^z \sin \tau_j) = \cos^2\tau_j + (r_j^z)^2\sin^2\tau_j\\
    &=
    \braket{{-}a}{U_j|{-}a}\langle{{-}a}|{U_j^\dag |{-}a}\rangle, \\
\langle{a}|{U_j|{-}a}\rangle\langle{{-}a}|{U_j^\dag |a}\rangle &= 
    i\sin\tau_j (r_j^x - i a r_j^y) \times -i\sin\tau_j (r_j^x + i a r_j^y) = \sin^2\tau_j [(r_j^x)^2+(r_j^y)^2]\\
    &=
    \langle{-a}|{U_j|a}\rangle\langle{a}|{U_j^\dag |{-}a}\rangle .
\end{align*}
The product of boundary terms, corresponding to $j=T(\av)$ in \eqref{eq:fprod}, picks up an overall negative sign, since
\begin{align*}
\braket{a}{U_j|a}\langle{a}|{U_j^\dag |{-}a}\rangle &= (\cos\tau_j + ia r_j^z\sin\tau_j)\times [-i\sin\tau_j (r_j^x-iar_j^y)]\\
&= -\braket{a}{U_j|{-}a}\langle{-a}|{U_j^\dag |{-}a}\rangle.
\end{align*}
All other terms remain unchanged under $\av \rightarrow \av'$. Note that the term $\braket{\mv}{a_1}\braket{a_{-1}}{\mv}$ is never changed under this map since $T(\av) \geq 1$ and only indices greater than $T(\av)$ are flipped. 
Since $\braket{a_{p+1}}{I|a_{-(p+1)}} = \braket{-a_{p+1}}{I|{-}a_{-(p+1)}} $, we have
\begin{equation*}
    \Bar{f}^I(\av') = - \Bar{f}^I(\av).
\end{equation*}
Similarly, since $\braket{a_{p+1}}{Z|a_{-(p+1)}} = -\braket{-a_{p+1}}{Z|{-}a_{-(p+1)}} $, we have
\begin{equation*}
    \Bar{f}^Z(\av') = \Bar{f}^Z(\av).
\end{equation*}
\end{proof}
\vspace{-2em}
By employing the antisymmetry of $\Bar{f}^I$ under this transformation, we can prove a symmetry of $H_D$.
\begin{lemma} \label{lem:H-D-prop}
For all $0\le m \le p$, we have
\begin{align}
    H_d^{(m)}(\av) &= 1 &\textnormal{ if } T(\av)=0, \\
    H_d^{(m)}(\av') &= H_d^{(m)}(\av)  &\textnormal{ if } T(\av)>0
\end{align}
\end{lemma}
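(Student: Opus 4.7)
I will prove both assertions simultaneously by induction on $m$. The base case $H_d^{(0)}\equiv 1$ is immediate. For the inductive step, define
\[
G(\av) := \sum_{\xv:\, x_{p+1} = x_{-(p+1)}} \exp\!\bigl(-i\vect{\mathcal{A}}\cdot(\xv\av)\bigr)\, H_d^{(m-1)}(\xv)\, \bar{f}^I(\xv),
\]
so that $H_d^{(m)}(\av) = G(\av)^d$; since raising to the $d$-th power preserves both asserted identities, it suffices to establish them for $G$. The sum splits according to $T(\xv)$: the diagonal piece $\xv \in B_0$ (where the inductive hypothesis gives $H_d^{(m-1)}(\xv) = 1$) and the paired piece $1 \le T(\xv) \le p$, noting that $T(\xv) = p+1$ is automatically excluded because $\bar{f}^I$ contains the factor $\bra{x_{p+1}}I\ket{x_{-(p+1)}}$. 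On the paired piece, the inductive hypothesis $H_d^{(m-1)}(\xv') = H_d^{(m-1)}(\xv)$ combined with Lemma \ref{lem:prime_op}'s antisymmetry $\bar{f}^I(\xv') = -\bar{f}^I(\xv)$ organizes each $\{\xv,\xv'\}$ pair for comparison.

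For Part 1, I reorganize the exponent as $\vect{\mathcal{A}}\cdot(\xv\av) = \sum_{j=1}^p \mathcal{A}_j a_j (x_j - x_{-j})$, using $\mathcal{A}_{-j} = -\mathcal{A}_j$ and $a_j = a_{-j}$ (the latter because $\av \in B_0$). This vanishes on $B_0$ and is unchanged under $\xv \to \xv'$, since that flip only touches indices where $x_j = x_{-j}$; hence every pair contributes zero and $G(\av) = \sum_{\xv \in B_0}\bar{f}^I(\xv)$. Running the same pairing on the identity $\sum_{\xv}\bar{f}^I(\xv) = 1$ from Lemma \ref{lem:sum_f_1}, together with the automatic vanishing of $T(\xv) = p+1$ terms, forces $\sum_{\xv \in B_0}\bar{f}^I(\xv) = 1$, so $G(\av) = 1$.

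For Part 2 with $t = T(\av) \ge 1$, the $B_0$-piece is handled the same way: an index-wise check shows $\vect{\mathcal{A}}\cdot(\xv\av) = \vect{\mathcal{A}}\cdot(\xv\av')$ whenever $\xv \in B_0$, because the $|j| > t$ contributions drop out (where $a_j = a_{-j}$) while the $|j| \le t$ contributions are unchanged by $\av \to \av'$. On the paired piece I use the algebraic identity
\[
e^{-i\theta_1} - e^{-i\theta_2} - e^{-i\theta_3} + e^{-i\theta_4} = \bigl(e^{-i\theta_1} - e^{-i\theta_3}\bigr)\bigl(1 - e^{i(\theta_1 - \theta_2)}\bigr),
\]
valid whenever $\theta_1 + \theta_4 = \theta_2 + \theta_3$, with $\theta_1,\ldots,\theta_4$ denoting the four values of the exponent on $\{\xv,\xv'\} \times \{\av,\av'\}$. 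The symmetry $\theta_1 - \theta_2 = \theta_3 - \theta_4$ follows from the definitions. A direct expansion yields $\theta_1 - \theta_3 = 2\sum_{|j|>t}\mathcal{A}_j x_j a_j$, and when $T(\xv) \le t$ both $x_j = x_{-j}$ and $a_j = a_{-j}$ hold throughout $|j| > t$, making this sum collapse pairwise to zero; such pairs therefore cancel pointwise.

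The main obstacle, where I expect the real work, is the complementary case $T(\xv) = s > t$, in which $\theta_1 \ne \theta_3$ and individual pairs do not cancel. My plan there is to perform the change of variables $\xv \mapsto \phi_t(\xv)$ (flipping all bits with $|j| > t$) inside $G(\av')$, using the easily verified identities $\vect{\mathcal{A}}\cdot(\phi_t(\xv)\av') = \vect{\mathcal{A}}\cdot(\xv\av)$ and $T(\phi_t(\xv)) = T(\xv)$. Matters then reduce to proving $\bar{f}^I(\phi_t(\xv)) = \bar{f}^I(\xv)$ on the relevant bitstrings, which I aim to establish by rewriting $G$ in operator form $\mathbb{E}_{\mv,\nv}[\bra{\mv} V(\av_+) V(\av_-)^\dag \ket{\mv}]$ and invoking the $X$-conjugation invariances available in the infinite-degree setting of Formula \ref{alg:d_inf_iter}: with $\gamma_j = 0$, $\nv_v$ along the $\hat{\vect{x}}$ axis, and $\ket{\mv_v}$ an $X$-eigenstate, every single-qubit gate $E_j$ satisfies $X E_j X = E_j$, so the $X$ operators generated by the change of variables propagate through and absorb into the $\mv$-boundary. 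Carefully tracking these $X$ insertions through the $s > t$ case, and confirming that no extra signs appear, is the technical heart of the argument.
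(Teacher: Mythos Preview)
Your overall setup matches the paper's: both arguments proceed by induction, split the sum according to $T(\xv)$, pair $\xv$ with $\xv'$ using Lemma~\ref{lem:prime_op} together with the inductive hypothesis, and exploit the key fact that $\vect{\mathcal{A}}\cdot(\xv\av)$ only sees indices $|j|\le \max\{T(\xv),T(\av)\}$. Part~1 and the $T(\xv)\le t$ portion of Part~2 are correct and essentially identical to the paper.

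The gap is in the ``main obstacle'' case $T(\xv)=s>t$. There is no obstacle: the very factorization you wrote down already kills it, via the \emph{other} factor. Compute $\theta_1-\theta_2=\vect{\mathcal{A}}\cdot\bigl((\xv-\xv')\av\bigr)=2\sum_{|j|>s}\mathcal{A}_j x_j a_j$. For $|j|>s$ you have $x_j=x_{-j}$ by definition of $s$, and since $s>t$ you also have $a_j=a_{-j}$; together with $\mathcal{A}_{-j}=-\mathcal{A}_j$ this collapses pairwise to zero, exactly as in your $\theta_1-\theta_3$ computation with the roles of $\xv$ and $\av$ swapped. Hence $\theta_1=\theta_2$, so $1-e^{i(\theta_1-\theta_2)}=0$ and every $s>t$ pair contributes nothing to $G(\av)-G(\av')$. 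This is precisely the content of the paper's identity $\Av\cdot(\av\bv')=\Av\cdot(\av\bv)$ when $T(\bv)\ge T(\av)$; the paper uses it to show these terms already vanish in $G(\av)$ itself (and in $G(\av')$), before taking the difference.

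Your proposed workaround via $\phi_t$ and $X$-conjugation should be dropped: it is unnecessary, and more seriously it invokes hypotheses ($\gamma_j=0$, $\nv_v$ along $\hat{\vect{x}}$, $X$-eigenstate initial state) that are not part of the lemma. The lemma is stated and used in full generality (it feeds into Lemma~\ref{lem:fH=1} before any parameter restrictions are imposed), so a proof restricted to the infinite-degree parameter regime would not establish the statement as written.
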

\begin{proof}
We prove by induction on $m$. The base case $m=0$ follows from the definition $H_d^{(p+1)}(\av)=1$ for all $\av\in B$.
For $m>0$, we can break the sum over $\bv$ in the definition of $H_{d}^{(m)}(\av)$ into pieces as
\begin{align}
     H_d^{(m)}(\av) &= \bigg(\sum_{\bv:~ T(\bv) \le T(\av)} \Bar{f}^I(\bv) H_d^{(m-1)}(\bv) \exp{\Big[ {\textstyle -\frac{i}{\sqrt{d}} }\Av \cdot (\av \bv) \Big]} \nonumber \\
    &\quad \quad + \frac12 \sum_{\bv :~ T(\bv) > T(\av)} \Bar{f}^I(\bv) H_d^{(m-1)}(\bv) \bigg[ \exp{\Big[ {\textstyle -\frac{i}{\sqrt{d}}}\Av \cdot (\av \bv) \Big]} - \exp{\Big[ {\textstyle -\frac{i}{\sqrt{d}} }\Av \cdot (\av \bv') \Big]} \bigg] \bigg)^d
    \label{eq:H_D_expanded}
\end{align}
where in the last line we used the $\Bar{f}^I(\bv')=-\Bar{f}^I(\bv)$ and the inductive assumption that $H_D^{(m-1)}(\bv')=H_D^{(m-1)}(\bv)$.

We first show that the second sum evaluates to zero. To see this, note
\begin{equation*}
    \Av\cdot(\av\bv) = \sum_{r=1}^p \alpha_r (a_r b_r - a_{-r}b_{-r}) = \sum_{r=1}^{\max\{T(\av), T(\bv)\} } \alpha_r (a_r b_r - a_{-r} b_{-r}).
\end{equation*}
When $T(\bv) \ge T(\av)$, we have
\begin{equation} \label{eq:Aabprime-ident}
    \Av\cdot(\av\bv') = \sum_{r=1}^{\max\{T(\av), T(\bv')\} } \alpha_r (a_r b_r' - a_{-r} b_{-r}') = \sum_{r=1}^{\max\{T(\av), T(\bv)\} } \alpha_r (a_r b_r - a_{-r} b_{-r}) =\Av\cdot(\av\bv)
\end{equation}
So indeed the second sum in \eqref{eq:H_D_expanded} vanishes.

This gives us a simplified form of $H_d^{(m)}(\av)$ as
\begin{equation}
    H_d^{(m)}(\av) = \bigg(\sum_{\bv:~ T(\bv) \le T(\av)} \Bar{f}^I(\bv) H_d^{(m-1)}(\bv) \exp{\Big[ {\textstyle -\frac{i}{\sqrt{d}} }\Av \cdot (\av \bv) \Big]}\bigg)^d
\end{equation}
It is easy to see that if $T(\av)=0$, then
\begin{align*}
    H_d^{(m)}(\av) = \bigg(\sum_{\bv:~ T(\bv) =0} \Bar{f}^I(\bv) H_d^{(m-1)}(\bv) \exp{\Big[ {\textstyle -\frac{i}{\sqrt{d}} }\Av \cdot (\av \bv) \Big]}\bigg)^d = \Big[\sum_{\bv: T(\bv)=0} \Bar{f}^I(\bv)\Big]^d = 1
\end{align*}
since $H_D^{(m-1)}(\bv)=1$ when $T(\bv)=0$ by inductive assumption, and $\Av\cdot(\av\bv)=0$ when $T(\av)=T(\bv)=0$.

Furthermore, when $T(\av)>0$, we have
\begin{align*}
    H_d^{(m)}(\av') &= \bigg(\sum_{\bv:~ T(\bv) \le T(\av')} \Bar{f}^I(\bv) H_d^{(m-1)}(\bv) \exp{\Big[ {\textstyle -\frac{i}{\sqrt{d}} }\Av \cdot (\av' \bv) \Big]}\bigg)^d \\
    &= \bigg(\sum_{\bv:~ T(\bv) \le T(\av)} \Bar{f}^I(\bv) H_d^{(m-1)}(\bv) \exp{\Big[ {\textstyle -\frac{i}{\sqrt{d}} }\Av \cdot (\av \bv) \Big]}\bigg)^d  = H_d^{(m)}(\av)
\end{align*}
where we used $\Av\cdot(\av'\bv)=\Av\cdot(\av\bv)$ when $T(\av)\ge T(\bv)$ by switching the roles of $\av$ and $\bv$ in \eqref{eq:Aabprime-ident}.
\end{proof}
\vspace{-1em}
\begin{lemma} \label{lem:fH=1}
\begin{equation}
    \sum_{\bv\in B} \Bar{f}^I(\bv) H_d^{(m)}(\bv) = \sum_{\bv\in B_0} \Bar{f}^I(\bv) H_d^{(m)}(\bv) = 1
\end{equation}
\end{lemma}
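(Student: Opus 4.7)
The plan is to combine the antisymmetry of $\bar f^I$ under $\av\mapsto\av'$ (Lemma \ref{lem:prime_op}) with the invariance of $H_d^{(m)}$ under the same map (Lemma \ref{lem:H-D-prop}) to show that all contributions from $\av\notin B_0$ cancel, leaving only the easy $B_0$ sum. The latter will then be evaluated using Lemma \ref{lem:sum_f_1}.

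First I would verify that $\av\mapsto\av'$ is a well-defined involution on $B\setminus B_0$. By definition, $\av\notin B_0$ iff $T(\av)\ge 1$, and the construction only flips coordinates strictly above $T(\av)$, so $T(\av')=T(\av)$ and $(\av')'=\av$. The fixed points of this involution are precisely the strings with $T(\av)=p+1$, i.e.\ those for which $a_{p+1}\neq a_{-(p+1)}$. For such $\av$, the factor $\langle a_{p+1}|I|a_{-(p+1)}\rangle$ appearing in $f_{\vect m_v,\vect n_v}^I(\av)$ vanishes, so $\bar f^I(\av)=0$ and these terms contribute nothing.

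Next I would partition $B=B_0\sqcup B_{\text{mid}}\sqcup B_{\text{top}}$ with $B_{\text{mid}}=\{\av:1\le T(\av)\le p\}$ and $B_{\text{top}}=\{\av:T(\av)=p+1\}$, and decompose the sum accordingly. The $B_{\text{top}}$ piece vanishes as just noted. On $B_{\text{mid}}$, the involution has no fixed points, so terms pair up $\{\av,\av'\}$; by Lemma \ref{lem:prime_op} we have $\bar f^I(\av')=-\bar f^I(\av)$, and by Lemma \ref{lem:H-D-prop} we have $H_d^{(m)}(\av')=H_d^{(m)}(\av)$ (since $T(\av)\ge 1$). Therefore each paired contribution cancels, giving
\begin{equation*}
\sum_{\av\in B}\bar f^I(\av)H_d^{(m)}(\av)=\sum_{\av\in B_0}\bar f^I(\av)H_d^{(m)}(\av).
\end{equation*}
On $B_0$, we have $T(\av)=0$, so Lemma \ref{lem:H-D-prop} gives $H_d^{(m)}(\av)=1$, and the right-hand side collapses to $\sum_{\av\in B_0}\bar f^I(\av)$. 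To finish, I would apply exactly the same pairing argument with $H_d^{(m)}$ replaced by the constant $1$ (which trivially satisfies the needed invariance) to conclude $\sum_{\av\in B_0}\bar f^I(\av)=\sum_{\av\in B}\bar f^I(\av)$, which equals $1$ by Lemma \ref{lem:sum_f_1}.

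The one subtlety I expect to need care with is the bookkeeping of the fixed-point set of $\av\mapsto\av'$: making sure I correctly identify $T(\av)=p+1$ as the only fixed-point stratum in $B\setminus B_0$, and that on this stratum $\bar f^I$ already vanishes from the boundary matrix element $\langle a_{p+1}|I|a_{-(p+1)}\rangle$ rather than from a cancellation. Beyond that, the argument is a direct involution-pairing computation and should be routine.
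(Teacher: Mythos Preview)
Your proposal is correct and follows essentially the same involution-pairing argument as the paper: use $\bar f^I(\av')=-\bar f^I(\av)$ together with $H_d^{(m)}(\av')=H_d^{(m)}(\av)$ to cancel everything outside $B_0$, then evaluate the $B_0$ sum via $H_d^{(m)}|_{B_0}=1$ and Lemma~\ref{lem:sum_f_1}. Your extra care in isolating the fixed-point stratum $T(\av)=p+1$ and noting that $\bar f^I$ vanishes there is a welcome clarification that the paper handles only implicitly.
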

\begin{proof}
    We have
    \begin{align}
        \sum_{\av\in B} \Bar{f}^I(\av)H_d^{(m)}(\av) &= \sum_{\av\in B_0} \Bar{f}^I(\av)H_d^{(m)}(\av) + \frac{1}{2}\sum_{\av\notin B_0} \Big[\Bar{f}^I(\av)H_d^{(m)}(\av) + \Bar{f}^I(\av')H_d^{(m)}(\av')\Big]\\
        &= \sum_{\av\in B_0} \Bar{f}^I(\av) +\frac{1}{2}\sum_{\av\notin B_0} \Big[\Bar{f}^I(\av)H_d^{(m)}(\av) - \Bar{f}^I(\av)H_d^{(m)}(\av)\Big] \\
        &= \sum_{\av\in B_0} \Bar{f}^I(\av) = 1
    \end{align}
    where in the second line we use Lemma \ref{lem:H-D-prop} and in the final step we combine Lemma \ref{lem:sum_f_1} and Lemma \ref{lem:prime_op}.
\end{proof}
\subsection{Parameter-Dependent Properties of Iterative Formula Functions}
Next, we note special properties of these functions that appear only at particular values of the QAOA parameters. Here, we utilize the bar notation to differentiate between averaged and unaveraged $f$ functions.
\begin{lemma}\label{lemma:f_symm}
Take the HamQAOA ansatz at any depth $p \in \mathbb{N}$ with all $\gamma$ parameters fixed to 0, and with the initial state $\ket{\mv} = \ket{\sv}$ chosen to be a product state only over $X$-eigenstates as in Algorithm \ref{HamQAOA}. Then for the function $\Bar{f}$ in Formula \ref{alg:fin_iter}, 
\begin{align}
    &\Bar{f}^I(\vect{z}) = \Bar{f}^I(-\vect{z})\\
    &\Bar{f}^X(\vect{z}) = {\Bar{f}}^X(-\vect{z})\\
    &{\Bar{f}}^Y(\vect{z}) = -{\Bar{f}}^Y(-\vect{z})\\
    &{\Bar{f}}^Z(\vect{z}) = -{\Bar{f}}^Z(-\vect{z})
\end{align}
where $\vect{z}\in\{\pm 1\}^{2p+2}$.
\end{lemma}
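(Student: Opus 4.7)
The strategy is to relate $\bar{f}^{\sigma}(-\zv)$ to $\bar{f}^{\sigma}(\zv)$ by absorbing the sign flip into the matrix elements through the computational-basis identity $\ket{-z} = X\ket{z}$, or equivalently $\bra{-z} = \bra{z}X$, applied at every $Z$-basis label in the defining product \eqref{fin_iter_f_func}. After this absorption, $f^{\sigma}_{\mv_v,\nv_v}(-\zv)$ takes exactly the same form as $f^{\sigma}_{\mv_v,\nv_v}(\zv)$, except that each layer unitary $E_j$ is replaced by $X E_j X$, each endpoint overlap $\braket{\mv_v}{z_1}$ and $\braket{z_{-1}}{\mv_v}$ acquires an $X$ adjacent to $\ket{\mv_v}$, and the central Pauli factor $\bra{z_{p+1}}\sigma\ket{z_{-(p+1)}}$ becomes $\bra{z_{p+1}}X\sigma X\ket{z_{-(p+1)}}$.

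The next step is to evaluate these $X$-conjugations under the hypotheses of the lemma. With $\gamma_j=0$ and $\nv_v = s_v \hat{\xv}$ from Algorithm~\ref{HamQAOA}, the single-qubit part of each layer reduces to $E_j^{(v)} = e^{i\beta_j X} e^{i s_v \delta_j X}$, an $X$-rotation that commutes with $X$; hence $X E_j^{(v)} X = E_j^{(v)}$ and every interior factor of $f$ is unchanged by $\zv \to -\zv$. Similarly, because $\ket{\mv_v} = \ket{s_v^X}$ is an $X$-eigenstate with eigenvalue $s_v \in \{\pm 1\}$, the endpoint factor transforms as $\bra{\mv_v}X\ket{z_1}\bra{z_{-1}}X\ket{\mv_v} = s_v^2\braket{\mv_v}{z_1}\braket{z_{-1}}{\mv_v} = \braket{\mv_v}{z_1}\braket{z_{-1}}{\mv_v}$, which is also invariant.

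The only remaining factor is the central Pauli matrix element. Direct computation gives $XIX = I$, $XXX = X$, $XYX = -Y$, and $XZX = -Z$, so $f^{\sigma}_{\mv_v,\nv_v}(-\zv) = +f^{\sigma}_{\mv_v,\nv_v}(\zv)$ for $\sigma \in \{I,X\}$ and $f^{\sigma}_{\mv_v,\nv_v}(-\zv) = -f^{\sigma}_{\mv_v,\nv_v}(\zv)$ for $\sigma \in \{Y,Z\}$. Averaging over $s_v \in \{\pm 1\}$ (equivalently, over $\mv_v = \nv_v$ supported on $\pm \hat{\xv}$) preserves these signs and yields the claimed identities for $\bar{f}^{\sigma}$.

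No step is a genuine obstacle; the only thing to be careful about is bookkeeping, namely applying $\ket{-z} = X\ket{z}$ simultaneously to the bra and the ket at every shared index $z_j$ so that each inner unitary picks up a matched pair of $X$'s on both sides rather than an unbalanced insertion. Once this is done, the symmetries of the lemma follow immediately from the three elementary observations above.
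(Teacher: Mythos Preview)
Your proposal is correct and follows essentially the same approach as the paper's proof: both arguments insert $X$ at each computational-basis label via $\ket{-z}=X\ket{z}$, use that with $\gamma_j=0$ every single-qubit layer $E_j$ is an $X$-rotation satisfying $XE_jX=E_j$, note that the $X$-eigenstate endpoints absorb the extra $X$'s up to $s_v^2=1$, and read off the sign from $X\sigma X$. The paper states this more tersely (``insert $X^2=I$ and act on the $z$-basis states''), while you spell out each conjugation explicitly, but the mechanism and the logical structure are identical.
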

\begin{proof}
    \normalfont{}
    By definition, we have
    \begin{equation}
        \Bar{f}^I(\vect{z}) = \mathbb{E}_{s, \nv_v}[f_{s,\nv_v}(\vect{z})\delta_{z^{(p+1)}, z^{-(p+1)}}]
    \end{equation}
    Since all $\gamma = 0$, we have
    \begin{align}
        f_{s,\nv_v}(\vect{z}) = \braket{s}{z_1}&\bra{z_1}e^{i\beta_1B}e^{i\delta_1D}\ket{z_2}...\bra{z_{p-1}}e^{i\beta_{p-1}B}e^{i\delta_{p-1}D}\ket{z_p}     \bra{z_p}e^{i\beta_{p}B}e^{i\delta_{p}D}\ket{z_{p+1}}\\
    &\bra{z_{p+1}}\sigma\ket{z_{-(p+1)}}\bra{z_{-(p+1)}}e^{i\beta_{-p}B}e^{i\delta_{-p}D}\ket{z_{-p}}...\bra{z_{-2}}e^{i\beta_{-1}B}e^{i\delta_{-1}D}\ket{z_{-1}}\braket{z_{-1}}{s}
    \end{align}
    Noting that $\ket{s}$ is an $X$-basis state, we have
    \begin{equation}
        \braket{s}{z_1}\braket{z_{-1}}{s} = \braket{s}{-z_1}\braket{-z_{-1}}{s}
    \end{equation}
    For the remaining terms containing drivers, we have
    \begin{equation}
    \bra{z_a}e^{i\beta_{a}B}e^{i\delta_{a}D}\ket{z_b} =  \bra{-z_a}e^{i\beta_{a}B}e^{i\delta_{a}D}\ket{-z_b}
    \end{equation}
    Both of the above two relations can be verified by simply inserting $X^2 = I$ into the appropriate locations and acting on the z-basis states with $X$ to flip signs and pull out coefficients. Applying these relations to all terms in $f_s$, we have
    \begin{equation}
    \label{eq:fs_sym}
        f_{s, \nv_v}(\vect{z}) = f_{s,\nv_v}(-\vect{z}) 
    \end{equation}
    since any $z_1^2 = 1$. Then
    \begin{equation}
         \Bar{f}^I(\vect{z}) = \mathbb{E}_{s,\nv_v}[f_{s,\nv_v}(-\vect{z})\delta_{z^{(p+1)}, z^{-(p+1)}}] = \Bar{f}^I(-\vect{z})
    \end{equation}
    The remaining equalities follow directly from equation \eqref{eq:fs_sym}.
    \begin{align}
        &\Bar{f}^X(\vect{z}) = \mathbb{E}_{s,\nv_v}[f_{s,\nv_v}(-\vect{z})\langle{z^{(p+1)}}|X|{z^{-(p+1)}}\rangle] = \mathbb{E}[f_{s,\nv_v}(-\vect{z})\delta_{z^{(p+1)}, -z^{-(p+1)}}] = \Bar{f}^X(-\zv) \\ 
        &\Bar{f}^Y(\vect{z}) = \mathbb{E}_{s,\nv_v}[f_{s,\nv_v}(-\vect{z})\langle{z^{(p+1)}}|Y|{z^{-(p+1)}}\rangle] = -\mathbb{E}[f_{s,\nv_v}(-\vect{z})\langle{-z^{(p+1)}}|Y|{-z^{-(p+1)}}\rangle] = -\Bar{f}^Y(-\zv)\\
        &\Bar{f}^Z(\vect{z}) = \mathbb{E}_{s,\nv_v}[f_{s,\nv_v}(-\vect{z})\langle{z^{(p+1)}}|Z|{z^{-(p+1)}}\rangle] = -\mathbb{E}[f_{s,\nv_v}(-\vect{z})\langle{-z^{(p+1)}}|Z|{-z^{-(p+1)}}\rangle] = -\Bar{f}^Z(-\zv)
    \end{align}
\end{proof}
\noindent We can use this property to derive a symmetry relation for $H_D$.
\begin{corollary}
    Take the HamQAOA ansatz at depth $p\in\mathbb{N}$, with all $\gamma$ parameters fixed to 0. Consider Algorithm \ref{alg:fin_iter} for a regular graph of degree $d+1 \in \mathbb{N}$ and girth $\geq 2p+2$. Then $\forall k$ such that $0\leq k \leq p$ and all $\vect{z}\in\{\pm 1\}^{2p+2}$, $H_d^{(k)}(\vect{z}) = H_d^{(k)}(-\vect{z})$.
\end{corollary}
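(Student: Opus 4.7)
The plan is to prove the corollary by induction on $k$, leveraging the symmetry $\bar{f}^I(\vect{z}) = \bar{f}^I(-\vect{z})$ established in Lemma \ref{lemma:f_symm} (which holds precisely under the $\gamma = 0$ restriction). The base case is immediate from the definition $H_d^{(0)}(\vect{z}) = 1$, which is trivially even in $\vect{z}$.

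For the inductive step, suppose $H_d^{(k-1)}(\vect{x}) = H_d^{(k-1)}(-\vect{x})$ for all $\vect{x}\in\{\pm1\}^{2p+2}$. Starting from the definition,
\begin{equation*}
    H_d^{(k)}(-\vect{z}) = \left( \sum_{\substack{\vect{x} \\ x^{(p+1)}=x^{-(p+1)}}} \exp\!\big( -i \vect{\mathcal{A}} \cdot (\vect{x} (-\vect{z})) \big)\, H_d^{(k-1)}(\vect{x})\, \bar{f}^I(\vect{x}) \right)^d,
\end{equation*}
I would perform the change of summation variable $\vect{x} \to -\vect{x}$. The constraint $x^{(p+1)} = x^{-(p+1)}$ is preserved since negating both sides gives an equivalent condition, so the summation range is unchanged. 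This substitution yields
\begin{equation*}
    H_d^{(k)}(-\vect{z}) = \left( \sum_{\substack{\vect{x} \\ x^{(p+1)}=x^{-(p+1)}}} \exp\!\big( -i \vect{\mathcal{A}} \cdot (\vect{x} \vect{z}) \big)\, H_d^{(k-1)}(-\vect{x})\, \bar{f}^I(-\vect{x}) \right)^d,
\end{equation*}
using the elementwise-product identity $(-\vect{x})(-\vect{z}) = \vect{x}\vect{z}$. Applying the inductive hypothesis to replace $H_d^{(k-1)}(-\vect{x})$ with $H_d^{(k-1)}(\vect{x})$, and Lemma \ref{lemma:f_symm} to replace $\bar{f}^I(-\vect{x})$ with $\bar{f}^I(\vect{x})$, the expression collapses to the defining formula for $H_d^{(k)}(\vect{z})$, completing the induction.

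There is no real obstacle here; the result is essentially an immediate consequence of the $\vect{x}\mapsto-\vect{x}$ symmetry of $\bar{f}^I$ together with the manifest $\vect{z}\mapsto -\vect{z}$ symmetry of the phase $\exp(-i\vect{\mathcal{A}}\cdot(\vect{x}\vect{z}))$ under a simultaneous negation of the dummy variable $\vect{x}$. The only care needed is to verify that the summation constraint $x^{(p+1)} = x^{-(p+1)}$ is invariant under sign flipping of $\vect{x}$, which it obviously is.
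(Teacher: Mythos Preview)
Your proof is correct and follows essentially the same approach as the paper: induction on $k$, with the inductive step powered by Lemma~\ref{lemma:f_symm} (evenness of $\bar{f}^I$) and the inductive hypothesis. The only cosmetic difference is that the paper symmetrizes the sum to rewrite $H_d^{(k)}(\vect{z})$ with $\cos(\vect{\mathcal{A}}\cdot(\vect{x}\vect{z}))$ in place of the exponential and then invokes the evenness of cosine, whereas you perform the equivalent change of variable $\vect{x}\to -\vect{x}$ directly; the paper's cosine form is reused later in deriving the infinite-$d$ limit, but for this corollary alone your substitution is equally clean.
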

\begin{proof}
    \normalfont{}
    We proceed by induction on $k$. For $k = 0$ the equality is trivial as $H_D^{(1)}(\vect{z}) = 1$ for any $\vect{z}$. 

    For $m>1$, we have
    \begin{align}
        H_d^{(k)}(\vect{z}) &= \left( \sum_{\substack{\vect{x} \\ x^{(p+1)}=x^{-(p+1)}}} \exp( -i \vect{\mathcal{A}} \cdot (\vect{x} \vect{z})) H_d^{(k-1)}(\vect{x}) \bar{f}^I(\vect{x}) \right)^d \\ 
        &=\left(\frac{1}{2} \sum_{\substack{\vect{x} \\ x^{(p+1)}=x^{-(p+1)}}} \exp( -i \vect{\mathcal{A}} \cdot (\vect{x} \vect{z})) H_d^{(k-1)}(\vect{x}) \bar{f}^I(\vect{x})  + \exp(i \vect{\mathcal{A}} \cdot (\vect{x} \vect{z})) H_d^{(k-1)}(-\vect{x}) \bar{f}^I(-\vect{x})\right)^d \\
        &= \left( \sum_{\substack{\vect{x} \\ x^{(p+1)}=x^{-(p+1)}}} \cos( \vect{\mathcal{A}} \cdot (\vect{x} \vect{z})) H_d^{(k-1)}(\vect{x}) \bar{f}^I(\vect{x}) \right)^d\label{eq:cos_Hd}
    \end{align}
    In the second line we use Lemma \ref{lemma:f_symm} and the inductive hypothesis of the evenness of $H_d^{(k-1)}$. Since $\vect{z}$ appears only inside the cosine, which is even, the resulting function must be even in $\vect{z}$. Hence $H_d^{(k)}$ is even.
\end{proof}
To derive a formula in the limit of infinite degree, we need to establish that the functions dependent on $d$ indeed exist and are finite in this limit.
\begin{lemma}
    Assume $\mathcal{A} = \Theta(1/\sqrt{d})$. Then the function $H^{(k)}\equiv \lim_{d\rightarrow\infty} H_d^{(k)}$ exists and is finite $\forall k$ with $0\leq k \leq p$ when all parameters $\gamma$ are fixed to 0.
\end{lemma}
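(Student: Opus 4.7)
The plan is to proceed by induction on $k$. The base case $k=0$ is immediate since $H_d^{(0)}(\vect{z})\equiv 1$, so the limit trivially exists and equals $1$.

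For the inductive step, I would start from the cosine form~\eqref{eq:cos_Hd}, which is valid precisely because $\gamma=0$. Writing $\tilde{\vect{\mathcal{A}}}:=\sqrt{d}\,\vect{\mathcal{A}}$ (which is $\Theta(1)$ by hypothesis), Taylor-expand
\begin{equation*}
\cos\!\big(\vect{\mathcal{A}}\cdot(\vect{x}\vect{z})\big) = 1 - \frac{1}{2d}\big(\tilde{\vect{\mathcal{A}}}\cdot(\vect{x}\vect{z})\big)^2 + O(1/d^2)
\end{equation*}
uniformly in $\vect{x},\vect{z}$, which range over a $d$-independent finite set. The constant term in the inner sum of~\eqref{eq:cos_Hd} contributes exactly $1$ by Lemma~\ref{lem:fH=1}, because the support restriction $x_{p+1}=x_{-(p+1)}$ matches the support of $\bar{f}^I$. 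The quadratic term contributes $-g_d^{(k)}(\vect{z})/d$, where
\begin{equation*}
g_d^{(k)}(\vect{z}) := \tfrac{1}{2}\sum_{\vect{x}:\, x_{p+1}=x_{-(p+1)}} \big(\tilde{\vect{\mathcal{A}}}\cdot(\vect{x}\vect{z})\big)^2 H_d^{(k-1)}(\vect{x})\,\bar{f}^I(\vect{x})
\end{equation*}
is a finite sum over a $d$-independent bitstring set. By the inductive hypothesis $H_d^{(k-1)}\to H^{(k-1)}$ pointwise, so $g_d^{(k)}(\vect{z}) \to g^{(k)}(\vect{z})$, a finite quantity.

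Combining these pieces, the inner sum of~\eqref{eq:cos_Hd} equals $1 - g_d^{(k)}(\vect{z})/d + O(1/d^2)$, and the standard limit $(1+c/d+O(1/d^2))^d\to e^c$ yields $H^{(k)}(\vect{z})=\exp(-g^{(k)}(\vect{z}))$, which is manifestly finite. This closes the induction and, as a bonus, exposes the exponential structure that ultimately produces the exponentials in Formula~\ref{alg:d_inf_iter}.

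The main obstacle I expect is controlling the $O(1/d^2)$ Taylor remainder uniformly, so that after $d$-fold exponentiation it contributes at most $O(1/d)$ and vanishes in the limit. To make this rigorous I would strengthen the inductive hypothesis to carry a uniform-in-$d$ bound $|H_d^{(k-1)}(\vect{x})|\le M_{k-1}$ for all sufficiently large $d$; such a bound propagates to step $k$ through the exponential form, since $g_d^{(k)}$ is then uniformly bounded in terms of $M_{k-1}$, $\|\tilde{\vect{\mathcal{A}}}\|_\infty$, and $\sum |\bar{f}^I|$. A subsidiary check is that the inner sum remains bounded away from zero for large $d$ so that a logarithm (and hence the $(1+c/d)^d$ asymptotic) is well-defined; this again follows from the same uniform bound.
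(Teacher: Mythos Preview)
Your proposal is correct and follows essentially the same route as the paper: induction on $k$, the cosine form~\eqref{eq:cos_Hd}, Taylor expansion with $\tilde{\vect{\mathcal{A}}}=\sqrt{d}\,\vect{\mathcal{A}}$, Lemma~\ref{lem:fH=1} to identify the constant term as $1$, and the $(1+c/d)^d\to e^c$ limit to obtain the exponential form~\eqref{eq:inf_d_H}. If anything, your discussion of the uniform-in-$d$ bound needed to justify the exponentiation step is more careful than the paper's own argument, which simply writes down the limit without addressing that point.
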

\begin{proof}
    \normalfont{}
    We again proceed by induction on $k$. The base case, $k = 0$, is trivial as the function is $1$ everywhere. For $k > 1$, we use equation \eqref{eq:cos_Hd}, but with $\vect{\Tilde{\mathcal{A}}} = \vect{\mathcal{A}}\sqrt{d}$:
    \begin{align}
        \lim_{d\rightarrow\infty} H_d^{(k)}(\vect{z}) &= \lim_{d\rightarrow\infty}\left( \sum_{\substack{\vect{x} \\ x^{(p+1)}=x^{-(p+1)}}} \cos\left( \frac{1}{\sqrt{d}}\vect{\Tilde{\mathcal{A}}} \cdot (\vect{x} \vect{z})\right) H_d^{(k-1)}(\vect{x}) \bar{f}^I(\vect{x}) \right)^d \\ 
        &=\lim_{d\rightarrow\infty}\left( \sum_{\substack{\vect{x} \\ x^{(p+1)}=x^{-(p+1)}}} \left(1 - \frac{1}{2d}(\vect{\Tilde{\mathcal{A}}} \cdot (\vect{x} \vect{z}))^2 + O(\frac{1}{d^2})\right) H_d^{(k-1)}(\vect{x}) \bar{f}^I(\vect{x}) \right)^d \\ 
        &= \exp\left(-\frac{1}{2}\sum_{\substack{\vect{x} \\ x^{(p+1)}=x^{-(p+1)}}} (\vect{\Tilde{\mathcal{A}}} \cdot (\vect{x} \vect{z}))^2 H^{(k-1)}(\vect{x}) \bar{f}^I(\vect{x})\right)
        \label{eq:inf_d_H}
    \end{align}
    In the last step we use Lemma \ref{lem:fH=1}. This is a well-defined function summed over finite variables, and by the inductive hypothesis, all terms are finite. Then the induction is complete.
\end{proof}
Further structure behind our parameters is needed for a sensible expression in the infinite-degree limit. The following assumption has been extensively numerically verified but not proven.
\begin{assumption}\label{lem:fx_H=0}
    When all $\gamma_i = 0$ and all $\beta_i = \frac{n\pi}{4}$, $n\in\mathbb{Z}$ for $i\in \{1, 2, \hdots p\}$:
    \begin{equation}
        \sum_{\av} \Bar{f}^X(\av)H_d^{(k)}(\av) = 0       \end{equation}
    for all $k \in \{1, 2, \hdots p\}$.
\end{assumption}
\vspace{-2em}
\subsection{Proof of $d\rightarrow\infty$ Formula}

\begin{proof}[Proof of Infinite-Degree Formula (Formula \ref{alg:d_inf_iter})]

Consider the HamQAOA with all parameters $\gamma_i = 0, \beta_i = n\pi/4$, for $n\in\mathbb{Z}$ and $i \in \{1, 2, \hdots p\}$. We now rearrange the sum as follows:
\begin{align}
    \nu_{p,d} = \frac{\sqrt{d}}{4}\sum_{\vect{z}_L, \vect{z}_R} \exp\left( -\frac{i}{\sqrt{d}} \Tilde{\vect{\mathcal{A}}} \cdot (\vect{z}_L \vect{z}_R)\right)& H_d^{(p)}(\vect{z}_L) H_d^{(p)}(\vect{z}_R) \Bar{f}^{\sigma_i}(\vect{z}_L) {\Bar{f}}^{\sigma_j}(\vect{z}_R) \\
    &+ \exp\left(\frac{i}{\sqrt{d}} \Tilde{\vect{\mathcal{A}}} \cdot (\vect{z}_L \vect{z}_R)\right) H_d^{(p)}(\vect{z}_L) H_d^{(p)}(-\vect{z}_R) \Bar{f}^{\sigma_i}(\vect{z}_L) {\Bar{f}}^{\sigma_j}(-\vect{z}_R)
\end{align}
Using Lemma \ref{lemma:f_symm}, we can consider cases for the value of $\sigma_i$ as follows.
\begin{enumerate}
    \item $\sigma_j = Y$ or $Z$:
    \begin{equation}
        \nu_{p,d} = \frac{i\sqrt{d}}{2}\sum_{\vect{z}_L, \vect{z}_R} \sin\left(\frac{\Tilde{\vect{\mathcal{A}}}}{\sqrt{d}} \cdot (\vect{z}_L \vect{z}_R)\right) H_d^{(p)}(\vect{z}_L) H_d^{(p)}(\vect{z}_R) \Bar{f}^{\sigma_i}(\vect{z}_L) \Bar{f}^{\sigma_j}(\vect{z}_R)
    \end{equation}
    Taking the infinite-$d$ limit and Taylor expanding,
    \begin{align}
        \lim_{d\rightarrow\infty} \nu_{p,d} &= \lim_{d\rightarrow\infty} \frac{i\sqrt{d}}{2}\sum_{\vect{z}_L, \vect{z}_R}\frac{\vect{\Tilde{\mathcal{A}}}}{\sqrt{d}}(\vect{z}_L \vect{z}_R)H_d^{(p)}(\vect{z}_L) H_d^{(p)}(\vect{z}_R) \Bar{f}^{\sigma_i}(\vect{z}_L) {\Bar{f}}^{\sigma_j}(\vect{z}_R) + O\left(\frac{1}{d}\right)\\
        &= \frac{i}{2}\sum_{\vect{z}_L, \vect{z}_R}\Tilde{\vect{\mathcal{A}}}(\vect{z}_L \vect{z}_R)H^{(p)}(\vect{z}_L) H^{(p)}(\vect{z}_R) \Bar{f}^{\sigma_i}(\vect{z}_L) {\Bar{f}}^{\sigma_j}(\vect{z}_R) \\ 
        &= \frac{i}{2}\sum_{j = -p}^{p}\Tilde{\alpha}_j\left(\sum_{\zv_L}\Bar{f}^{\sigma_i}(\zv_L)H^{(p)}(\zv_L)z_{L,j}\right)\left(\sum_{\zv_R}\Bar{f}^{\sigma_j}(\zv_R)H^{(p)}(\zv_R)z_{R,j}\right) \label{eq:vp_inf}
    \end{align}
    So far we have defined the $d\to\infty$ formula in terms of the limit $H^{(m)}(\av):= \lim_{d\to\infty} H_d^{(m)}(\av)$, which is still a $O(4^p)$-dimensional object.
We can further simplify by defining  $2p\times 2p$-dimensional matrices $G^{(m)}$ as follows:
\begin{equation}
    G^{(m)}_{j, k} = \sum_{\vect{z}} \bar{f}^{I}(\vect{z})H^{(m)}(\vect{z})z_j z_k
\end{equation}
To analyze this, we rewrite equation \eqref{eq:inf_d_H} as follows:
\begin{equation}
    H^{(m)} = \exp\left(-\frac{1}{2}\sum_{j, k = -p}^p \tilde{\alpha_j}\tilde{\alpha_j}a_j a_k (\sum_b H^{m-1}(\vect{b})\bar{f}^I(\vect{b}) b_j b_k)\right)
    \label{eq:inf_h_expr}
\end{equation}
Substituting, we have 
\begin{align}
     G^{(m)}_{j, k} &= \sum_{\vect{z}}\bar{f}^{I}(\vect{z})z_j z_k \left[\exp\left(-\frac{1}{2}\sum_{j', k' = -p}^p \tilde{\alpha}_{j'}\tilde{\alpha}_{k'}z_{j'} z_{k'} (\sum_b H^{m-1}(\vect{b})\bar{f}^I(\vect{b}) b_j b_k)\right)\right] \nonumber\\
     &= \sum_{\vect{z}}\bar{f}^{I}(\vect{z})z_j z_k \left[\exp\left(-\frac{1}{2}\sum_{j', k' = -p}^p \tilde{\alpha}_{j'}\tilde{\alpha}_{k'}z_{j'} z_{k'}G_{j, k}^{(m-1)}\right)\right]
\end{align}
with
\begin{equation}
    G_{j, k}^{(p+1)} = \sum_{\vect{z}} \bar{f}^{I}(\vect{z})z_j z_k
\end{equation}
by definition. Then we may define a vector $K^{\sigma}$ as follows, and substitute \eqref{eq:inf_h_expr}:
\begin{align}
    K_i^{\sigma} &= \sum_{\vect{z}} \bar{f}^{\sigma}(\vect{z})H^{(p)}(\vect{z})z_i \nonumber\\
    &= \sum_{\vect{z}}\bar{f}^{\sigma}(\vect{z})z_i \left[\exp\left(-\frac{1}{2}\sum_{j', k' = -p}^p \tilde{\alpha}_{j'}\tilde{\alpha}_{k'}z_{j'} z_{k'} \left(\sum_b H^{(p-1)}(\vect{b})\bar{f}^I(\vect{b}) b_j b_k\right)\right)\right] \nonumber\\
    &= \sum_{\vect{z}}\bar{f}^{\sigma}(\vect{z})z_i \left[\exp\left(-\frac{1}{2}\sum_{j', k' = -p}^p \tilde{\alpha}_{j'}\tilde{\alpha}_{k'}z_{j'} z_{k'}G_{j, k}^{(p-1)}\right)\right]
\end{align}
Substituting into \eqref{eq:vp_inf}, we are left with
\begin{equation}
\lim_{d\rightarrow\infty}\nu_{p,d}(\paramv, \sigma_i, \sigma_j) = \frac{i}{2}\sum_{l=-p}^p \tilde{\alpha}_l(K_l^{\sigma_i})(K_l^{\sigma_j})
\end{equation}
or in the case where $\sigma_i = \sigma_j$ on different sites as for the Heisenberg model, our final answer of
\begin{equation}
\lim_{d\rightarrow\infty}\nu_{p,d}(\paramv, \sigma, \sigma) = \frac{i}{2}\sum_{l=-p}^p \tilde{\alpha}_l(K_l^{\sigma})^2
\end{equation}
\item $\sigma_j = X$:
Again using Lemma \ref{lemma:f_symm},
\begin{equation}
    \nu_{p,d} = -\frac{\sqrt{d}}{2}\sum_{\vect{z}_L, \vect{z}_R} \cos\left(\frac{\Tilde{\vect{\mathcal{A}}}}{\sqrt{d}} \cdot (\vect{z}_L \vect{z}_R)\right) H_d^{(p)}(\vect{z}_L) H_d^{(p)}(\vect{z}_R) f^{\sigma_i}(\vect{z}_L) {f}^{\sigma_j}(\vect{z}_R)
\end{equation}
Taking the infinite-$D$ limit,
\begin{align}
    \lim_{d\rightarrow\infty} \nu_{p,d} &= \lim_{d\rightarrow\infty} \frac{-\sqrt{d}}{2}\sum_{\vect{z}_L, \vect{z}_R}H_d^{(p)}(\vect{z}_L) H_d^{(p)}(\vect{z}_R) \Bar{f}^{\sigma_i}(\vect{z}_L) {\Bar{f}}^{\sigma_j}(\vect{z}_R) + O\left(\frac{1}{\sqrt{d}}\right)\\ 
    &= \lim_{d\rightarrow\infty} -\frac{\sqrt{d}}{2}\left(\sum_{\zv_L}\Bar{f}^{\sigma_i}(\zv_L)H^{(p)}(\zv_L)\right)\left(\sum_{\zv_R}\Bar{f}^{\sigma_j}(\zv_R)H^{(p)}(\zv_R)\right)\\
    &= 0
\end{align}
where in the final step we use Assumption \ref{lem:fx_H=0}.
\end{enumerate}
To compute the HamQAOA's expected performance across edge $ij$ in this limit, then, we need only compute $\lim_{d\rightarrow\infty}\nu_{p,d}(\paramv, Y_i, Y_j) \text{and} \lim_{d\rightarrow\infty}\nu_{p,d}(\paramv, Z_i, Z_j)$. Computing one of these terms requires computing $p-1$ matrices $G^{(m)}$, each of which has $(2p+2)\times (2p+2)$ entries. Each of these entries requires an outer sum over $2^{2p+1}$ bitstrings (since we can apply the restriction that $z^{(p+1)} = z^{-(p+1)}$ as with $H_d^{(m)}$), and an inner sum over $O(p^2)$ bits. The vector $k$ requires the same time complexity for each entry, but only has $O(p)$ entries. The time complexity of evaluating this formula is then $O(p^34^p)$, a significant speedup over the $O(16^p)$ in the finite-degree case.  
\end{proof}

\section{Proofs of Algorithmic Guarantees}\label{sec:proof_guarantees}
\begin{proof}[Proof of Theorem \ref{Thm:bipartite_guarantee}]
    It is known for the original 2-driver QAOA, whose two drivers (in appearing order) we call $H_{mix}$ and $H_{obj}$, that the asymptotic convergence guarantee holds when the time-dependent Hamiltonian
    \begin{equation}
        H(t) = tH_{mix} + (1-t)H_{obj}
    \end{equation}
    has a unique maximum eigenvalue for all $0\leq t\leq 1$ \cite{farhi2014quantum}. This is due to the alternating structure simulating a Trotter approximation of an adiabatic evolution, and the fact that in the large-p limit, this Trotter approximation converges to an exact adiabatic evolution, which the prepares the exact max-energy state of the objective Hamiltonian if the initial state is the max-energy state of the mixing Hamiltonian and if $H(t)$ has a gapped largest eigenvalue. Hence, we prove Theorem \ref{Thm:bipartite_guarantee} by first showing that the HamQAOA parameters can be chosen to simulate layers of the form
    \begin{equation}
        e^{-i\delta D}e^{-i \theta H_{\rm QMC}}
    \end{equation}
    for any real $\delta, \theta$ and then showing that the time-dependent Hamiltonian
    $H(t) = tD + (1-t)H_{\rm QMC}$ has a unique maximum eigenvalue for all $0\leq t\leq 1$ when the graph is bipartite.

    The multivariate Lie Product Formula immediately gives us
    \begin{equation}
        e^{-i\theta H_{\rm QMC}} = \lim_{p\rightarrow\infty} \left(e^{\frac{-i\theta}{p} \sum_{j\sim k} X_jX_k } e^{\frac{-i\theta}{p} \sum_{j\sim k} Y_jY_k } e^{\frac{-i\theta}{p} \sum_{j\sim k} Z_jZ_k }\right)^p
    \end{equation}
    Moreover, the following are true for any $\theta$:
    \begin{align}
        &e^{i\frac{\pi}{4} \sum_j X_j} e^{-i\theta\sum_{j
        \sim k} Z_jZ_k}e^{-i\frac{\pi}{4} \sum_j X_j} = e^{-i\theta\sum_{j
        \sim k} Y_jY_k} \\
        &e^{i\frac{\pi}{4} \sum_j Z_j} e^{-i\theta\sum_{j
        \sim k} X_jX_k}e^{-i\frac{\pi}{4} \sum_j Z_j} = e^{-i\theta\sum_{j
        \sim k} X_jX_k}
    \end{align}
Then in the limit of large $p$, the HamQAOA circuit can exactly generate $e^{-i\theta H_{\rm QMC}}$. It already includes $e^{-i\delta D}$ as a driver. Hence, in the infinite-depth limit, the HamQAOA can exactly simulate the 2-driver QAOA of any depth.

    Next we will show the maximum-eigenvalue gap of the time-dependent evolution Hamiltonian. First, note that the QMC Hamiltonian can be rewritten as 
    \begin{equation}
        H_{\rm QMC} = 2\sum_{i\sim j}\ketbra{\psi^-}_{ij}, \quad \ket{\psi^-}_{ij} = \frac{1}{\sqrt{2}}(\ket{0_i1_j}-\ket{1_i0_j})
    \end{equation}
    
    Since the graph $G = (V, E)$ is bipartite, its MaxCut coincides with the bipartition of its vertices such that no two vertices in each partition share an edge. Let the two sets of the partition be $V_1$ and $V_2$ with $V_1\cup V_2 = V$ and $V_1\cap V_2 = \emptyset$. Now consider the unitary 
    \begin{equation}
        U(V_1) = \bigotimes_{j\in V_1} Z_j\bigotimes_{k\in V_2} I_k
    \end{equation}
    WLOG we can let the first qubit index in each singlet term in the Hamiltonian be an element of $V_1$. Then 
    \begin{equation}
        U(V_1)\left(tD + (1-t)H_{\rm QMC}\right)U(V_1)^\dagger = t\sum_{j\in V} X_j + 2(1-t)\sum_{i\sim j}\ketbra{\psi^+}_{ij}
    \end{equation}
    where $\ket{\psi^+} = \frac{1}{\sqrt{2}}(\ket{01} +\ket{10})$. Since t is nonnegative and $\leq 1$, the above matrix has only nonnegative entries in the $Z$-basis. Hence, the Perron-Frobenius theorem implies that the gap between the largest and second largest eigenvalues of this matrix is positive for all $t<1$. Since rotation by a unitary does not change the spectrum of the matrix, this is true for the time dependent matrix $tD + (1-t)H_{\rm QMC}$.

    Then the HamQAOA simulates a gapped 2-driver QAOA with mixing Hamiltonian $D$ and objective Hamiltonian $H_{\rm QMC}$ such that the time-dependent linear combination of the two is always gapped; by the adiabatic theorem, the HamQAOA prepares the exact max-energy state of $H_{\rm QMC}$ in the infinite-depth limit.  This proves Theorem \ref{Thm:bipartite_guarantee}.
    
\end{proof}

\begin{proof}[Proof of Lemma \ref{lemma:contains_prev}.]
First we show that HamQAOA at depth 1 can simulate the AGM algorithm for any choice of parameter $\theta$. Let $\paramv = (\theta, 0, 0, \pi/8)$. Then
\begin{align}
    \ket{\paramv} &= e^{-i\frac{\pi}{8}D}e^{-i\theta A}\ket{\sv} \\ 
    &=e^{-i\frac{\pi}{8}D}e^{-i\theta A}e^{i\frac{\pi}{8}D}\ket{\vect{s}}
\end{align}
where we use the fact that $\ket{\vect{s}}$ is an eigenstate of $D$. Now note that 
\begin{equation}
    e^{-is\frac{\pi}{8}X}Ze^{is\frac{\pi}{8}X} = \begin{cases} 
  \frac{Z-Y}{\sqrt{2}} & s = +1 \\
  \frac{Z+Y}{\sqrt{2}} & s = -1 
\end{cases}
\end{equation}
Define 
\begin{equation}
    \tilde{Z} \coloneqq \frac{Z-Y}{\sqrt{2}} \qquad \tilde{Y} \coloneqq \frac{Z+Y}{\sqrt{2}} 
\end{equation}
Then let
\begin{equation}
    \tilde{H}_{ZY} \coloneqq e^{-i\frac{\pi}{8}D}Ae^{i\frac{\pi}{8}D} = \sum_{i\sim j}\tilde{P}(i)\tilde{P}(j),\qquad \tilde{P}(i) = \begin{cases} 
  \tilde{Z} & s_i = +1 \\
  \tilde{Y} & s_i = -1 
\end{cases}
\end{equation}
Let us also restate the AGM driver Hamiltonian
\begin{equation}
    H_{XY} \coloneqq \sum_{i\sim j}P(i)P(j),\qquad P(i) = \begin{cases} 
  Y & b_i = 0 \\
  X & b_i = 1 
\end{cases}
\end{equation}
where $\ket{b_i} \in \{\ket{0}, \ket{1}\}$ denote the computational basis eigenstates.

The AGM algorithm optimizes a real parameter $\theta$ to prepare the state 
\begin{equation}
    \ket{\theta'} = e^{-i\theta' H_{XY}} \ket{\vect{b}},\quad \theta' = \argmax_\theta \bra{\vect{b}}e^{i\theta H_{XY}}H_{\rm QMC}e^{-i\theta H_{XY}}\ket{\vect{b}}
\end{equation}
Since $(\Tilde{Y}, \Tilde{Z}, X)$ and $(X, Y, Z)$ are both orthonormal bases in the Bloch sphere, there is a sequence of SU(2) rotations mapping between them. That is, there exists some SU(2) rotation $R$ such that 
\begin{equation}
    R\left(e^{-i\theta' H_{XY}} \ket{\vect{b}}\right) = e^{-i\theta' \Tilde{H}_{ZY}}\ket{\vect{s}}
\end{equation}
Let this state be denoted by $\ket{\Tilde{\theta'}}$. Then, since $H_{\rm QMC}$ is invariant under conjugation by any uniform $SU(2)$ rotation of all sites, 
\begin{equation}
    \bra{\Tilde{\theta'}} H_{\rm QMC} \ket{\Tilde{\theta'}} = \bra{\Tilde{\theta'}} R^\dagger H_{\rm QMC} R \ket{\Tilde{\theta'}} = \bra{{\theta'}} H_{\rm QMC} \ket{{\theta'}}
\end{equation}
Notice that the state prepared by the HamQAOA of depth 1 with parameters $\paramv$ is exactly $\ket{\theta'}$. So regardless of the value of $\theta'$, the $p=1$ HamQAOA with the chosen $\paramv$ simulates the AGM algorithm on any graph.
To obtain the comparison to \cite{King_2023} on edge-transitive graphs, observe that the SDP in \cite{King_2023} will have the same value on all edges due to the symmetry. In this case, the rounding algorithm in \cite{King_2023} reduces to the AGM algorithm.
\end{proof}

\end{document}